\documentclass[a4paper,UKenglish,cleveref, autoref, thm-restate]{lipics-v2021}

\usepackage[T1]{fontenc}
\usepackage{graphicx}
\usepackage{amsmath}
\usepackage{amssymb}
\usepackage{amsfonts}
\usepackage{comment}
\usepackage{xspace}

\usepackage{thm-restate}

\usepackage{hyperref}
\usepackage{here}

\usepackage{cite}
\usepackage{enumerate}

\nolinenumbers

\hideLIPIcs

\usepackage[linesnumbered,noend,ruled,vlined]{algorithm2e}

\SetKwInput{KwInput}{Input}     
\SetKwInput{KwOutput}{Output}   
\SetKw{KwTo}{to}

\usepackage[framemethod=tikz]{mdframed}

\global\long\def\R{\mathbb{R}}%
\global\long\def\F{\mathbb{F}}%

\global\long\def\M{\mathcal{M}}%
\global\long\def\I{\mathcal{I}}%

\global\long\def\tO{\tilde{O}}%
\global\long\def\TO{\tilde{O}_{\varepsilon}}%

\global\long\def\bv{\boldsymbol{v}}%
\global\long\def\bw{\boldsymbol{w}}%
\global\long\def\bm{\boldsymbol{m}}%
\global\long\def\bb{\boldsymbol{b}}%
\global\long\def\bu{\boldsymbol{u}}%
\global\long\def\bx{\boldsymbol{x}}%
\global\long\def\by{\boldsymbol{y}}%
\global\long\def\bp{\boldsymbol{p}}%
\global\long\def\bq{\boldsymbol{q}}%
\global\long\def\ba{\boldsymbol{a}}%
\global\long\def\be{\boldsymbol{e}}%
\global\long\def\Wp{W^{\perp}}%

\global\long\def\eps{\varepsilon}%
\global\long\def\epsilon{\varepsilon}%

\global\long\def\nnz{\mathrm{nnz}}%

\usepackage{tabularx}
\usepackage{environ}

\usepackage{cite}
\usepackage{url}

\usepackage[normalem]{ulem}
\usepackage{color}

\newcommand{\ShowComment}{}

\ifdefined\ShowComment
\newcommand{\terao}[1]{{\bf \color{red} COMMENT: #1}}
\else
\newcommand{\terao}[1]{}
\fi

\title{Faster Approximate Linear Matroid Intersection}

\author{Tatsuya Terao}{Research Institute for Mathematical Sciences, Kyoto University \and \url{https://otera99.github.io/}}{ttatsuya@kurims.kyoto-u.ac.jp}{https://orcid.org/0000-0002-3530-2194}{}

\authorrunning{T.\ Terao}

\Copyright{Tatsuya Terao}

\keywords{Linear matroid intersection, fast approximation algorithm}

\date{}

\ccsdesc[500]{Theory of computation~Algorithm design techniques}
\ccsdesc[500]{Mathematics of computing~Matroids and greedoids}

\begin{document}


\acknowledgements{
We thank Yusuke Kobayashi for his insightful comments, generous support, and valuable feedback on the manuscript.
In particular, his input greatly helped us establish Lemma~\ref{lem:span_computation}.
We also thank Kou Hamada for helpful conversations that contributed to the initial motivation for this work.
We are grateful to Takashi Noguchi for his helpful comments, which simplified the proof of Lemma~\ref{lem:span_computation}. 
Finally, we thank the anonymous reviewers for their careful reading and valuable suggestions.
This work was partially supported by the joint project of Kyoto University and Toyota Motor Corporation, titled ``Advanced Mathematical Science for Mobility Society'' and by JSPS KAKENHI Grant Number JP24KJ1494.
}

\maketitle

\begin{abstract}

We consider a fast approximation algorithm for the linear matroid intersection problem. In this problem, we are given two $r \times n$ matrices $M_1$ and $M_2$, and the objective is to find a largest set of columns that are linearly independent in both $M_1$ and $M_2$. We design a $(1 - \varepsilon)$-approximation algorithm with time complexity $\tilde{O}_{\varepsilon}(\mathrm{nnz}(M_1) + \mathrm{nnz}(M_2) + r_{*}^{\omega})$, where $\mathrm{nnz}(M_i)$ denotes the number of nonzero entries in $M_i$ for $i = 1, 2$, $r_{*}$ denotes the maximum size of a common independent set, and $\omega < 2.372$ denotes the matrix multiplication exponent. Our approximation algorithm is faster than the exact algorithm by Harvey [FOCS'06 \& SICOMP'09] and Cheung--Kwok--Lau [STOC'12 \& JACM'13], which runs in $\tilde{O}(\mathrm{nnz}(M_1) + \mathrm{nnz}(M_2) + n r_{*}^{\omega - 1})$ time.

We also develop a fast $(1 - \varepsilon)$-approximation algorithm for the weighted version of the linear matroid intersection problem. In fact, we design a $(1 - \varepsilon)$-approximation algorithm for weighted linear matroid intersection with time complexity $\tilde{O}_{\varepsilon}(\mathrm{nnz}(M_1) + \mathrm{nnz}(M_2) + r_{*}^{\omega})$. Our algorithm improves upon the $(1 - \varepsilon)$-approximation algorithm by Huang--Kakimura--Kamiyama [SODA'16 \& Math. Program.'19], which runs in $\tilde{O}_{\varepsilon}(\mathrm{nnz}(M_1) + \mathrm{nnz}(M_2) + nr_{*}^{\omega - 1})$ time.

To obtain these results, we combine Quanrud's adaptive sparsification framework [ICALP'24] with a simple yet effective method for efficiently checking whether a given vector lies in the linear span of a subset of vectors, which is of independent interest.

\end{abstract}

\newpage

\section{Introduction} \label{sec:intro}

\subparagraph*{Fast Algorithms for Linear Matroid Problems}

Matroids are combinatorial structures that abstract linear independence and provide a unifying framework capturing fundamental properties of matrices and graphs.
We focus on fast algorithms for problems on linear matroids that are given explicitly as the columns of an $r \times n$ matrix $M$ over a field $\F$.

Motivated by the view that running time should reflect not only the matrix dimensions $r \times n$ but also its sparsity, we design faster algorithms whose complexity depends on $\mathrm{nnz}(M)$ (the number of nonzero entries of $M$) in addition to $r$ and $n$.
In fact, many commonly studied matroids admit sparse linear representations.
For example, it is well known that a graphic matroid can be represented by a vertex-edge incidence matrix with exactly two nonzero entries per edge, resulting in a total of $2m$ nonzero entries, where $m$ is the number of edges (see e.g., \cite[Section 5.1]{oxley2006matroid}).

From this perspective, an important contribution is due to Cheung--Kwok--Lau~\cite{cheung2013fast}, who developed a remarkable sketching technique and, as a result, presented a surprising algorithm for computing the rank of a matrix $M$ in $\tilde{O}(\mathrm{nnz}(M) + (\mathrm{rank}(M))^{\omega})$ time.\footnote{The $\tO$ notation omits polylog factors.}
Note that even checking whether a given subset of vectors is independent already requires $O((\mathrm{rank} (M))^{\omega})$ time, and reading the input takes $O(\mathrm{nnz}(M))$ time.

Nguy$\tilde{\hat{\text{e}}}$n~\cite{nguyen2019fast} presented an elegant algorithm for computing a maximum-weight base of a linear matroid represented by a matrix $M$. 
While a direct application of the sketching technique of Cheung--Kwok--Lau would only yield a running time of $\tilde{O}\!\big(\mathrm{nnz}(M)+n \cdot (\mathrm{rank}(M))^{\omega-1}\big)$, his refined use of this technique leads to $\big(\mathrm{nnz}(M) + (\mathrm{rank}(M))^{\omega}\big)^{1+o(1)}$.
This problem has also been studied, under the name column rank profile problem~\cite{dumas2013simultaneous, jeannerod2013rank, storjohann2014linear, dumas2017fast}, which asks for the lexicographically smallest sequence of $\mathrm{rank}(M)$ column indices such that the corresponding columns of the matrix are linearly independent. Dumas--Pernet--Sultan~\cite{dumas2017fast} achieved the same time complexity for the column rank profile problem.

\subparagraph*{Linear Matroid Intersection.} \label{subsec:intro_linear_matroid_intersection}

In this work, we consider fast approximation algorithms for linear matroid intersection.
This linear matroid intersection problem can be described without relying on matroid terminology as follows: Given two $r \times n$ matrices $M_1$ and $M_2$ over a field $\F$, where the columns in $M_1$ and $M_2$ are indexed by $\{1, \ldots, n \}$, the objective is to find a set $I \subseteq \{1, \ldots, n\}$ of maximum size such that the columns indexed by $I$ are linearly independent in both $M_1$ and $M_2$.
Many important problems in combinatorial optimization, such as bipartite matching, rainbow spanning tree, arborescence, and spanning tree packing, can be viewed as special cases of linear matroid intersection.

For the linear matroid intersection problem, Cunningham \cite{cunningham1986improved} gave a combinatorial algorithm with time complexity $\tilde{O}(nr^2)$.
Gabow--Xu \cite{gabow1989efficient, gabow1996efficient} proposed a faster combinatorial algorithm with time complexity $\tilde{O}(nrr_*^{1/(4-\omega)}) = O(nrr_*^{0.62})$, where $\omega$ is the exponent of matrix multiplication, known to satisfy $\omega < 2.371339$ \cite{alman2025more}.
Here, $r_*$ denotes the maximum size of a common independent set.
Harvey \cite{harvey2009algebraic} developed an algebraic algorithm with time complexity $O(nr^{\omega - 1})$.
Cheung--Kwok--Lau~\cite{cheung2013fast} combined their elegant sketching technique with Harvey’s algorithm~\cite{harvey2009algebraic}, obtaining an algorithm with running time $\tilde{O}(\mathrm{nnz}(M_1) + \mathrm{nnz}(M_2) + nr_*^{\omega - 1})$.
In fact, simply applying their sketching technique reduces an $r \times n$ input matrix to an $O(r_*) \times n$ matrix while preserving linear dependence among the columns.


Given the progress on the case of a single matroid \cite{cheung2013fast, dumas2017fast, nguyen2019fast}, it is natural to ask whether linear matroid intersection can be solved in $\tilde{O}(\mathrm{nnz}(M_1) + \mathrm{nnz}(M_2) + r_*^{\omega})$ time.
Note that even checking whether a given subset is independent in a single matroid already requires $O(r_*^{\omega})$ time, and reading the input takes $O(\mathrm{nnz}(M_1) + \mathrm{nnz}(M_2))$ time.
Avoiding the dependence on $n$ is most meaningful when $r_*\ll n$, a regime that arises naturally, for example, for graphic matroids.
However, achieving a running time of $\tilde{O}(\mathrm{nnz}(M_1) + \mathrm{nnz}(M_2) + r_*^{\omega})$ seems quite challenging.
To achieve such a running time, one must avoid explicitly computing the product of an $O(r_*) \times n$ matrix and an $n \times O(r_*)$ matrix, which would already take $O(n r_*^{\omega - 1})$ time.
Here, it is well known that $M_1$ and $M_2$ have a common base if and only if $\det (M_1 D M_2^\top) \neq 0$, where $D$ is a  diagonal matrix with variables on the diagonal $D_{i, i} = x_i$ (see for e.g., \cite[Lemma 2.5]{gurjar2020linear}).
Using this matrix formulation, a standard approach to the decision version of linear matroid intersection, i.e., deciding whether a common base exists, requires computing the product of an $O(r_*)\times n$ matrix and an $n\times O(r_*)$ matrix.
Hence, to our knowledge, no prior result has shown that the time complexity for merely estimating the size of a maximum common independent set can be faster than that of Cheung--Kwok--Lau.

\subparagraph*{Our Result for Maximum Cardinality Linear Matroid Intersection.} \label{subsec:intro_our_result_cardinality}



Our main result is to develop a $(1-\eps)$-approximation algorithm for linear matroid intersection running in $\tilde{O}_{\eps}(\mathrm{nnz}(M_1) + \mathrm{nnz}(M_2) + r_*^{\omega})$ time.\footnote{The $\tilde{O}_{\epsilon}$ notation omits polylog factors and polynomial factors in $\epsilon^{-1}$.}
We note that our algorithm is applicable over any field.

\begin{restatable}{theorem}{cardinalityalgo} \label{thm:cardinality_main_thm}
    For any $\eps > 0$, a $(1 - \eps)$-approximate solution to the maximum cardinality linear matroid intersection problem can be computed with high probability\footnote{We say ``with high probability'' if the failure probability is at most $n^{-\Theta(1)}$.} in $\tilde{O}_{\eps}(\mathrm{nnz}(M_1) + \mathrm{nnz}(M_2) + r_*^{\omega})$ time.
\end{restatable}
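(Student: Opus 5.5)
We would combine Quanrud's adaptive sparsification framework with a fast span-membership oracle. First, apply the Cheung--Kwok--Lau sketch to $M_1$ and $M_2$: multiplying each by a random $O(r_*)\times r$ matrix (after estimating $r_*$ up to a constant factor by doubling) preserves linear dependence among columns with high probability. This costs $\tilde{O}(\mathrm{nnz}(M_1)+\mathrm{nnz}(M_2)+n r_*)$ if done naively, so we must use their sparse sketching, where each column of $M_i$ is hashed into $O(\log n)$ rows with random coefficients. That keeps the cost at $\tilde{O}(\mathrm{nnz}(M_i))$, so from here on we may assume $r=O(r_*)$.

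Next we run Quanrud's framework. We maintain nonnegative weights on the $n$ elements and repeatedly sample a random subset $S$ of $\tilde{O}_\eps(r_*)$ elements proportionally to the weights. On the sampled instance, which has $O(r_*)\times \tilde{O}_\eps(r_*)$ matrices, we compute a common independent set exactly, or $(1-\eps)$-approximately, using Harvey's algorithm. This takes $\tilde{O}(|S| r_*^{\omega-1})=\tilde{O}_\eps(r_*^{\omega})$ time. The framework then requires us to identify elements that ``violate'' the current solution: elements $e\notin S$ such that $I+e$ could extend or augment, meaning $e$ is not spanned by the relevant subsets of $I$ in $M_1$ or $M_2$. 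We double the weights of these elements. The framework's potential argument guarantees two things. First, after $\tilde{O}_\eps(1)$ rounds the number of violating elements is small enough to add them to the sample. Second, the final sample contains a $(1-\eps)$-approximate common independent set. We would invoke this guarantee as a black box, provided that each round can be implemented in $\tilde{O}_\eps(\mathrm{nnz}+r_*^\omega)$ time.

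The main obstacle is the violation check. For every one of the $n$ elements $e$, and for a fixed family of $O_\eps(1)$ subsets $T\subseteq I$ with $|T|\le r_*$ (in each matroid, arising from the layered structure of an approximate augmenting-path solution), we must test whether column $v_e$ lies in $\mathrm{span}(T)$. Doing this naively costs $n r_*^{\omega-1}$. We plan to compute a basis of the left null space of the matrix $A_T$ restricted to $T$. Concretely, we find a matrix $Y$ with $Y^\top A_T=0$ whose columns span the complement dimension; this costs $O(r_*^{\omega})$ time. Then $v_e\in\mathrm{span}(T)$ if and only if $Y^\top v_e=0$. To avoid computing $Y^\top M$ in full, we compress $Y$ with a random vector $z$: set $y=Yz$ and test whether $y^\top v_e=0$. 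If $v_e\notin \mathrm{span}(T)$, then $Y^\top v_e\neq 0$, and by Schwartz--Zippel over a sufficiently large extension field (or with repetitions) we get $y^\top v_e\neq 0$ with high probability. Computing $y^\top M$ for all columns costs only $O(\mathrm{nnz}(M))$.

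Each round therefore costs $\tilde{O}_\eps(\mathrm{nnz}(M_1)+\mathrm{nnz}(M_2)+r_*^{\omega})$, and there are $\tilde{O}_\eps(1)$ rounds. A union bound over all span tests and sketching failures gives the high-probability guarantee, which proves Theorem~\ref{thm:cardinality_main_thm}. The delicate points are making sure the required violation tests are exactly span-membership queries against $O_\eps(1)$ sets per round, and handling small fields through extension-field arithmetic with only polylogarithmic overhead.
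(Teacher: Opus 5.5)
Your plan matches the paper's proof. The three ingredients are the same:
\begin{itemize}
\item Cheung--Kwok--Lau sketching with doubling on the target size, which gives $r=O(r_*)$.
\item Quanrud's adaptive sparsification, using Harvey's algorithm on samples of $\tilde O_\varepsilon(r_*)$ elements.
\item A span test that draws a random vector $\boldsymbol{y}$ from the orthogonal complement of the spanning set, then checks $\boldsymbol{y}^\top\boldsymbol{v}_e=0$ for all columns in $O(\mathrm{nnz})$ time. Schwartz--Zippel is applied to a degree-one polynomial, and over an arbitrary field correctness needs only $(W^\perp)^\perp=W$. This is exactly the paper's span-computation lemma.
\end{itemize}
Taking the spanning sets inside $I$ is a harmless variation. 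For an optimal dual $(\tilde S,\tilde T)$ of the sampled instance, tightness gives $\mathrm{span}_{\mathcal{M}_1}(\tilde S)=\mathrm{span}_{\mathcal{M}_1}(I\cap\tilde S)$, and similarly for $\tilde T$. So one set per matroid per round suffices, and dense elimination on an $O(r_*)\times r_*$ matrix replaces the paper's use of the CKL null-space routine.

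What you should fix is your paraphrase of the black box, because as written it is not what Quanrud's framework guarantees, and the first claim is false in general.
\begin{itemize}
\item The framework does not guarantee that the \emph{number} of uncovered elements eventually becomes small enough to add them to the sample. Nor does it guarantee that the final sample contains a near-optimal solution.
\item Its sampling lemma bounds only the total \emph{weight} of elements outside $\mathrm{span}_{\mathcal{M}_1}(\tilde S)\cup\mathrm{span}_{\mathcal{M}_2}(\tilde T)$. There can be many uncovered elements of tiny weight.
\item The actual algorithm halves the weights of covered elements (equivalently, doubles the uncovered ones) for $L=O(\log(n)/\varepsilon)$ rounds. It returns the largest $I^{(\ell)}$ seen, and the multiplicative-weights argument shows that this one is $(1-\varepsilon)$-approximate. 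No step of adding violators to the sample occurs or is needed.
\item The violation criterion is not ``$I+e$ could extend or augment.'' It is non-membership in those two spans, which requires the exact dual $(\tilde S,\tilde T)$ that Harvey's algorithm also returns. No layered augmenting-path structure or $O_\varepsilon(1)$ family of sets is involved.
\end{itemize}
With the black box stated correctly, your per-round cost accounting yields the theorem exactly as in the paper.
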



Our approximation algorithm in Theorem \ref{thm:cardinality_main_thm} is faster than the exact algorithm obtained by combining Harvey's algorithm~\cite{harvey2009algebraic} with the sketching technique of Cheung--Kwok--Lau~\cite{cheung2013fast}, which runs in $\tilde{O}(\mathrm{nnz}(M_1) + \mathrm{nnz}(M_2) + nr_*^{\omega - 1})$ time.
Even when $M_1$ and $M_2$ are dense and $r_* = \Theta(r)$, our approximation algorithm achieves a running time of $\tilde{O}_{\epsilon}(nr + r^{\omega})$, which is faster than the $O(nr^{\omega - 1})$ time required by Harvey's exact algorithm~\cite{harvey2009algebraic}.
Surprisingly, our complexity is almost the same as that of computing the rank of a matrix~\cite{cheung2013fast}, and also that of finding a maximum-weight base of a single linear matroid~\cite{dumas2017fast, nguyen2019fast}.



\subparagraph*{Weighted Linear Matroid Intersection} \label{subsec:weighted_linear_matroid_intersection}

We also consider the weighted version of the linear matroid intersection problem.
In this problem, we are given two $r \times n$ matrices $M_1$ and $M_2$ over a field $\F$, whose columns are indexed by $\{1, \ldots, n \}$, and a weight function $c : \{ 1, \ldots, n\} \to \mathbb{R}_{\geq 0}$.
The objective is to find a set $I \subseteq \{1, \ldots, n\}$ such that the columns indexed by $I$ are linearly independent in both $M_1$ and $M_2$, and the total weight $\sum_{e \in I} c(e)$ is maximized.
Gabow--Xu~\cite{gabow1989efficient, gabow1996efficient} gave a combinatorial algorithm with time complexity $\tilde{O}(nr^{\frac{7 - \omega}{5 - \omega}} \log(nC)) = \tilde{O}(nr^{1.77} \log(nC))$, where $C$ is the largest given weight.
Harvey~\cite{harvey2007algebraic} proposed an algebraic algorithm with time complexity $\tilde{O}(C^{1 + o(1)} nr^{\omega - 1})$.
Huang--Kakimura--Kamiyama~\cite{huang2016exact} presented a $(1 - \eps)$-approximation algorithm with time complexity $\tilde{O}_{\eps}(\mathrm{nnz}(M_1) + \mathrm{nnz}(M_2) + nr_{*}^{\omega - 1})$, where $r_{*}$ is the maximum size of a common independent set.

\subparagraph*{Our Result for Maximum Weight Linear Matroid Intersection.} \label{subsec:intro_our_result_weight}

Using a similar idea to that used to obtain Theorem \ref{thm:cardinality_main_thm}, we also develop a $(1-\eps)$-approximation algorithm for weighted linear matroid intersection running in $\tilde{O}_{\eps}(\mathrm{nnz}(M_1) + \mathrm{nnz}(M_2) + r_{*}^{\omega})$ time.
We note that our algorithm is applicable over any field.

\begin{restatable}{theorem}{weightedalgo}  \label{thm:weighted_main_thm}
    For any $\eps > 0$, a $(1 - \eps)$-approximate solution to the weighted linear matroid intersection problem can be computed with high probability in $\tilde{O}_{\eps}(\mathrm{nnz}(M_1) + \mathrm{nnz}(M_2) + r_{*}^{\omega})$ time.
\end{restatable}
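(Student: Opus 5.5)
We plan to reduce the weighted problem to the same machinery used for Theorem~\ref{thm:cardinality_main_thm}: Quanrud's adaptive sparsification framework, driven by the fast span-membership test (Lemma~\ref{lem:span_computation}). The template is the weight-scaling $(1-\eps)$-approximation scheme for weighted matroid intersection (in the style of Huang--Kakimura--Kamiyama or Chekuri--Quanrud). It runs in $O_\eps(\log)$ phases. In each phase, the only expensive operations on the full ground set are of two kinds: (i) deciding, for every element $e$ with current dual-adjusted weight in some bucket, whether $e$ is spanned in $M_1$ (resp.\ $M_2$) by a specific current set of $O(r_*)$ columns; and (ii) running a bounded-length augmenting-path search inside an auxiliary exchange graph. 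In Huang--Kakimura--Kamiyama, step (ii) is where the $n r_*^{\omega-1}$ term arises, since they build the exchange graph over all $n$ elements via matrix products.

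\textbf{Step 1 (preprocessing).} First apply the Cheung--Kwok--Lau sketch. This replaces $M_1, M_2$ by $O(r_*)\times n$ matrices in $\tilde O(\nnz(M_1)+\nnz(M_2))$ time and preserves linear dependencies whp. Then round the weights to powers of $(1+\eps)$ and discard elements of weight below $\eps c_{\max}/r_*$. This leaves $O(\eps^{-1}\log(r_*/\eps))$ weight classes at a $(1-O(\eps))$ loss.

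\textbf{Step 2 (phases).} Follow the auction/scaling scheme. Maintain a common independent set $I$ and prices (dual variables) $y_1,y_2$ taking values on the $(1+\eps)$-grid. In each phase, restrict attention to elements that are currently ``eligible'' at each price level, and find a near-maximal family of short augmenting paths of length $O(1/\eps)$ in the exchange graph restricted to eligible elements. The key point is that the dual structure lets us compute eligibility by span tests. An element is usable in matroid $i$ at a level iff it lies outside the span, in $M_i$, of the part of $I$ at or above that level. Each such test is a membership query ``is column $e$ in $\mathrm{span}(S)$'' for a set $|S|=O(r_*)$. Lemma~\ref{lem:span_computation} answers all such queries for one fixed $S$ in $\tilde O(\nnz + r_*^{\omega})$ time. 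With $O_\eps(\log n)$ distinct sets per phase and $O_\eps(\log n)$ phases, the total cost stays within the target bound.

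\textbf{Step 3 (sparsify the path search).} Mirroring the cardinality proof, we use Quanrud's adaptive sparsification instead of building the full exchange graph. We sample a subset $R$ of eligible elements of size $\tilde O_\eps(r_*)$, weighted by per-level counts. On $I\cup R$, which has size $\tilde O_\eps(r_*)$, we run the exact or dense $O(r_*^{\omega})$-style augmentation (Harvey-type algebraic computation) to extend $I$ within each weight class. We then use the span test to identify all elements not yet spanned, i.e.\ the ``violators''. Quanrud's framework guarantees that, after reweighting violators, $\tilde O_\eps(1)$ rounds suffice before the remaining eligible elements are all spanned appropriately. That condition certifies approximate complementary slackness for the phase.

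\textbf{Step 4 (analysis).} The main obstacle is verifying that the weighted primal-dual invariant survives sparsification. A phase must end with every unselected element approximately ``dominated'' by prices in one of the two matroids. Our sparse procedure certifies this only through span conditions checked against nested prefixes of $I$ sorted by weight. We would first try to express the dual $y_i$ as a nested chain of flats spanned by weight-prefixes of $I$, as in the greedy characterization of maximum-weight bases. With that representation, domination of $e$ is exactly a collection of span queries against $O_\eps(\log n)$ prefixes. The standard weak-duality argument then yields value at least $(1-O(\eps))\cdot\mathrm{OPT}$, and rescaling $\eps$ finishes the proof. The high-probability guarantee follows from a union bound over the sketch, the sampling, and the polynomially many span tests.
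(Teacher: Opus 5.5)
Your proposal has a genuine gap. It never fixes an algorithm whose correctness can be checked, and the step you yourself call ``the main obstacle'' (Step~4) is left open.

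\textbf{What fails in your steps.}
\begin{enumerate}
\item Step~2 posits a scaling/auction scheme with prices on a $(1+\eps)$-grid and phases that need a near-maximal family of short augmenting paths. You assert its structure but do not specify or analyze it.
\item Step~3 attributes to Quanrud's framework a guarantee it does not provide. Its MWU analysis only shows that the \emph{best} of the $O(\log(n)/\eps)$ sampled solutions is $(1-\eps)$-approximate, because the elements left uncovered by each extended dual carry little weight. It neither augments a given $I$ nor ensures that after $\tilde O_\eps(1)$ rounds every eligible element is spanned. Running Harvey's unweighted algorithm ``within each weight class'' also does not perform weighted augmentations.
\item The running-time claim rests on ``$O_\eps(\log n)$ distinct sets per phase'', which is asserted, not derived. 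For the duals an actual approximate weighted solver returns (a weight splitting $c_1,c_2$ as in Theorem~\ref{thm:hkk_weighted}), the compact dual can be a chain of as many as $|I|=\Theta(r_*)$ flats $\mathrm{span}_{\M_1}(\{i_1,\ldots,i_\alpha\})$. Computing a separate complement vector for each flat via Lemma~\ref{lem:span_computation} costs $r_*^{\omega+1}$, which is exactly the bottleneck the paper identifies.
\item Step~1 needs a constant-factor estimate of $r_*$ to choose the sketch size, and you do not supply one. The paper obtains it from a $2/3$-approximation via Theorem~\ref{thm:cardinality_main_thm}.
\end{enumerate}

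\textbf{What the paper does instead.} It does not redesign the weighted algorithm. It runs Quanrud's weighted framework (Algorithm~\ref{alg:quanrud_weighted}) unchanged. Each $\tilde O_\eps(r)$-size sample is solved by Huang--Kakimura--Kamiyama, and the weight splitting is turned into a compact $(1+\eps)$-approximate dual by Lemma~\ref{lem:extract_compact_sol}. The only full-ground-set work is Line~\ref{linew:update_weights}, handled by Lemma~\ref{lem:weighted_dynamic_fast_span_computation}. A single van den Brand-style orthogonalization of $\bm_{i_1},\ldots,\bm_{i_r},\be_1,\ldots,\be_r$, with respect to a random diagonal bilinear form $\bv^\top B\bw$, produces in $O(r^\omega)$ total time a random test vector $\bv_\alpha$ in the $B$-orthogonal complement of \emph{every} prefix span. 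Each column then finds its first containing prefix by binary search in $\tilde O(\nnz(\bm_e))$ time, and prefix sums of the dual values give its coverage.

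\textbf{How your ``few levels'' idea could be salvaged.} Its natural place is inside this same framework. Round $c_1,c_2$ up monotonically to a $(1+\eps)$-grid, keeping zeros at zero and raising positive values below $\eps c(I)/r$ to that threshold. A nondecreasing map fixing $0$ preserves the add/swap optimality conditions, so conditions (a), (c), (d) of Theorem~\ref{thm:hkk_weighted} survive. Condition (b) weakens only by a $1+O(\eps)$ factor. The dual chain then has $O(\eps^{-1}\log(r/\eps))$ flats, and Lemma~\ref{lem:span_computation} can be applied to each one. That argument, not an unanalyzed sparsified auction, is what you would need to write down.
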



Our algorithm in Theorem \ref{thm:weighted_main_thm} improves upon the $(1-\eps)$-approximation weighted linear matroid intersection algorithm by Huang--Kakimura--Kamiyama~\cite{huang2016exact}, which runs in $\tilde{O}_{\eps}(\mathrm{nnz}(M_1) + \mathrm{nnz}(M_2) + nr_{*}^{\omega - 1})$ time.

\subsection{Technical Overview} \label{subsec:intro_technical_overview}

\subparagraph*{Adaptive Sparsification Framework of Quanrud~\cite{quanrud:LIPIcs.ICALP.2024.118}.}

Our linear matroid intersection algorithm builds upon the adaptive sparsification framework of Quanrud~\cite{quanrud:LIPIcs.ICALP.2024.118},
which was developed to obtain a fast $(1-\eps)$-approximation algorithm for matroid intersection in the oracle model.
However, we emphasize that a direct application of Quanrud's framework does not yield the running time achieved by our algorithm.

In Quanrud's framework, the original problem over $n$ elements is reduced to $O(\log (n) / \eps)$ primal-dual instances of matroid intersection, each over $\tilde{O}_{\eps}(r_*)$ elements, where $r_*$ is the maximum size of a common independent set.
For linear matroid intersection, computing primal and dual solutions to sparsified instances can be done straightforwardly using Harvey's exact algorithm for linear matroid intersection~\cite{harvey2009algebraic}.
Applied to the sparsified instance of size $\tilde{O}_{\eps}(r_*)$, Harvey's algorithm runs in $\tilde{O}_{\eps}(r_*^{\omega})$ time, which is sufficiently fast.
To adaptively construct sparsified instances, it is necessary to compute $\mathrm{span}_{\M_1}(\tilde{S})$ and $\mathrm{span}_{\M_2}(\tilde{T})$, where $(\tilde{S}, \tilde{T})$ denotes a dual solution to the sparsified instance.
For linear matroids, a straightforward method can compute the spans in $O(nr_*^{\omega - 1})$ time, but this is still too slow for our purpose.

\subparagraph*{Challenge in Computing a Span.}

Without using matroid terminology, our task is as follows: Given an $r \times n$ matrix $M$ and a subset $S \subseteq [n]$, determine whether each column vector $\boldsymbol{m}_i$ lies in the linear span of the set $\{ \boldsymbol{m}_j  \mid j \in S\}$, where $\boldsymbol{m}_1, \ldots, \boldsymbol{m}_n$ denote the vectors corresponding to the columns of $M$.
We present a simple randomized algorithm that solves this task in $\tilde{O}(\mathrm{nnz}(M) + r^{\omega})$ time, which is of independent interest.

In the oracle model, where an algorithm accesses a matroid $\M$ through an independence oracle, the span 
\begin{equation*}
\mathrm{span}_{\M}(S) = \{ v \in V \mid {\rm rank}_{\M}(S \cup \{v\}) = {\rm rank}_{\M}(S) \}
\end{equation*}
of a set $S$ can be  computed naively using a linear number of queries.
However, when the matroid is given via a matrix representation, the standard approaches are too inefficient for our purpose.
Naively, each independence query takes $O(r^{\omega})$ time, and thus the most straightforward implementation requires $O(nr^{\omega})$ time.
The standard Gaussian elimination approach takes $O(nr^2)$ time to compute the span.
Using the fact that the linear representation of $\M / S$, the matroid obtained by contracting $\M$ by $S$, can be computed in $O(n r^{\omega - 1})$ time (see \cite[Fact 4.1]{harvey2009algebraic}), we can reduce the time complexity to $O(n r^{\omega - 1})$.
However, this complexity is still inefficient for obtaining a linear matroid intersection algorithm that outperforms the best known exact algorithm.
To compute the span faster than $O(n r^{\omega - 1})$ time, one must avoid computing the product of an $r \times n$ matrix and an $n \times r$ matrix, which is a challenging task.

To overcome this difficulty, our fast span computation algorithm shares a similar spirit with the famous Freivalds' algorithm \cite{freivalds1979fast} (see e.g., \cite[Section 7.1]{motwani2013randomized}) for verifying matrix multiplication without actually performing the multiplication.
In our approach, we sample a vector $\boldsymbol{v}$ from the orthogonal complement of $\mathrm{span} \{ \boldsymbol{m}_j  \mid j \in S\}$, which can be computed in $\tilde{O}(\mathrm{nnz}(M) + r^{\omega})$ time.
Then, for each column vector $\boldsymbol{m}_i$, we check whether $\boldsymbol{v}^\top \boldsymbol{m}_i = 0$.
It is obvious that if $\boldsymbol{m}_i \in \mathrm{span} \{ \boldsymbol{m}_j \mid j \in S \}$, then $\boldsymbol{v}^\top \boldsymbol{m}_i = 0$ always holds.  
Furthermore, we can show that if $\boldsymbol{m}_i \notin \mathrm{span} \{ \boldsymbol{m}_j \mid j \in S \}$, then $\boldsymbol{v}^\top \boldsymbol{m}_i \neq 0$ with high probability.
The total time required to compute all inner products $\boldsymbol{v}^\top \boldsymbol{m}_i$ for all $i \in [n]$ is $O(\mathrm{nnz}(M))$. 
Therefore, the total running time of our span computation algorithm is $\tilde{O}(\mathrm{nnz}(M) + r^{\omega})$, which is efficient enough to enable the design of a linear matroid intersection algorithm that outperforms the best known exact algorithm.

\subparagraph*{Challenge for Weighted Linear Matroid Intersection.}

For weighted linear matroid intersection, a naive combination of the adaptive sparsification framework of Quanrud~\cite{quanrud:LIPIcs.ICALP.2024.118} and our fast span computation algorithm yields a $(1 - \eps)$-approximation algorithm with time complexity $\tilde{O}_{\eps}(\mathrm{nnz}(M_1) + \mathrm{nnz}(M_2) + r_{*}^{\omega + 1})$.
This additional factor arises because, in the weighted setting, we need to test membership in $\operatorname{span}_{\mathcal{M}}(S)$ not for a single candidate set $S$, but for $O(r_*)$ different candidates.
To improve the running time to $\tilde{O}_{\varepsilon}(\mathrm{nnz}(M_1) + \mathrm{nnz}(M_2) + r_{*}^{\omega})$, we carefully employ the fast Gram--Schmidt orthonormalization algorithm of van den Brand~\cite{van2021unifying}.
Concretely, we use van den Brand's algorithm to compute, for each set in the support of the dual solution in the weighted setting, a basis for the corresponding subspace as well as a basis for its orthogonal complement.

When considering matrices over the real field $\R$, the idea of employing the fast Gram–Schmidt orthonormalization algorithm of van den Brand works correctly in conjunction with the techniques described above.
However, over an arbitrary field $\mathbb{F}$, the argument may fail.
This is because, when considering matrices over a field other than $\R$, it may happen that a linear subspace $W$ satisfies $W \cap \Wp \neq \{ \boldsymbol{0} \}$ and $W + \Wp \neq \F^r$,\footnote{For example, consider $W = \mathrm{span}\{(1, 1, 0, 0)^\top, (0, 0, 1, 1)^\top\}$ over the binary field $\mathrm{GF}(2)$. Then, $\Wp = \mathrm{span}\{(1, 1, 0, 0)^\top, (0, 0, 1, 1)^\top\}$. In this case, neither $W \cap \Wp = \{ \boldsymbol{0} \}$ nor $W + \Wp = (\mathrm{GF}(2))^4$ holds.} where $W^\perp$ is the orthogonal complement of $W$ (i.e., $W^\perp = \{ \boldsymbol{v} \in \mathbb{F}^r \mid \forall \boldsymbol{w} \in W, \boldsymbol{v}^\top \boldsymbol{w} = 0  \}$).
This issue matters in our setting because we need to maintain bases for subspaces corresponding to many different sets, and for this purpose, our algorithm relies on the property that $W \cap \Wp = \{ \boldsymbol{0} \}$ and $W \oplus \Wp = \F^r$.

To design an algorithm applicable to any field, we replace the Euclidean inner product $\bv^\top \bw$ with the bilinear form $\langle \boldsymbol{v}, \boldsymbol{w} \rangle = \bv^\top B \bw$, where we choose $w_1, \ldots, w_r$ independently and uniformly at random from $\mathbb{F}$, and define a diagonal $r \times r$ matrix $B$ such that $B_{i, i} = w_i$.
Here, we define the orthogonal complement with respect to the bilinear form $\langle \boldsymbol{v}, \boldsymbol{w} \rangle = \bv^\top B \bw$ as $\Wp_B = \{ \boldsymbol{v} \in \mathbb{F}^r \mid \forall \boldsymbol{w} \in W, \langle \boldsymbol{v}, \boldsymbol{w} \rangle = \bv^\top B \bw = 0  \}$.
For a fixed linear subspace $W$, we can show that both $W \cap \Wp_B = \{ \boldsymbol{0} \}$ and $W \oplus \Wp_B = \mathbb{F}^r$ hold with high probability.

To determine whether $\boldsymbol{m}_i \in \mathrm{span} \{ \boldsymbol{m}_j \mid j \in S \}$, it suffices to check whether $\langle \boldsymbol{v}, \boldsymbol{m}_i \rangle = \bv^\top B \bm_i = 0$.
Here, $\bv$ is a vector sampled from the orthogonal complement of $\mathrm{span} \{ \boldsymbol{m}_j \mid j \in S \}$ with respect to the bilinear form $\langle \boldsymbol{v}, \boldsymbol{w} \rangle = \bv^\top B \bw$.
We make use of a fast orthogonalization algorithm with respect to the bilinear form $\langle \bv, \bw \rangle = \bv^\top B \bw$, which is a modified version of the fast Gram--Schmidt orthonormalization algorithm of van den Brand~\cite{van2021unifying}.


\subsection{Related Work} \label{subsec:additional_related_work}

\subparagraph*{Faster Matroid Intersection under Oracle Models.} \label{subsec:intro_matroid_intersection}

In addition to the linear matroid model, another standard model for handling matroids is the oracle model.
In this model, an algorithm accesses a matroid through an oracle that answers queries about its structure.
In recent years, the development of faster algorithms for matroid intersection under oracle models has attracted significant attention.
Starting the work of Edmonds \cite{edmonds1970submodular, edmonds1979matroid}, many fast algorithms in the oracle model have been studied~\cite{aigner1971matching, tomizawa1974algorithm, lawler1975matroid, cunningham1986improved, fujishige1995efficient, shigeno1995dual, lee2015faster, chekuri2016fast, huang2016exact, nguyen2019note, chakrabarty2019faster, blikstad2021breaking_STOC, blikstad2021breaking, u2022subquadratic, blikstad2023fast,quanrud:LIPIcs.ICALP.2024.118,terao2024deterministic,blikstad2024efficient, dudeja2026weighted}.
There have also been many studies on fast matroid intersection algorithms for several important specific classes of matroids~\cite{gabow1979efficient, gabow1985efficient, cunningham1986improved, gabow1989efficient, gabow1996efficient, xu1994fast, harvey2007algebraic, harvey2009algebraic,huang2016exact}.
Recently, Blikstad--Mukhopadhyay--Nanongkai--Tu \cite{blikstad2023fast} proposed a remarkable new oracle model, called dynamic oracle, which unifies various results on fast matroid intersection algorithms for specific classes of matroids; see \cite[Table 2]{blikstad2023fast}.

Note that the recent dynamic oracle framework of Blikstad--Mukhopadhyay--Nanongkai--Tu \cite{blikstad2023fast} yields a linear matroid intersection algorithm with time complexity $O(n^{2.5925}\sqrt{r_*})$, using the dynamic matrix rank maintenance algorithm of \cite{van2019dynamic}.
However, this complexity is slower than that of Harvey \cite{harvey2009algebraic}.

\subparagraph*{Additional Related Work.}
There are several studies on fast algebraic algorithms for problems that generalize linear matroid intersection, such as linear matroid parity~\cite{cheung2014algebraic}, fractional linear matroid parity~\cite{oki2025algebraic}, and linear delta-matroid parity~\cite{koana_et_al:LIPIcs.STACS.2025.62}.
Matoya--Oki~\cite{matoya2022pfaffian} derandomized Harvey's \cite{harvey2009algebraic} linear matroid intersection algorithm for a special case known as a Pfaffian pair.
Linear matroid intersection has also been well studied in parallel and other computational models~\cite{narayanan1994randomized, gurjar2020linear, gurjar_et_al:LIPIcs.ESA.2024.63, agarwala_et_al:LIPIcs.ITCS.2026.3}.
The intersection of multiple linear matroids has also attracted interest in the parameterized algorithm community; see, for example, \cite[Section 12.3]{cygan2015parameterized} and \cite{barvinok1995new, marx2009parameterized, brand2021parameterized, eiben2024determinantal}.


\subparagraph*{Concurrent Work.}

Independent of our work, Dudeja--Grilnberger~\cite{dudeja2026weighted} presented a remarkable reduction in the independence oracle model that converts any $(1-\eps)$-approximation algorithm for unweighted matroid intersection into a $(1-\eps)$-approximation algorithm for weighted matroid intersection, while increasing the running time by only a factor of $\log C$, where $C$ denotes the aspect ratio. 
Thus, one might hope that, when combined with our unweighted result in Theorem~\ref{thm:cardinality_main_thm}, their reduction could yield a weighted result comparable to Theorem~\ref{thm:weighted_main_thm}.
However, their reduction relies on several subroutines, including a procedure for greedily computing a common independent set, and it is unclear whether all of these subroutines can be implemented efficiently for linear matroids.
Accordingly, it is unclear whether their reduction, combined with Theorem~\ref{thm:cardinality_main_thm}, can yield a result comparable to Theorem~\ref{thm:weighted_main_thm}.

\section{Preliminaries}

\subsection{Basic Notation}

The set $\{ 1, \ldots, n\}$ is denoted by $[n]$.
If $U$ is a set, then $\binom{U}{k}$ denotes the set of all its subsets of size $k$.

Consider a matrix $M$ over a field $\mathbb{F}$.
Let $\mathrm{nnz}(M)$ denote the number of nonzero entries in $M$.
Let $M[S]$ denote the submatrix consisting of columns indexed by $S$.
Let $M[R, C]$ denote the submatrix consisting of the rows indexed by $R$ and the columns indexed by $C$.
Let $[A \mid B]$ denote the matrix whose columns consist of those of $A$, followed by those of $B$.
We write $I$ to denote the identity matrix. 
Let $\mathrm{span}\{ \bv_1, \ldots, \bv_k \}$ denote the linear span of the set of vectors $\{ \bv_1, \dots, \bv_k \}$.

\subsection{Assumptions and Conventions}


When $\mathbb{F}$ is a finite field, we can assume that $|\mathbb{F}| = \Omega(\mathrm{poly}(n))$ by the following well-known lemma.
As a result of this assumption, the running time of our algorithm increases by a multiplicative factor of $\mathrm{polylog}(n)$.

\begin{lemma}[from {\cite[Lemma 2.1]{cheung2013fast}}] \label{lem:matrix_elem_assumption}
    Let $A$ be an $r \times n$ matrix over a field $\mathbb{F}$ with $p^c$ elements.
    We can construct a finite field $\mathbb{F}'$ with $p^{ck} = \Omega(\mathrm{poly}(n))$ elements and an injective mapping $f\colon\mathbb{F} \to \mathbb{F}'$ so that the image of $\mathbb{F}$ is a subfield of $\mathbb{F}'$.
    Then, the $r \times n$ matrix $A'$ where $A'_{ij} = f(A_{ij})$ satisfies the following property: a set of columns of $A$ is independent if and only if the corresponding set of columns of $A'$ is independent. 
    This preprocessing step takes $O(\mathrm{nnz}(A))$ time. Each field operation in $\mathbb{F}'$ can be performed using $\mathrm{polylog}(n)$ field operations in $\mathbb{F}$.
\end{lemma}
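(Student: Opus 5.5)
The plan is to build $\mathbb{F}'$ as a degree-$k$ extension of $\mathbb{F}$ and to take $f$ as the natural embedding. Write $\mathbb{F} = \mathrm{GF}(p^c)$, and choose the smallest $k$ with $p^{ck} \ge n^{C}$ for a suitable constant $C$; then $k = O(\log n)$. To construct $\mathbb{F}' = \mathbb{F}[x]/(g(x))$ we need a polynomial $g$ of degree $k$ that is irreducible over $\mathbb{F}$. I would find it by random sampling: a random monic polynomial of degree $k$ is irreducible with probability about $1/k$. Irreducibility can be tested with Ben-Or's or Rabin's test using $\mathrm{poly}(k)\log|\mathbb{F}|$ operations in $\mathbb{F}$, so the expected cost is $\mathrm{polylog}(n)$ field operations. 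Alternatively, Shoup's deterministic algorithm can be cited for a deterministic version. This construction cost is independent of $A$ and is absorbed into the bounds.

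The map $f$ sends $a \in \mathbb{F}$ to the constant polynomial $a$ modulo $g$. It is an injective ring homomorphism, since it is a field homomorphism, so its image is a subfield isomorphic to $\mathbb{F}$. Building $A'$ requires rewriting each nonzero entry as a constant polynomial, with zeros left implicit in the sparse representation. This takes $O(\mathrm{nnz}(A))$ time. Arithmetic in $\mathbb{F}'$ is polynomial arithmetic modulo $g$ of degree $k$, so each addition or multiplication costs $O(k^2) = \mathrm{polylog}(n)$ operations in $\mathbb{F}$. Inversion via the extended Euclidean algorithm has the same cost.

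The only real content is that independence is preserved. A set of columns of $A$ is independent if and only if some maximal minor of the corresponding submatrix is nonzero, that is, if and only if the submatrix has full column rank. Determinants are polynomial expressions in the entries and $f$ is a field homomorphism, so $\det(f(B)) = f(\det B)$. Injectivity of $f$ then gives $f(\det B)\neq 0$ if and only if $\det B \neq 0$. Hence every minor is nonzero in $\mathbb{F}$ exactly when its image is nonzero in $\mathbb{F}'$, and the ranks coincide. Equivalently, Gaussian elimination performed over $\mathbb{F}$ is a valid elimination over $\mathbb{F}'$.

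I expect the main obstacle to be the bookkeeping for finding an irreducible $g$ within the stated cost, in particular making sure that the randomized search, or a cited deterministic construction, stays within polylogarithmic time. The linear-algebra part is routine.
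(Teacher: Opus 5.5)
Your proposal is correct and is the standard argument behind Lemma~2.1 of Cheung--Kwok--Lau, which the paper only cites: you realize $\mathbb{F}'$ as $\mathbb{F}[x]/(g)$, embed $\mathbb{F}$ via constant polynomials, and use that an injective field homomorphism commutes with determinants, so each minor vanishes in $\mathbb{F}$ exactly when its image vanishes in $\mathbb{F}'$. The one thing to tighten is your aside about a deterministic construction: Shoup's deterministic algorithm has a running time bound that depends polynomially on the characteristic $p$, so it does not give a $\mathrm{polylog}(n)$ bound when $p$ is polynomially large; use the randomized search you describe, which is within $\mathrm{polylog}(n)$ because an extension is only needed when $|\mathbb{F}| < n^{C}$, and randomization is harmless here since the paper's algorithms are randomized anyway.
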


When $\mathbb{F}$ is an infinite field, we assume the exact arithmetic model where each field operation is performed in one unit of time.
In the algorithm, we also require the ability to sample a random element from an arbitrary subset of size $\Omega(\mathrm{poly}(n))$ in $\mathbb{F}$.
This assumption is needed for the application of the Schwartz--Zippel Lemma.\footnote{The Schwartz--Zippel Lemma is used in Harvey's exact algorithm for linear matroid intersection~\cite{harvey2009algebraic} and in the sketching technique of Cheung--Kwok--Lau~\cite{cheung2013fast}, both of which are components of our algorithm.}

\begin{lemma}[Schwartz--Zippel Lemma {\cite{schwartz1980fast}}] \label{lem:schwartz_zippel}
    Let $P \in \F[x_1, \ldots, x_n]$ be a nonzero polynomial of total degree $d$ over a field $\mathbb{F}$.
    Let $S$ be a finite subset of $\F$ and let $r_1, \ldots, r_n$ be chosen independently and uniformly at random from $S$.
    Then, the probability that $P(r_1, \ldots, r_n) = 0$ is at most $d / |S|$.
\end{lemma}

We assume that two $n \times n$ matrices can be multiplied in $O(n^{\omega})$ time.
The current best upper bound on $\omega$ is $\omega < 2.371339$ \cite{alman2025more}.
For an $n \times n$ matrix, it is known that computing the determinant, rank, and inverse can all be done in the same time complexity as one matrix multiplication \cite{bunch1974triangular, harvey2008matchings}.

The assumptions in this subsection were also made in \cite{cheung2013fast, nguyen2019fast}.

\subsection{Matroid Preliminaries}

For the basics of matroid theory, see, for example, the excellent textbooks of Oxley \cite{oxley2006matroid} or Schrijiver \cite{schrijver2003combinatorial}.

A pair $\M = (V, \mathcal{I})$ of a finite set $V$ and a non-empty set family $\mathcal{I} \subseteq 2^{V}$ is called a {\em matroid} if the following properties are satisfied.
    
    \begin{description}
        \item[(Downward closure property)] 
        If $I \in \mathcal{I}$ and $J \subseteq I$, then $J \in \mathcal{I}$.
        \item[(Augmentation property)]
        If $I, J \in \mathcal{I}$ and $|J| < |I|$, then there exists $v \in I \setminus J$ such that $J \cup \{v\} \in \mathcal{I}$.
    \end{description}

A set $I \subseteq V$ is called {\em independent} if $I \in \mathcal{I}$ and {\em dependent} otherwise.

\subparagraph*{Matroid Rank and Span.}
Motivated by linear algebra, matroids have associated notions of a \emph{rank function} and a \emph{span function}, defined as follows.
For a matroid $\M = (V, \mathcal{I})$, the {\em rank} of $\M$ is ${\rm rank}(\M) = \max \{ |I| \mid I \in \mathcal{I} \}$.
In addition, for any $S \subseteq V$, the {\em rank} of $S$ is ${\rm rank}_{\M}(S) = \max \{ |I| \mid I \subseteq S, I \in \mathcal{I} \}$.
Moreover, for any $S \subseteq V$, the {\em span} of $S$ is ${\rm span}_{\M}(S) = \{ v \in V \mid {\rm rank}_{\M}(S \cup \{v\}) = {\rm rank}_{\M}(S) \}$.

\subparagraph*{Matroid Intersection.}
In the {\em matroid intersection} problem, we are given two matroids $\mathcal{M}_1 = (V, \mathcal{I}_1)$ and $\mathcal{M}_2 = (V, \mathcal{I}_2)$, and the objective is to find a common independent set $I \in \I_1 \cap \I_2$ of maximum size.

Edmonds~\cite{edmonds1970submodular} proved the following min-max theorem, which is now well known.

\begin{theorem}[Matroid intersection theorem {\cite{edmonds1970submodular}}; see also {\cite[Theorem 41.1]{schrijver2003combinatorial}}] \label{fact:matroid_intersection_minmax}

Let $\M_1 = (V, \I_1)$, $\M_2 = (V, \I_2)$ be matroids, with rank function $\mathrm{rank}_{\M_1}$ and $\mathrm{rank}_{\M_2}$, respectively. 
Then, 

\begin{equation*}
    \max_{I \in \I_1 \cap \I_2} |I| = \min_{S, T \subseteq V, S \cup T = V} (\mathrm{rank}_{\M_1}(S) + \mathrm{rank}_{\M_2}(T)).
\end{equation*}
\end{theorem}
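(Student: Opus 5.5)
We prove the two inequalities separately. Weak duality ($\le$) is immediate. Fix $I \in \I_1 \cap \I_2$ and $S, T \subseteq V$ with $S \cup T = V$. Then $I = (I \cap S) \cup (I \cap T)$. By downward closure, $I \cap S$ is independent in $\M_1$ and $I \cap T$ is independent in $\M_2$. Hence $|I| \le |I \cap S| + |I \cap T| \le \mathrm{rank}_{\M_1}(S) + \mathrm{rank}_{\M_2}(T)$.

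For the reverse inequality ($\ge$), we use the standard augmenting-path argument. Let $I$ be a maximum common independent set. Build the exchange graph $D(I)$ on vertex set $V$ with the following arcs:
\begin{itemize}
\item an arc $(y, x)$ for $y \in I$, $x \notin I$ whenever $I - y + x \in \I_1$;
\item an arc $(x, y)$ whenever $I - y + x \in \I_2$.
\end{itemize}
Let $X_1 = \{x \notin I \mid I + x \in \I_1\}$ be the sources and $X_2 = \{x \notin I \mid I + x \in \I_2\}$ the sinks.

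The key lemma is that if a shortest $X_1$--$X_2$ path exists, then the symmetric difference of $I$ with its vertex set is a larger common independent set. This rests on the unique-matching exchange lemma: if $J \in \I$, $|J'| = |J|$, and the exchange graph restricted to $J \triangle J'$ has a unique perfect matching, then $J' \in \I$. Shortestness guarantees that the matching is unique, via a triangular ordering of the path's exchange pairs; the source end and sink end are handled by adding a dummy element or by truncating. By maximality of $I$, no such path exists.

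Now let $U$ be the set of vertices that can reach $X_2$ in $D(I)$. Set $T = U$ and $S = V \setminus U$, so $S \cup T = V$, $X_2 \subseteq U$, and $X_1 \cap U = \emptyset$. We show $\mathrm{rank}_{\M_1}(S) = |I \cap S|$:
\begin{itemize}
\item Suppose not. Then by augmentation there is $x \in S \setminus I$ with $(I \cap S) + x \in \I_1$.
\item Since $x \notin X_1$, the set $I + x$ contains a unique $\M_1$-circuit, and that circuit meets $I \cap U$ in some $y$.
\item Then $I - y + x \in \I_1$, giving an arc $(y, x)$.
\item We have $y \in U$, so some arc from $y$ leads into $U$, while the arc $(y,x)$ leads out of $U$. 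To conclude we need the exchange property more carefully; the cleaner route is the closure form. Since $I \cap S$ spans no element outside of it, yet $x \in \mathrm{span}_{\M_1}(I) \setminus \mathrm{span}_{\M_1}(I \cap S)$, some $y \in I \cap U$ with $I - y + x \in \I_1$ exists.
\item The arc $(y, x)$ means $x$ does not reach $X_2$ through $y$. It goes from $U$ to $S$, not from $S$ to $U$, so the right orientation must be checked.
\end{itemize}
The orientation must be chosen so that arcs entering $U$ certify the rank bounds. With arcs $(y, x)$ for $\M_1$, the argument should define $S$ as the set reachable from $X_1$ and $T = V \setminus S$. Then $x \in S \setminus I$ and $y \in I \setminus S$ with $I - y + x \in \I_1$ would give an arc $(y, x)$ into $S$, which does not contradict anything. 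So the arcs must be reversed: use $(x, y)$ when $I - y + x \in \I_1$ and $(y, x)$ when $I - y + x \in \I_2$, with paths from $X_1$ to $X_2$. With this convention, an element $x \in S \setminus I$ spanning-independent of $I \cap S$ yields an arc from $S$ to $y \notin S$, a contradiction. This gives $\mathrm{rank}_{\M_1}(S) = |I \cap S|$, and symmetrically $\mathrm{rank}_{\M_2}(T) = |I \cap T|$ using $X_2 \cap S = \emptyset$. Summing yields $|I|$.

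The main obstacle is the unique-matching exchange lemma justifying augmentation along shortest paths. I would prove it by induction on the matching size, using that a triangular exchange structure lets each swap preserve independence step by step.
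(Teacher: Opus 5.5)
Your weak-duality half is correct, and your strategy for the other half (exchange graph, augmentation along shortest paths via the unique-matching lemma, then a reachability cut as the dual certificate) is the standard proof behind the citation to Schrijver; the paper itself gives no proof. The genuine gap is your final choice of $(S,T)$. Taking $S$ to be the set reachable from $X_1$ forces $X_1 \subseteq S$. Any $x \in X_1$ satisfies $(I\cap S)+x \subseteq I+x \in \I_1$, so $\mathrm{rank}_{\M_1}(S) > |I\cap S|$ as soon as $X_1 \neq \emptyset$. Your contradiction step needs $x \notin X_1$ to find a circuit in $I+x$, and that is exactly what fails. Likewise, $X_2 \cap S = \emptyset$ puts $X_2$ inside $T$, which is the opposite of what the $\M_2$ bound needs. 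Already for $V=\{a\}$ with $a$ free in $\M_1$ and a loop in $\M_2$, $I=\emptyset$ is maximum, but your pair $S=\{a\}$, $T=\emptyset$ has value $1$. Your first attempt, with $S=V\setminus U$, fails for the same reason, since $X_1\subseteq V\setminus U$.

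Reversing the arcs does not repair this, and it creates a second problem. An $X_1$--$X_2$ path in your reversed graph is an $X_2$--$X_1$ path in the standard exchange graph, so maximality of $I$ no longer excludes it. Take bipartite matching with $\M_1$/$\M_2$ partitioning edges by left/right endpoint, edges $u_1v_1,u_2v_2,u_3v_2,u_1v_3$, and $I=\{u_1v_1,u_2v_2\}$, which is maximum. Here $X_1=\{u_3v_2\}$ and $X_2=\{u_1v_3\}$. Then $u_3v_2\to u_1v_1\to u_1v_3$ is a shortest such path, yet its symmetric difference with $I$ is not a matching.

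The fix is simply to swap the assignment in your original setup. Keep $\M_1$-arcs $(y,x)$ and $\M_2$-arcs $(x,y)$, let $U$ be the set of vertices that can reach $X_2$, and put $S=U$, $T=V\setminus U$. Since there is no $X_1$--$X_2$ path, $U\cap X_1=\emptyset$, and by construction no arc enters $U$. Suppose some $x\in U\setminus I$ had $(I\cap U)+x\in\I_1$. Then $x\notin X_1$, so the unique $\M_1$-circuit of $I+x$ contains some $y\in I\setminus U$, and $(y,x)$ is an arc entering $U$, a contradiction. Hence $\mathrm{rank}_{\M_1}(U)=|I\cap U|$. Symmetrically, $X_2\subseteq U$, so any $x\in (V\setminus U)\setminus I$ with $(I\setminus U)+x\in\I_2$ is not in $X_2$. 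Its $\M_2$-circuit in $I+x$ then contains some $y\in I\cap U$, and the arc $(x,y)$ would force $x\in U$. Hence $\mathrm{rank}_{\M_2}(V\setminus U)=|I\setminus U|$, and the two terms sum to $|I|$.
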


Here, $(S, T)$ is called a dual solution to matroid intersection, where $S$ and $T$ minimize the right-hand side of the equation in Theorem~\ref{fact:matroid_intersection_minmax}.

In our linear matroid intersection algorithm, we use Harvey's exact algorithm for linear matroid intersection~\cite{harvey2009algebraic} to compute primal and dual solutions to sparsified instances.

\begin{theorem}[Harvey's exact linear matroid intersection algorithm; see {\cite[Section 4.6]{harvey2009algebraic}}] \label{thm:Harvey}
    An exact solution to the maximum cardinality linear matroid intersection problem can be computed with high probability in $\tilde{O}(nr^{\omega - 1})$ time.
    Furthermore, an exact dual solution can also be computed with high probability in $\tilde{O}(nr^{\omega - 1})$ time.
\end{theorem}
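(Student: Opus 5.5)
The plan is to follow Harvey's algebraic approach. I would first reduce the primal problem to a rank computation on a small symbolic matrix, then extract an optimal set by a batched elimination procedure, and finally obtain the dual from the standard exchange-graph certificate.

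\textbf{Algebraic characterization.} Let $D$ be the $n\times n$ diagonal matrix with indeterminates $x_1,\ldots,x_n$ on the diagonal. By the Cauchy--Binet formula, every $k\times k$ minor of $M_1 D M_2^\top$ is a polynomial whose monomials correspond to $k$-sets $J$ that are independent in both $M_1$ and $M_2$. Distinct $J$ give distinct monomials, so no cancellation occurs. Hence
\[
\mathrm{rank}(M_1 D M_2^\top) = \max_{I \in \I_1\cap\I_2}|I|.
\]
Substituting independent uniform random field elements for the $x_i$ preserves this rank with high probability, by Lemma~\ref{lem:schwartz_zippel}, using the assumption that $|\F| = \Omega(\mathrm{poly}(n))$. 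This gives the size of an optimal solution but not yet the set itself.

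\textbf{Extracting a maximum common independent set.} I would work with Harvey's block matrix
\[
Z=\begin{pmatrix} 0 & M_1 \\ M_2^\top & T^{-1}\end{pmatrix}, \qquad T=\mathrm{diag}(t_1,\ldots,t_n)\ \text{random},
\]
whose Schur complement is $M_1 T M_2^\top$. First I would restrict to a maximum-rank nonsingular principal part by pivoting on an $r\times r$ block. Then, for $i=1,\ldots,n$, I would decide whether setting $t_i=0$ keeps the rank at $r_*$. Setting $t_i=0$ is a rank-one modification, so the test reduces to checking whether a single entry of the current inverse is nonzero. An accepted deletion updates the inverse by a Sherman--Morrison rank-one correction. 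The surviving indices then form a maximum common independent set.

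Done naively, these updates cost $O(nr^2)$ or worse. The key step, and the main obstacle, is Harvey's lazy batching. Elements are processed in a recursive binary splitting of $[n]$ into blocks of size about $r$. Pending rank-one updates are accumulated and applied to a whole block at once through the Sherman--Morrison--Woodbury formula, which costs one rectangular matrix multiplication per node of the recursion. Summing over the recursion tree gives $O(r^\omega)$ per block of $r$ elements, hence $\tilde O(nr^{\omega-1})$ in total. The technical care lies in showing that the lazily updated entries queried at each step are exactly the correct ones, i.e.\ that the deferred updates commute with the queries. I would argue this by induction over the recursion tree.

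\textbf{Dual solution.} Given a maximum $I$, I would build the exchange graph. First compute $B_k=M_k[I]$ completed to a basis, and express all columns in these coordinates by a single $r\times r$ inversion followed by an $r\times n$ product, costing $O(nr^{\omega-1})$. The resulting nonzero patterns are exactly the exchange arcs, and there are at most $nr\le nr^{\omega-1}$ of them. The sources are $X_1=\{v\notin I : I+v\in\I_1\}$ and the sinks are $X_2=\{v\notin I : I+v\in\I_2\}$.

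Let $U$ be the set of vertices reachable from $X_1$ by breadth-first search. Since $I$ is maximum, no sink is reachable. The standard argument, as in Schrijver's proof of Theorem~\ref{fact:matroid_intersection_minmax}, then shows that $S=V\setminus U$ and $T=U$ satisfy $\mathrm{rank}_{\M_1}(S)=|I\cap S|$ and $\mathrm{rank}_{\M_2}(T)=|I\cap T|$. Their sum is therefore $|I|$, so $(S,T)$ is an optimal dual solution, and all of this is computed within the stated time bound.
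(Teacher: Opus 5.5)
Your outline is correct and follows the same route as the paper, which proves nothing itself and simply cites Harvey. You reconstruct Harvey's method: the Cauchy--Binet/Schwartz--Zippel rank characterization and greedy elimination with lazily batched Sherman--Morrison--Woodbury updates over blocks of $r$ elements for the primal, and reachability from $X_1$ in the exchange graph for the dual. Two steps need tightening. First, the reduction to the common-base case should take a maximum nonsingular submatrix of $M_1TM_2^\top$ with row sets $R_1$ of $M_1$ and $R_2$ of $M_2$ chosen independently, because a nonsingular \emph{principal} submatrix of size $r_*$ need not exist (e.g.\ $M_1=(1,0)^\top$, $M_2=(0,1)^\top$). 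Second, the $O(r^\omega)$-per-block bound holds only if the global state you maintain is the $r\times r$ inverse $(M_1TM_2^\top)^{-1}$, from which you regenerate each block's entries of $Z^{-1}$, rather than the full inverse of $Z$.
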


\subsection{Sketching Technique of Cheung--Kwok--Lau} \label{subsec:sketching_technique}

In our algorithm, we use the sketching technique of Cheung--Kwok--Lau~\cite{cheung2013fast}.
Using this technique, we can reduce the dimension of the input matrix from $r \times n$ to $O(r_*) \times n$.

\begin{lemma}[Sketching technique of Cheung--Kwok--Lau~{\cite{cheung2013fast}}; from {\cite[Lemma 2.9]{cheung2013fast}}] \label{lem:sketching_tech}
    There is an algorithm that, given an $r \times n$ matrix $A$ over a field $\F$,  returns an $O(k) \times n$ matrix $A'$ over a field $\F$ in $O(\mathrm{nnz}(A))$ time, with $\mathrm{nnz}(A') = O(\mathrm{nnz}(A))$, such that for any set $S \subseteq [n]$ of size at most $k$, if the columns indexed by $S$ are linearly independent in $A$, then the columns of $A'$ indexed by $S$ are also linearly independent with probability at least $1 - O(1/n^{1/3})$.
\end{lemma}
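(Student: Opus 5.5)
We would take $A' = RA$, where $R$ is a random sparse $m \times r$ matrix with $m = ck$ for a suitable constant $c$. Each column $j$ of $R$ (one per row of $A$) gets $d = O(1)$ nonzero positions, chosen independently and uniformly among the $m$ rows. The values at these positions are drawn uniformly from a subset of $\F$ of size $\mathrm{poly}(n)$, which is available by Lemma~\ref{lem:matrix_elem_assumption}. Row $j$ of $A$ is then added, with these random coefficients, into $d$ rows of $A'$. So $A'$ is computed in $O(d\cdot\mathrm{nnz}(A)) = O(\mathrm{nnz}(A))$ time, and $\mathrm{nnz}(A') \le d\,\mathrm{nnz}(A)$. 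The sparsity pattern of $R$ is a random bipartite graph $G$: its left side is the $r$ rows of $A$, its right side is the $m$ rows of $A'$, and every left vertex has degree $d$.

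Correctness for a fixed $S$ with $|S| = s \le k$ and $A[S]$ of full column rank goes through Cauchy--Binet. Choose a row set $P$ with $|P| = s$ and $\det A[P,S] \neq 0$. For any $Q$ with $|Q| = s$,
\begin{equation*}
\det\bigl((RA)[Q,S]\bigr) = \sum_{|P'| = s} \det R[Q,P']\,\det A[P',S].
\end{equation*}
Treat the nonzero entries of $R$ as indeterminates. Distinct $P'$ involve disjoint sets of column-variables of $R$, so the polynomials $\det R[Q,P']$ have disjoint monomial supports. Hence the sum is a nonzero polynomial whenever $\det R[Q,P]$ is nonzero. That happens exactly when $G$ contains a matching between $P$ and $Q$. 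Since the total degree is $s \le r$, the Schwartz--Zippel Lemma (Lemma~\ref{lem:schwartz_zippel}) applied over a set of size $\mathrm{poly}(n)$ shows the random evaluation is nonzero except with probability $1/\mathrm{poly}(n)$.

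It therefore remains to show that, with probability $1 - O(n^{-1/3})$, the fixed set $P$ is matchable into the right side of $G$. By Hall's theorem this requires $|N(X)| \ge |X|$ for every $X \subseteq P$. Since $P$ is fixed before $G$ is drawn, the union bound only ranges over subsets $X \subseteq P$, not over all $X \subseteq [r]$. For $|X| = t$, the probability that all $dt$ edges land in some fixed set of $t-1$ right vertices is at most $((t-1)/m)^{dt}$. This gives the bound
\begin{equation*}
\sum_{t=1}^{s} \binom{s}{t}\binom{m}{t-1}\Bigl(\frac{t}{m}\Bigr)^{dt} \le \sum_{t} \Bigl(\frac{e s}{t}\cdot\frac{e m}{t}\cdot\Bigl(\frac{t}{m}\Bigr)^{d}\Bigr)^{t}.
\end{equation*}
With $m = ck$ and $d \ge 3$, the ratio inside is $O\bigl((t/m)^{d-2}\bigr)$. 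Small $t$ contribute very little, and for large $t$ the constant $c$ makes the base less than $1/2$.

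The main obstacle I expect is getting exactly the stated failure probability $O(n^{-1/3})$ rather than merely a small constant. The $t = 1$ and small-$t$ terms decay only polynomially in $m$, not in $n$, so the plain union bound above yields roughly $O(1/k)$. My first attempt would be to choose $d$ and $c$ as in Cheung--Kwok--Lau's magical-graph construction. If needed, I would also take $m = \Theta(k + n^{1/3})$, or boost by independent repetitions stacked together, and then redo the Hall union bound to push the failure probability down to $O(n^{-1/3})$. Any such change must still keep the row count at $O(k)$ and $\mathrm{nnz}(A') = O(\mathrm{nnz}(A))$.
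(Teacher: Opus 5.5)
The paper does not prove this lemma; it is quoted from Cheung--Kwok--Lau. Your argument therefore has to stand alone, and it has a genuine gap exactly where you suspected.

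The reduction itself is sound. Cauchy--Binet with disjoint monomial supports, plus Schwartz--Zippel, reduces full column rank of $A'[S]$ to $G$ having a matching that saturates a fixed row set $P$ with $\det A[P,S]\neq 0$. Your union bound over $X\subseteq P$ (whose $t=1$ term is in fact $0$) then gives failure probability $O(k^{3-2d})+1/\mathrm{poly}(n)$, dominated by $t=2$. That is $O(n^{-1/3})$ only when $k\geq n^{1/(6d-9)}$.

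For smaller $k$ the shortfall is not slack in the union bound; the construction itself fails. Suppose the only nonzero rows of $A$ are two rows $p_1,p_2$ and $|S|=2$. Then $A'[S]=R[\ast,\{p_1,p_2\}]\,A[\{p_1,p_2\},S]$, which is singular whenever all $2d$ sampled positions of $p_1$ and $p_2$ land on the same output row. Under your with-replacement sampling this happens with probability $m^{1-2d}$. That is a constant for $k=O(1)$, and far above $n^{-1/3}$ whenever $k=n^{o(1)}$.

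None of your proposed repairs stays within the constraints:
\begin{itemize}
\item Taking $m=\Theta(k+n^{1/3})$ is not $O(k)$ when $k=o(n^{1/3})$.
\item Stacking $j$ independent sketches multiplies both the row count and $\mathrm{nnz}(A')$ by $j$.
\item Raising the degree multiplies $\mathrm{nnz}(A')$ and the running time by $d$.
\item Reaching $n^{-1/3}$ by either of the last two routes needs $j$ or $d$ of order $\log n/\log k$, which is not constant for small $k$.
\item Choosing $c,d$ ``as in the magical-graph construction'' is not an argument.
\end{itemize}

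In fact no construction can deliver the bound as literally stated, with $n$ the number of columns, for every $k$. Take $A=I_n$ and $k=\Theta(\log n)$. Any output with at most $ck$ rows and at most $Cn$ nonzeros has more than $n/2$ columns supported on at most $2C$ rows. Assign each such column to one of the at most $\binom{ck}{2C}$ row sets of size $2C$ that contain its support. By convexity, a uniformly random set of $2C+1\leq k$ columns falls inside a single class with probability at least $4^{-(2C+1)}\binom{ck}{2C}^{-2C}=(\log n)^{-O(C^2)}$. Such a set is supported on $2C$ rows and is therefore dependent in $A'$. Hence some fixed $S$ fails with probability far above $n^{-1/3}$.

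So what your argument actually proves is a $k$-dependent guarantee, $1-O(k^{3-2d})$. To get the form used in the paper, check the precise hypotheses of Cheung--Kwok--Lau's Lemma~2.9, for instance which parameter the $n^{1/3}$ refers to. Sharpening the calculation cannot close the gap.
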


In our fast span computation algorithm, we use the following lemma.

\begin{lemma}[Fast computation of a basis of the orthogonal complement of the column space] \label{lem:orthogonal_complement_basis}
    There is a randomized algorithm that, given an $r \times n$ matrix $M$ over a field $\F$, finds a basis of the orthogonal complement of the column space of $M$ with high probability in $O((\mathrm{nnz}(M) + r^{\omega}) \log n)$ time.
\end{lemma}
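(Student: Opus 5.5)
We first find a maximal linearly independent subset of the columns of $M$, then compute a basis of the orthogonal complement of its span. Note that the orthogonal complement of the column space of $M$ is the null space of $M^\top$, i.e., $\{\bv \in \F^r \mid \bv^\top \bm_i = 0 \ \forall i\}$. This space is determined by any set of columns spanning the column space, so it suffices to find a column basis $B \subseteq [n]$ of $M$ with $|B| = \mathrm{rank}(M) \le r$.

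\textbf{Step 1: a column basis.} We use the Cheung--Kwok--Lau machinery (Lemma~\ref{lem:sketching_tech}, together with the rank algorithm of \cite{cheung2013fast}, which in fact returns a set of $\mathrm{rank}(M)$ linearly independent columns) to obtain $B$ in $O((\mathrm{nnz}(M) + r^{\omega})\log n)$ time with high probability. Concretely, we first compress the $n$ columns by random linear combinations to an $r \times O(r)$ matrix whose column space equals that of $M$ with high probability. Each compressed column is a random combination of the columns in a random bucket, computable in $O(\mathrm{nnz}(M))$ total time. Schwartz--Zippel (Lemma~\ref{lem:schwartz_zippel}) over a subset of $\F$ of size $\mathrm{poly}(n)$, available by Lemma~\ref{lem:matrix_elem_assumption}, guarantees that no rank is lost, with the usual $O(\log n)$-level doubling of the sample size to handle the unknown rank. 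For our purpose it is enough to get an $r\times O(r)$ matrix $N$ whose column space equals that of $M$; we do not need actual original columns.

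\textbf{Step 2: the complement.} Given $N$ of size $r \times O(r)$, compute the null space of $N^\top$ via standard fast linear algebra (LSP/LUP-type decomposition or reduction to echelon form via fast matrix multiplication) in $O(r^{\omega})$ time. This gives a basis of $\{\bv : \bv^\top N = \boldsymbol{0}\}$, which equals the desired complement because $N$ and $M$ have the same column space.

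\textbf{Main obstacle.} The key difficulty is ensuring that Step 1 preserves the column space exactly, not merely the rank of small subsets. Lemma~\ref{lem:sketching_tech} as stated compresses rows, whereas we need to compress columns while keeping the span. I would take $N = M G$ for a random sparse $n \times O(r)$ matrix $G$, where each column of $M$ is assigned to $O(\log n)$ random buckets with random coefficients. Computing $N$ then costs $O(\mathrm{nnz}(M)\log n)$. Since $\mathrm{colspace}(N) \subseteq \mathrm{colspace}(M)$ always holds, equality reduces to $\mathrm{rank}(N) = \mathrm{rank}(M)$. This is exactly the column analogue of the \cite{cheung2013fast} argument (apply it to $M^\top$), and it holds with high probability. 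If $O(r)$ buckets with one assignment per column do not give high probability directly, we repeat $O(\log n)$ times and concatenate, yielding an $r \times O(r\log n)$ matrix. Its null-space computation still takes $O(r^{\omega}\log n)$ time by processing it in $r\times r$ blocks, which gives the claimed bound.
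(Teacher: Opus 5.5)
Your proposal is correct, but it takes a lower-level route than the paper.

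The paper's proof is a one-line reduction. Since $W^\perp=\{\bx \mid M^\top\bx=\boldsymbol{0}\}$, it applies the Cheung--Kwok--Lau null-space algorithm (Lemma~\ref{lem:ckl_null_space_basis}) as a black box to the $n\times r$ matrix $M^\top$, whose rank is at most $r$. This gives $O((\nnz(M)+r\cdot r^{\omega-1})\log n)$ directly.

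You instead essentially re-derive that lemma for the case needed here, in three steps:
\begin{enumerate}
\item You sketch $M$ to $N=MG$ with $O(r)$ or $O(r\log n)$ columns.
\item You observe that $\mathrm{colspace}(N)\subseteq\mathrm{colspace}(M)$, so rank preservation alone gives $\ker N^\top=\ker M^\top$.
\item You finish with dense null-space computation in $O(r^{\omega})$ time per $r\times r$ block.
\end{enumerate}
This makes the argument self-contained and shows clearly why the dependence on $n$ disappears. The cost is that you must justify rank preservation yourself, and as written you only assert it for your $\Theta(\log n)$-bucket sketch.

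There are two clean ways to close this.
\begin{itemize}
\item Apply Lemma~\ref{lem:sketching_tech} to $M^\top$ with $k=r$. A set of $\mathrm{rank}(M)$ independent columns of $M^\top$ stays independent, so $\mathrm{rank}(G^\top M^\top)=\mathrm{rank}(M)$. Then boost the success probability with $O(\log n)$ independent repetitions, concatenated as you suggest; concatenation is safe because each block's column space lies inside $\mathrm{colspace}(M)$.
\item Argue directly. Fix $\rho=\mathrm{rank}(M)$ independent columns $J$ and rows $R$ with $\det M[R,J]\neq 0$. With $2r$ buckets and $\Theta(\log n)$ random buckets per column, a union bound over Hall violators shows that $J$ can be matched into distinct buckets $C$ with probability $1-n^{-\Omega(1)}$. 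By Cauchy--Binet,
\[
\det (MG)[R,C]=\sum_{S'}\det M[R,S']\,\det G[S',C].
\]
This is a nonzero polynomial of degree $\rho$ in the random coefficients, because monomials coming from different $S'$ use different rows of $G$ and cannot cancel. Schwartz--Zippel then finishes.
\end{itemize}

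Finally, the ``doubling to handle the unknown rank'' you mention is unnecessary, since $k=r\ge\mathrm{rank}(M)$ always suffices.
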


Lemma \ref{lem:orthogonal_complement_basis} can be easily derived from the following lemma, which was obtained using the sketching technique of Cheung--Kwok--Lau.
Here, the null space of a matrix $A$ is the subspace of vectors $\bx$ such that $A \boldsymbol{x} = \boldsymbol{0}$.

\begin{lemma}[from {\cite[Theorem 1.3(2)]{cheung2013fast}}; see also {\cite[Section 4.1.2]{cheung2013fast}}] \label{lem:ckl_null_space_basis}
     There is a randomized algorithm that, given an $m \times n$ matrix $A$ over a field $\F$, finds a basis of the null space of $A$ with high probability in $O((\mathrm{nnz}(A) + n (\mathrm{rank}(A))^{\omega - 1}) \log n)$ time.
\end{lemma}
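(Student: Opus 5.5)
The plan is to reduce $A$ to a matrix with only $O(\mathrm{rank}(A))$ rows and the same null space, using the sketching technique of Lemma~\ref{lem:sketching_tech}, and then to compute the null space of the small matrix by fast elimination. Write $k_0 = \mathrm{rank}(A)$. Since $k_0$ is unknown, we run a doubling search over guesses $k = 1, 2, 4, \dots$.

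\textbf{Key observation.} For a guess $k$, apply Lemma~\ref{lem:sketching_tech} to $A$ with parameter $k$, obtaining $A'$ with $O(k)$ rows in $O(\mathrm{nnz}(A))$ time. Each row of $A'$ is a linear combination of rows of $A$, i.e.\ $A' = PA$. Hence $\ker A \subseteq \ker A'$ and $\mathrm{rank}(A') \le \mathrm{rank}(A)$ always hold, with no error probability.
\begin{itemize}
\item Fix any set $S$ of $\min(k, k_0)$ linearly independent columns of $A$. The set $S$ is fixed before the randomness, so Lemma~\ref{lem:sketching_tech} keeps it independent in $A'$. Thus $\mathrm{rank}(A') \ge \min(k, k_0)$ with probability $1 - O(n^{-1/3})$.
\item To reach high probability, repeat the sketch $O(\log n)$ times independently and keep the output of largest rank. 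Since the rank can never exceed $\min(\mathrm{rank}(A), O(k))$, errors are one-sided and taking the maximum is safe.
\item If the resulting $A'$ has $\mathrm{rank}(A') < k$, then w.h.p.\ $k_0 = \mathrm{rank}(A')$. In that case $\ker A' = \ker A$, because one space contains the other and both have dimension $n - k_0$.
\item Otherwise we double $k$. The search stops once $k > k_0$, so at most $O(\log n)$ guesses are made, and the costs per guess grow geometrically.
\end{itemize}

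\textbf{Computing the null space of the small matrix.} Given the $O(k) \times n$ matrix $A'$ of rank $k_0 \le k$, we use a fast rank-revealing decomposition such as LSP/LQUP factorization (Ibarra--Moran--Hui). In $O(n k^{\omega-1})$ time it finds a set $R$ of $k_0$ independent rows and a set $B$ of $k_0$ columns such that $A'[R,B]$ is nonsingular. Rows outside $R$ are combinations of rows in $R$, so $\ker A' = \ker A'[R,[n]]$. Write $N = [n] \setminus B$. A basis of the null space is given by the columns of the matrix whose $B$-block is $-A'[R,B]^{-1} A'[R,N]$ and whose $N$-block is $I$. Computing $A'[R,B]^{-1}$ takes $O(k_0^\omega)$ time. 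The product with the $k_0 \times (n-k_0)$ matrix $A'[R,N]$ is done blockwise in $O(n k_0^{\omega-1})$ time. This basis is the natural (compact) output representation. Summing over the doubling steps and the $O(\log n)$ repetitions gives the claimed bound $O((\mathrm{nnz}(A) + n\,\mathrm{rank}(A)^{\omega-1})\log n)$.

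\textbf{Main obstacle.} The delicate step is the stopping rule of the doubling search. We must argue that a guess $k < k_0$ cannot be falsely accepted. This holds because, w.h.p., the sketch keeps some fixed $k$ independent columns independent, forcing $\mathrm{rank}(A') \ge k$ whenever $k \le k_0$. A union bound over the $O(\log n)$ guesses, each boosted to failure probability $n^{-\Theta(1)}$, completes the argument. A secondary point to check is that the fast factorization indeed runs in $O(n k^{\omega-1})$ time on an $O(k) \times n$ matrix.
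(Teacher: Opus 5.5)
The paper does not prove this lemma; it imports it from Cheung--Kwok--Lau, so there is no in-paper argument to compare with. Your compress-then-eliminate route is a natural self-contained derivation, and its logic is sound. The error is one-sided: $\ker A \subseteq \ker A'$ and $\mathrm{rank}(A') \le \mathrm{rank}(A)$. The stopping rule works: w.h.p.\ no round with $k \le k_0$ stops, and the first round with $k > k_0$ has $\mathrm{rank}(A') = k_0$, so $\ker A' = \ker A$ by dimension count. The canonical basis, with $B$-block $-A'[R,B]^{-1}A'[R,N]$ and $N$-block $I$, is correct and costs $O(nk_0^{\omega-1})$. Two things need repair, however.

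First, the final time bound does not follow as written. Each sketch costs $\Theta(\mathrm{nnz}(A))$ regardless of $k$, so this part of the cost does not grow geometrically over the rounds; only the elimination cost $O(nk^{\omega-1})$ does. With $O(\log n)$ repetitions in each of the $\Theta(\log \mathrm{rank}(A))$ rounds, you pay $O(\mathrm{nnz}(A)\log n \log \mathrm{rank}(A))$, up to $\mathrm{nnz}(A)\log^2 n$. That is a logarithmic factor above the claimed bound whenever the $\mathrm{nnz}(A)$ term dominates.

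The repetitions are also unnecessary. A single sketch fails with probability $O(n^{-1/3})$, which is already polynomially small rather than a constant. A union bound over the $O(\log n)$ rounds therefore gives failure probability $O(n^{-1/3}\log n) = n^{-\Theta(1)}$, which meets the paper's definition of high probability. If you want a larger exponent, $O(1)$ repetitions per round suffice. With this change the total is $O(\mathrm{nnz}(A)\log \mathrm{rank}(A) + n\,\mathrm{rank}(A)^{\omega-1})$, within the statement.

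Second, your whole one-sided-error argument, and hence $\ker A' = \ker A$, rests on the identity $A' = PA$. That identity is not part of Lemma~\ref{lem:sketching_tech} as stated. The lemma only guarantees that independent sets of size at most $k$ stay independent, and that alone tells you nothing about $\ker A'$. The identity does hold for the actual Cheung--Kwok--Lau construction, where $A$ is multiplied on the left by a sparse random matrix supported on a magical graph. You should cite it explicitly from that construction rather than from the black-box lemma.
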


Let $W$ denote the column space of an $r \times n$ matrix $M$, i.e., $W = \mathrm{span} \{ \bm_1, \ldots, \bm_n \}$, where $\boldsymbol{m}_1, \ldots, \boldsymbol{m}_n$ are the column vectors of $M$.
The orthogonal complement of $W$, denoted by $\Wp$, is defined as $W^\perp = \{ \bx \in \mathbb{F}^r \mid \forall \boldsymbol{w} \in W, \boldsymbol{w}^\top \boldsymbol{x} = 0   \}$.
Now, $\bx \in \Wp$ if and only if $\bm_i^\top \bx = 0$ for any $i \in [n]$.
We have $\Wp = \{ \bx \in \F^r \mid M^\top \bx = \boldsymbol{0} \}$.
Therefore, to find a basis of the orthogonal complement of the column space of $M$, we simply compute a basis of the null space of $M^{\top}$ using Lemma \ref{lem:ckl_null_space_basis}.
Note that $M^{\top}$ is an $n \times r$ matrix.
Hence, Lemma~\ref{lem:orthogonal_complement_basis} follows from Lemma~\ref{lem:ckl_null_space_basis}.



\subsection{Fast Gram--Schmidt Orthonormalization} \label{subsec:fast_orthogonal_basis}


For our maximum weight linear matroid intersection algorithm for matrices over $\R$, we use the fast Gram--Schmidt orthonormalization algorithm of van den Brand~\cite{van2021unifying}.


\begin{lemma}[Fast Gram--Schmidt orthonormalization; from van den Brand~{\cite[Section 4.5]{van2021unifying}}] \label{lem:vandenbrand_qr_decomp}
    Given an $r \times n$ matrix $M$ with $r \leq n$ over $\R$, where $\boldsymbol{m}_1, \ldots, \boldsymbol{m}_n$ are column vectors of $M$, there is an algorithm that repeatedly computes $\boldsymbol{m}'_i = \boldsymbol{m}_i - P_{i - 1} \boldsymbol{m}_i$, where $P_{i - 1} \boldsymbol{m}_i$ denotes the orthogonal projection of $\boldsymbol{m}_i$ onto $\mathrm{span}\{ \boldsymbol{m}_1, \ldots, \boldsymbol{m}_{i - 1} \}$, and then normalizes $\bm'_i$ if $\bm'_i \neq \boldsymbol{0}$.
    The total running time of this process is $O(n^\omega)$.
\end{lemma}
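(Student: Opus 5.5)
The plan is to implement the classical Gram--Schmidt recursion in a blocked, divide-and-conquer fashion, so that all projections are applied through fast matrix products instead of one vector at a time. Since $r \le n$, we may pad $M$ with zero rows to an $n \times n$ matrix. Padding changes neither inner products nor spans, so it suffices to handle a square $n \times n$ matrix $M$ and aim for an $O(n^{\omega})$ bound.

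We maintain the orthonormal vectors found so far as the columns of a matrix $Q$. The projector onto the span of the first $i-1$ processed columns is then $P_{i-1} = Q Q^\top$, where $Q$ contains the normalized nonzero $\bm'_j$ with $j<i$. The recursion is as follows.
\begin{itemize}
\item \textbf{Input.} A block of consecutive columns $M[\{a,\ldots,b\}]$ from which the projection onto the span of all earlier columns has already been subtracted.
\item \textbf{First half.} Split the block into halves $L$ and $R$ and recurse on $L$. This returns the orthonormal vectors $Q_L$ generated from $L$.
\item \textbf{Update.} Replace $M[R]$ by $M[R] - Q_L(Q_L^\top M[R])$.
\item \textbf{Second half.} Recurse on the updated $R$.
\end{itemize}
Correctness follows by induction. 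When column $i$ is reached, it has had its projection removed successively onto a collection of mutually orthogonal subspaces whose direct sum is $\mathrm{span}\{\bm_1,\ldots,\bm_{i-1}\}$. Subtracting the projection onto each orthogonal piece in turn gives exactly $\bm_i - P_{i-1}\bm_i$, because the projector onto an orthogonal direct sum is the sum of the individual projectors. At a leaf we normalize the column if it is nonzero, and otherwise discard it. This reproduces the stated process exactly.

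For the running time, at a node with block width $k$ the update multiplies an $n \times k$ matrix by a $k \times k$ matrix, after first forming $Q_L^\top M[R]$, which is a $k \times n$ by $n \times k$ product. Splitting into $n/k$ square blocks, each product costs $O((n/k)k^{\omega}) = O(nk^{\omega-1})$. There are $n/k$ nodes of width $k$, so each recursion level costs $O(n^2 k^{\omega-2})$. Summing over the geometric sequence $k = n/2^{\ell}$ gives $O(n^{\omega})$ because $\omega > 2$.

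The main subtlety is that the stated process must be realized exactly: zero residuals must be detected, and working over $\R$ in exact arithmetic introduces square roots during normalization. If square roots are a concern, I would first keep the vectors orthogonal but unnormalized. The projector is then $Q D^{-1} Q^\top$, where $D$ is the diagonal matrix of squared norms, and this requires only field operations. Normalization is applied at the end if needed, at a cost of $O(n^2)$.
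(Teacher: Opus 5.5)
Your proof is correct, but it takes a different route from the one the paper relies on. The paper follows van den Brand, and reproduces the argument for the bilinear-form version in the proof of Lemma~\ref{lem:bilinear_form_orthogonalization}. It writes the projector in closed form as $P_{i-1} = M D_S (D_S M^\top M D_S + (I - D_S))^{-1} D_S M^\top$, where $D_S$ is the diagonal indicator of a maximal independent set among the first $i-1$ columns. It then maintains $(I - P_{i-1})M$ in van den Brand's dynamic matrix-formula data structure, so processing column $i$ is one column query plus at most one diagonal update of $D_S$, each in $O(n^{\omega-1})$ time.

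You instead use recursive block Gram--Schmidt, which needs only rectangular matrix products, with no inversion and no dynamic data structure. Its correctness rests on the mutual orthogonality of the residuals: the successive factors $I - Q_J Q_J^\top$, over the intervals $J$ partitioning $[i-1]$, multiply out to $I - P_{i-1}$ because all cross terms vanish. Your level-by-level cost bound with the geometric sum (using $\omega > 2$) is also right.

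What the paper's formulation buys is that the extension to $\langle \bv, \bw \rangle = \bv^\top B \bw$ over an arbitrary field amounts to inserting $B$, with all needed invertibility packaged into the goodness assumption. Your scheme extends too: keep the vectors unnormalized as you suggest, and replace $Q_L^\top$ by $D_L^{-1} Q_L^\top B$, where $D_L$ is the diagonal of the values $\langle \bq, \bq \rangle$. In that case you must also check that each nonzero residual $\bm'_j$ satisfies $\langle \bm'_j, \bm'_j \rangle \neq 0$. This follows from goodness of $W_j = \mathrm{span}\{\bm_1,\ldots,\bm_j\}$, since otherwise $\bm'_j \in W_j \cap (W_j)^\perp_B$. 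Your unnormalized variant also avoids the square roots that the literal statement tacitly assumes are available.
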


To extend our algorithm to arbitrary fields, we modify a fast Gram--Schmidt orthonormalization algorithm of van den Brand into a fast orthogonalization algorithm with respect to the bilinear form $\langle \bv, \bw \rangle = \bv^\top B \bw$; see Section \ref{appendix:general_orthogonalization} for details.

\section{Fast Span Computation Algorithm} \label{thm:span_computation}

As part of our technical contribution, we develop a fast span computation algorithm, which is used as a subroutine in our maximum cardinality linear matroid intersection algorithm. 
The following lemma states its running time guarantee.

\begin{restatable}{lemma}{spanalgo} \label{lem:span_computation}
     There is a randomized algorithm that, given an $r \times n$ matrix $M$ over a field $\F$ whose columns are the representation of elements of a matroid $\M$, and a set $S \subseteq [n]$,\footnote{Although in our application the size of $S$ is always $\tilde{O}_{\eps}(r)$, we state the lemma in a general form for potential broader applicability.}, computes $\mathrm{span}_{\M}(S)$ with high probability in $\tO(\mathrm{nnz}(M) + r^{\omega})$ time.
\end{restatable}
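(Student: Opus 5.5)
The plan is a Freivalds-style test, as sketched in the technical overview. Let $W = \mathrm{span}\{\bm_j \mid j \in S\}$. An element $i$ lies in $\mathrm{span}_{\M}(S)$ exactly when $\bm_i \in W$, and this holds exactly when $\bm_i$ is orthogonal to every vector of $\Wp$. So we first compute a basis $\bu_1,\dots,\bu_k$ of $\Wp$. Lemma~\ref{lem:orthogonal_complement_basis}, applied to the $r \times |S|$ matrix $M[S]$, does this in $O((\mathrm{nnz}(M[S]) + r^{\omega})\log n) = \tO(\mathrm{nnz}(M) + r^{\omega})$ time. Here $\mathrm{nnz}(M[S]) \le \mathrm{nnz}(M)$, and $k \le r$. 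The bilinear-form issue over general fields does not arise at this point: we only use $\Wp = \{\bx \mid M[S]^\top \bx = \boldsymbol{0}\}$ and the identity $(\Wp)^\perp = W$, which holds over any field because the map $\bx \mapsto (\bw^\top \bx)$ is a nondegenerate pairing on $\F^r$, giving $\dim \Wp = r - \dim W$.

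Next we draw a random vector from $\Wp$: $\bv = \sum_{\ell=1}^k \alpha_\ell \bu_\ell$ with the $\alpha_\ell$ chosen independently and uniformly from a subset $A \subseteq \F$ of size $\mathrm{poly}(n)$. This is possible by the field-size assumption and Lemma~\ref{lem:matrix_elem_assumption}. Forming $\bv$ takes $O(rk) = O(r^2)$ time. We then compute $\bv^\top \bm_i$ for all $i \in [n]$, which is a single sparse product $\bv^\top M$ costing $O(\mathrm{nnz}(M))$. We output $\{ i \mid \bv^\top \bm_i = 0\}$. For $i \in S$ this is automatically correct. Alternatively we can simply add all of $S$, since $S \subseteq \mathrm{span}_{\M}(S)$.

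For correctness, if $\bm_i \in W$ then $\bv \in \Wp$ gives $\bv^\top \bm_i = 0$ deterministically. If $\bm_i \notin W = (\Wp)^\perp$, then some $\bu_\ell$ has $\bu_\ell^\top \bm_i \neq 0$. Hence $\bv^\top \bm_i = \sum_\ell \alpha_\ell (\bu_\ell^\top \bm_i)$ is a nonzero polynomial of degree $1$ in the $\alpha$'s. By Lemma~\ref{lem:schwartz_zippel} it vanishes with probability at most $1/|A| = n^{-\Theta(1)}$. A union bound over the at most $n$ columns, together with the failure probability of Lemma~\ref{lem:orthogonal_complement_basis}, gives success with high probability. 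The total running time is $\tO(\mathrm{nnz}(M) + r^{\omega})$.

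The main point needing care is justifying $(\Wp)^\perp = W$ over an arbitrary field, since self-orthogonal vectors can occur. This follows from dimension counting: the space $\{\bx \mid M[S]^\top \bx = \boldsymbol{0}\}$ has dimension $r - \mathrm{rank}(M[S])$, and applying the same count again gives $\dim (\Wp)^\perp = \dim W$, while $W \subseteq (\Wp)^\perp$ trivially. A second check is that Lemma~\ref{lem:orthogonal_complement_basis} runs in time $\tO(\mathrm{nnz}(M[S]) + r^{\omega})$ even when $|S| \gg r$. This holds as stated, because that lemma's bound depends only on $\mathrm{nnz}$ and $r$.
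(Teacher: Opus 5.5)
Your proposal is correct and follows the paper's proof essentially step for step. Both compute a basis of $\Wp$ via Lemma~\ref{lem:orthogonal_complement_basis} on $M[S]$, take a random combination $\bv$, and compute the $O(\mathrm{nnz}(M))$ product $\bv^\top M$; correctness comes from $(\Wp)^\perp = W$ (by dimension counting over any field), Schwartz--Zippel, and a union bound over the columns.
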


We note that our algorithm is applicable over any field.



The core idea of our algorithm is very simple.
In our algorithm, we first sample a vector $\boldsymbol{v}$ from the orthogonal complement of $\mathrm{span} \{ \boldsymbol{m}_j  \mid j \in S\}$, where $\boldsymbol{m}_1, \ldots, \boldsymbol{m}_n$ denote the vectors corresponding to the columns of the input matrix $M$.
Then, we output the set $\{ i \in [n] \mid \boldsymbol{v}^\top \boldsymbol{m}_i = 0 \}$ as $\mathrm{span}_{\M}(S)$.
See Algorithm \ref{alg:span_computation} for the pseudocode of our algorithm.

Let $W = \mathrm{span} \{ \boldsymbol{m}_j  \mid j \in S\}$.
The orthogonal complement of $W$, denoted by $\Wp$, is defined as $W^\perp = \{ \boldsymbol{v} \in \mathbb{F}^r \mid \forall \boldsymbol{w} \in W, \boldsymbol{v}^\top \boldsymbol{w} = 0  \}$.
When considering matrices over a field other than $\mathbb{R}$, it may occur that $W$ does not satisfy $W \cap W^{\perp} = \{ \boldsymbol{0} \}$ or $W + W^{\perp} = \mathbb{F}^r$.
Nevertheless, for our purpose it suffices that $(W^{\perp})^{\perp} = W$, which in fact holds over any field.\footnote{
This can be verified by a simple argument in linear algebra.
Now, fix any $\bx \in W$.
For any $\by \in \Wp$, we have $\by^\top \bx = 0$.
Thus, $\boldsymbol{x} \in (W^{\perp})^{\perp}$, and hence $W \subseteq (W^{\perp})^{\perp}$.
Moreover, the Rank--nullity theorem implies that $\dim(W^{\perp}) = r - \dim(W)$ and $\dim((W^{\perp})^{\perp}) = r - \dim(W^{\perp})$.
Therefore, $\dim((W^{\perp})^{\perp}) = \dim(W)$, which, together with $W \subseteq (W^{\perp})^{\perp}$, yields $(W^{\perp})^{\perp} = W$.}

Note that we assume $|\mathbb{F}| = \Omega(\mathrm{poly}(n))$.
Recall that, by Lemma \ref{lem:matrix_elem_assumption}, this assumption can be made without loss of generality.

\begin{algorithm}[t]
    By applying Lemma \ref{lem:orthogonal_complement_basis} to $M[S]$, obtain a basis $\{ \boldsymbol{b}_1, \dots, \boldsymbol{b}_k \}$ of the orthogonal complement of $\mathrm{span} \{ \boldsymbol{m}_j  \mid j \in S\}$. \\
    Let $\displaystyle \boldsymbol{v} = \sum_{j = 1}^{k} r_j \boldsymbol{b}_j$, where $r_1, \ldots, r_k$ are chosen independently and uniformly at random from a finite subset $F \subseteq \F$ with $|F| = \Theta(n^2)$. \\
    Let $T \gets \emptyset$ \\
    \For{$i \in [n]$} {
        \If{$\boldsymbol{v}^{\top} \boldsymbol{m}_i = 0$} {
            $T \gets T \cup \{ i \}$
        }
    }
    \Return{$T$}
    \caption{Span computation algorithm} \label{alg:span_computation}
\end{algorithm}


\begin{proof}
In our algorithm, we first compute a basis $\{ \boldsymbol{b}_1, \ldots, \boldsymbol{b}_{k} \}$ of the orthogonal complement of $\mathrm{span} \{ \boldsymbol{m}_j  \mid j \in S\}$ by applying Lemma \ref{lem:orthogonal_complement_basis} to $M[S]$.
This step takes $O((\mathrm{nnz}(M) + r^{\omega}) \log n)$ time.

Next, we choose $r_1, \ldots, r_k$ independently and uniformly at random from a finite subset $F \subseteq \F$ with $|F| = \Theta(n^2)$, and compute 
\begin{equation*}
    \boldsymbol{v} = \sum_{j = 1}^{k} r_j \boldsymbol{b}_j,
\end{equation*}
which takes $O(r^2)$ time, since $k \leq r$ and each $\boldsymbol{b}_j$ is an $r$-dimensional vector.

We claim that the value of $\boldsymbol{v}^{\top} \boldsymbol{m}_i$ satisfies the following two properties.
Here, $\boldsymbol{m_1}, \ldots, \boldsymbol{m_n}$ denote the vectors corresponding to the columns of the matrix $M$.

\begin{itemize}
    \item If $i \in \mathrm{span}_{\M}(S)$, then we have $\boldsymbol{v}^\top \boldsymbol{m}_i = 0$ for every possible choice of $r_1, \ldots, r_k$, since $\boldsymbol{v}$ lies in the orthogonal complement of $\mathrm{span} \{ \boldsymbol{m}_j  \mid j \in S\}$.
    
    \item If $i \notin \mathrm{span}_{\M}(S)$, then we have $\boldsymbol{v}^\top \boldsymbol{m}_i \neq 0$ with probability at least $1 - 1/|F|$. This can be shown as follows. Consider the multivariate polynomial 
    \begin{equation*}
        P(x_1, \ldots, x_k) = \sum_{j = 1}^{k} x_j \boldsymbol{b}_j^{\top} \boldsymbol{m}_i.
    \end{equation*}

    Here, to show that $P(x_1, \ldots, x_k)$ is a nonzero polynomial, we prove the following claim.\footnote{Over the real number field $\R$, the claim is immediate. However, over an arbitrary field $\F$, one cannot in general rely on $W \cap W^{\perp} = \{\boldsymbol{0}\}$ or $W + W^{\perp} = \mathbb{F}^r$, so some care is required. For completeness, we give a careful proof, which uses only the identity $(W^{\perp})^{\perp} = W$.}
    Recall that we are assuming $i \notin \mathrm{span}_{\M}(S)$.
    
    \begin{claim} \label{claim:hozyo}
        There exists some $j \in [k]$ such that $\boldsymbol{b}_j^{\top} \boldsymbol{m}_i \neq 0$.
    \end{claim}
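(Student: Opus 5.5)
We argue by contradiction and use only the identity $(W^{\perp})^{\perp} = W$ from the footnote above, where $W = \mathrm{span}\{\bm_j \mid j \in S\}$. Suppose that $\bb_j^\top \bm_i = 0$ for every $j \in [k]$. We want to conclude that $\bm_i \in W$, which contradicts the assumption $i \notin \mathrm{span}_{\M}(S)$.

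The first step is to translate the matroid condition into linear algebra. In the linear matroid represented by $M$, we have $\mathrm{rank}_{\M}(S \cup \{i\}) = \mathrm{rank}_{\M}(S)$ if and only if $\bm_i \in W$. Hence $i \notin \mathrm{span}_{\M}(S)$ is equivalent to $\bm_i \notin W$.

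The second step uses the basis. By Lemma~\ref{lem:orthogonal_complement_basis}, applied to $M[S]$, the vectors $\bb_1, \ldots, \bb_k$ form a basis of $\Wp$. We condition on the high-probability event that this basis was computed correctly. Any $\by \in \Wp$ can then be written as $\by = \sum_j c_j \bb_j$, and by linearity
\begin{equation*}
\by^\top \bm_i = \sum_j c_j \bb_j^\top \bm_i = 0 .
\end{equation*}
Since this holds for every $\by \in \Wp$, the definition of the orthogonal complement gives $\bm_i \in (\Wp)^{\perp}$.

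The third step applies the identity: $(\Wp)^\perp = W$, so $\bm_i \in W$. This is the desired contradiction, and the claim follows.

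No step is genuinely hard. The one point needing care is that we must not use $W \cap \Wp = \{\boldsymbol{0}\}$ or $W + \Wp = \F^r$, since these can fail over general fields. The argument above avoids them and relies only on the double-complement identity, which holds over any field by the rank--nullity argument.

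With the claim in hand, the rest of the lemma goes as follows. The polynomial $P$ is nonzero, linear, and has degree $1$. By Lemma~\ref{lem:schwartz_zippel}, a single element is misclassified with probability at most $1/|F| = O(1/n^2)$. A union bound over the $n$ elements gives overall failure probability $O(1/n)$, plus the failure probability of Lemma~\ref{lem:orthogonal_complement_basis}. Computing all inner products $\bv^\top \bm_i$ takes $O(\mathrm{nnz}(M))$ time. Together with the earlier steps, the total running time is $\tO(\mathrm{nnz}(M) + r^{\omega})$.
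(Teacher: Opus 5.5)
Your proof is correct and follows the paper's argument essentially verbatim. You assume $\bb_j^\top \bm_i = 0$ for all $j$, extend this by linearity to all of $W^{\perp}$, and invoke $(W^{\perp})^{\perp} = W$ to contradict $\bm_i \notin W$. Your explicit translation of $i \notin \mathrm{span}_{\M}(S)$ into $\bm_i \notin W$, and your conditioning on the correctness of the basis from Lemma~\ref{lem:orthogonal_complement_basis}, are small clarifications that the paper leaves implicit.
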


    \begin{proof}[Proof of Claim~\ref{claim:hozyo}]
        Let $W = \operatorname{span}\{ \boldsymbol{m}_j \mid j \in S \}$.
        Here, we have $\boldsymbol{m}_i \notin W$.
        Suppose, for the sake of contradiction, that $\boldsymbol{b}_j^{\top} \boldsymbol{m}_i = 0$ for all $j \in [k]$, where $\{ \boldsymbol{b}_1, \ldots, \boldsymbol{b}_k \}$ form a basis of $W^{\perp}$.
        Then, $\boldsymbol{m}_i$ is orthogonal to every vector in $W^{\perp}$, which implies that $\boldsymbol{m}_i \in (W^{\perp})^{\perp} = W$, contradicting the assumption $\boldsymbol{m}_i \notin W$.
        Therefore, there exists some $j \in [k]$ such that $\boldsymbol{b}_j^{\top} \boldsymbol{m}_i \neq 0$, which completes the proof. 
    \end{proof}
    By Claim~\ref{claim:hozyo}, $P(x_1, \ldots, x_k)$ is a nonzero polynomial of total degree $1$.
    Since $r_1, \ldots, r_k$ are chosen independently and uniformly at random from a finite subset $F \subseteq \F$, by the Schwartz--Zippel Lemma (Lemma \ref{lem:schwartz_zippel}), the probability that
    \begin{equation*}
        P(r_1, \ldots, r_k) = \sum_{j = 1}^{k} r_j \boldsymbol{b}_j^{\top} \boldsymbol{m}_i = \boldsymbol{v}^{\top} \boldsymbol{m}_i = 0
    \end{equation*}
    is at most $1 / |F|$.
\end{itemize}

Then, we only need to check, for each $i \in [n]$, whether $\boldsymbol{v}^{\top} \boldsymbol{m}_i = 0$.
The total time required to compute all inner products $\boldsymbol{v}^\top \boldsymbol{m}_i$ for all $i \in [n]$ is $O(\mathrm{nnz}(M))$. 
Finally, we output the set $\{ i \in [n] \mid \boldsymbol{v}^\top \boldsymbol{m}_i = 0 \}$ as $\mathrm{span}_{\M}(S)$.

There are at most $n$ operations that check whether $i \in \mathrm{span}_{\M}(S)$, and each fails with probability at most $1 / |F|$. 
Therefore, the probability that at least one of them fails is at most $n / |F| = O(1/n)$, where we recall that $|F| = \Theta(n^2)$.

In summary, we can compute $\mathrm{span}_{\M}(S)$ with high probability in $O((\mathrm{nnz}(M) + r^{\omega}) \log n)$ time. 
\end{proof}

\section{Adaptive Sparsification Framework of Quanrud} \label{sec:quanrud}

In this section, we describe the adaptive sparsification framework of Quanrud~\cite{quanrud:LIPIcs.ICALP.2024.118}, which was developed to obtain a fast $(1-\eps)$-approximation algorithm for matroid intersection in the independence oracle model.
Our linear matroid intersection algorithm builds upon this framework.

Quanrud's framework employs a multiplicative weight update method, inspired by Assadi's semi-streaming maximum matching algorithm~\cite{assadi2024simple}.
In this framework, we maintain a weight for each element and repeat the following $O(\log (n) / \varepsilon)$ iterations: In each iteration, we sample $\Theta(\min \{ k \log(n) / \varepsilon, n \})$ elements, in proportion to their weight, compute primal and dual solutions over the sampled elements, and then update the weights of elements.
In Quanrud's paper, $k$ is set to $r_*$, where $r_*$ denote the maximum size of a common independent set.\footnote{
When we apply Quanrud's framework to linear matroid intersection, we set $k = r$ instead of $k = r_*$.
}

\subsection{Maximum Cardinality Matroid Intersection}

In this subsection, we describe Quanrud's framework for maximum cardinality matroid intersection.
The pseudocode is shown in Algorithm \ref{alg:quanrud_cardinality}.

\begin{algorithm}[t]
    Let $w(e) \gets 1$ for all $e \in V$. \\
    \For(\tcp*[f]{$L = O(\log (n) / \eps)$}){$\ell = 1$ to $L$ } {
        Sample a set $V' \subseteq V$ of $\Theta(\min\{ k \log(n) / \eps, n \})$ elements, where each element $e \in V$ is chosen with probability proportional to $w(e)$. \label{linec:sample} \\
        Compute a maximum common independent set $I^{(\ell)}$ and a dual solution $(\tilde{S}^{\ell}, \tilde{T}^{\ell})$ for the subproblem over $V'$. \label{linec:small_computation} \\
        Let $S^{(\ell)} = \mathrm{span}_{\M_1}(\tilde{S}^{(\ell)})$ and $T^{(\ell)} = \mathrm{span}_{\M_2}(\tilde{T}^{(\ell)})$. \label{linec:span_computation} \\
        For each element $e \in S^{(\ell)} \cup T^{(\ell)}$, set $w(e) \gets w(e) / 2$. \label{linec:update_weights}
    }
    Output $I^{(\ell)}$ for some iteration $\ell \in [L]$ for which $|I^{(\ell)}|$ is maximized over all iterations.
    \caption{Quanrud's maximum cardinality matroid intersection algorithm in the independence oracle model (from {\cite[Figure 1]{quanrud:LIPIcs.ICALP.2024.118}})} \label{alg:quanrud_cardinality}
\end{algorithm}

We maintain a weight function $w : V \to \mathbb{R}_{\geq 0}$ over the elements, where all elements initially have equal weights.
We then repeat the following process for $L = O(\log(n) / \eps)$ iterations: 
In each iteration, we first sample $\tilde{O}(k / \eps)$ elements $V' \subseteq V$ in proportion to $w$.
Then, we compute a maximum common independent set $I$ and a dual solution $(\tilde{S}, \tilde{T})$ for the subproblem over $V'$.\footnote{In Quanrud's paper, a $(1 - \varepsilon)$-approximate solution is computed in Line~\ref{linec:small_computation} of Algorithm \ref{alg:quanrud_cardinality}. However, when applying Quanrud's framework to linear matroid intersection, we compute an exact solution. It is clear that this does not affect the correctness.}
Note that $\tilde{S}, \tilde{T} \subseteq V'$.
We extend these sets to their spans, setting $S = \mathrm{span}_{\M_1}(\tilde{S})$ and $T = \mathrm{span}_{\M_2}(\tilde{T})$.
Finally, we update the weights $w$ by setting $w(e) \gets w(e) / 2$ for all elements $e \in S \cup T$.
After $L$ iterations, we output $I$ with the largest cardinality among all iterations.

When applying Quanrud's framework to linear matroid intersection, we use Harvey's linear matroid intersection algorithm~(Theorem \ref{thm:Harvey}) to implement Line~\ref{linec:small_computation} of Algorithm \ref{alg:quanrud_cardinality}, and our fast span computation algorithm presented in Section \ref{thm:span_computation} to implement Line~\ref{linec:span_computation}.
See Section \ref{subsec:our_cardinality_linear_mi} for details.

\subsection{Maximum Weight Matroid Intersection}

In this subsection, we describe Quanrud's framework for maximum weight matroid intersection.
The high-level approach in the weighted case is similar to that in the unweighted case, but requires additional technical sophistication.
The pseudocode is shown in Algorithm \ref{alg:quanrud_weighted}.

\begin{algorithm}[t]
    Let $w(e) \gets 1$ for all $e \in V$. \\
    \For(\tcp*[f]{$L = O(\log (n) / \eps)$}){$\ell = 1$ to $L$ } {
        Sample a set $V' \subseteq V$ of $\Theta(\min\{ k \log(n) / \eps, n \})$ elements, where each element $e \in V$ is chosen with probability proportional to $w(e)$. \label{linew:sample} \\
        Compute a $(1 - \eps)$-approximate maximum-weight common independent set $I^{(\ell)}$ and a compact $(1 - \eps)$-approximate dual solution $\tilde{y}^{(\ell)}, \tilde{z}^{(\ell)} : 2^{V'} \to \mathbb{R}_{\geq 0}$ for the subproblem over $V'$. \label{linew:small_computation} \\
        Define $y^{(\ell)}, z^{(\ell)} : 2^{V} \to \mathbb{R}_{\geq 0}$ by 
        \begin{align*}
            y^{(\ell)}(\mathrm{span}_{\M_1}(S')) &= \lceil \tilde{y}^{(\ell)}(S') \rceil_{1 + \varepsilon} \quad \quad \text{for all } S' \in \mathrm{support}(\tilde{y}^{(\ell)}), \\
            z^{(\ell)}(\mathrm{span}_{\M_2}(T')) &= \lceil \tilde{z}^{(\ell)}(T') \rceil_{1 + \varepsilon} \quad \quad \text{for all } T' \in \mathrm{support}(\tilde{z}^{(\ell)}),
        \end{align*} 
        where $\lceil x \rceil_{1 + \varepsilon}$ = $(1 + \eps)^{\lceil \log_{1 + \eps} x \rceil}$ rounds $x$ up to the nearest power of $1 + \eps$. \label{linew:span_computation}\\
        For each element $e \in V$, if $\displaystyle \sum_{S \ni e} (y(S) + z(S)) \geq c(e)$ holds, then set $w(e) \gets w(e) / 2$. \label{linew:update_weights}
    }
    Output $I^{(\ell)}$ for some iteration $\ell \in [L]$ for which $c(I^{(\ell)})$ is maximized over all iterations.
    \caption{Quanrud's maximum weight matroid intersection algorithm in the independence oracle model (from {\cite[Figure 2]{quanrud:LIPIcs.ICALP.2024.118}})} \label{alg:quanrud_weighted}
\end{algorithm}


Here, we refer to a solution $(y, z)$, where $y, z : 2^V \to \mathbb{R}_{\geq 0}$, of the following LP as a dual solution.
This LP is the dual of the LP relaxation of the weighted matroid intersection problem.

\begin{equation} \label{equ:dual_lp}
\begin{aligned} 
    \mathrm{minimize} & \,\, \sum_{S \subseteq V} \left(\mathrm{rank}_{\M_1}(S) y(S) + \mathrm{rank}_{\M_2}(S) z(S) \right) \quad \mathrm{over} \,\,\, y, z : 2^V \to \mathbb{R}_{\geq 0} \\
    \mathrm{s.t.} & \,\, \sum_{S \ni e} (y(S) + z(S)) \geq c(e) \quad \mathrm{for} \,\, \mathrm{all} \,\, e \in V 
\end{aligned}
\end{equation}

Quanrud defines $(y, z)$ to be {\em compact} if the supports of $y$ and $z$ satisfy the following conditions:
\begin{align*}
    \mathrm{support}(y) \subseteq& \{ \mathrm{span}_{\M_1}(\{e_1\}), \mathrm{span}_{\M_1}(\{e_1, e_2\}), \ldots, \mathrm{span}_{\M_1}(\{e_1, \ldots, e_k\}) \}, \\
    \mathrm{support}(z) \subseteq& \{ \mathrm{span}_{\M_2}(\{f_1\}), \mathrm{span}_{\M_2}(\{f_1, f_2\}), \ldots, \mathrm{span}_{\M_2}(\{f_1, \ldots, f_k\}) \}.
\end{align*}
Here, $e_1, \ldots, e_k \in V$ and $f_1, \ldots, f_k \in V$ are two sequences of $k$ elements.

As in the unweighted case, we maintain a weight function $w : V \to \mathbb{R}_{\geq 0}$ with initially equal weights, and repeat the following process for $L = O(\log(n) / \eps)$ iterations:
In each iteration, we first sample $\tilde{O}(k / \eps)$ elements $V' \subseteq V$ in proportion to $w$.
Then, we compute a $(1 - \eps)$-approximate maximum-weight common independent set $I$ and a compact $(1 - \eps)$-approximate dual solution $\tilde{y}, \tilde{z} : 2^{V'} \to \mathbb{R}_{\geq 0}$ for the subproblem over $V'$.
We then construct $y, z : 2^V \to \mathbb{R}_{\geq 0}$ from $\tilde{y}, \tilde{z}$ by setting
\begin{align*}
    y(\mathrm{span}_{\M_1}(S')) &= \lceil \tilde{y}(S') \rceil_{1 + \varepsilon} \quad \quad \text{for all } S' \in \mathrm{support}(\tilde{y}), \\
    z(\mathrm{span}_{\M_2}(T')) &= \lceil \tilde{z}(T') \rceil_{1 + \varepsilon} \quad \quad \text{for all } T' \in \mathrm{support}(\tilde{z}).
\end{align*} 
We set $y(F) = 0$ (resp. $z(F) = 0$) for any $F \in 2^V$ not of the form $\mathrm{span}_{\M_1}(S')$ with $S' \in \mathrm{support}(\tilde{y})$ (resp. $\mathrm{span}_{\M_2}(T')$ with $T' \in \mathrm{support}(\tilde{z})$).
Here, $\lceil x \rceil_{1 + \varepsilon}$ = $(1 + \eps)^{\lceil \log_{1 + \eps} x \rceil}$ denote rounding $x$ up to the nearest power of $1 + \eps$.
Since $(y', z')$ is compact, there exists a sequence $e_1, \ldots, e_k$ such that every set in the support of $y'$ has the form $S'_i = (\mathrm{span}_{\M_1}(\{ e_1, \ldots , e_i \})) \cap V'$.
Then, every set in the support of $y$ has the form $S_i = \mathrm{span}_{\M_1}(S'_i) = \mathrm{span}_{\M_1}(\{ e_1, \ldots , e_i \})$.
Symmetrically, the same property holds for $z$, with a different sequence of $k$ elements determined by $z'$.
Thus, $(y, z)$ is also compact.
Finally, we update the weights $w$ by setting $w(e) \gets w(e) / 2$ for all elements $e \in V$ such that $\sum_{S \ni e} (y(S) + z(S)) \geq c(e)$ holds.
After $L$ iterations, we output $I$ that maximizes $c(I)$ over all iterations.

When applying Quanrud's framework to maximum weight linear matroid intersection, we use the $(1-\eps)$-approximate weighted linear matroid intersection algorithm by Huang--Kakimura--Kamiyama~\cite{huang2016exact} to implement Line~\ref{linew:small_computation} of Algorithm~\ref{alg:quanrud_weighted}.
To implement Line~\ref{linew:update_weights}, we combine our fast span computation algorithm presented in Section~\ref{thm:span_computation} with the fast Gram--Schmidt orthonormalization algorithm of van den Brand~\cite{van2021unifying} (Lemma~\ref{lem:vandenbrand_qr_decomp}).
See Section \ref{subsec:our_weight_linear_mi} for details.

\section{Linear Matroid Intersection Algorithms} \label{sec:proof_of_main_thm}

In this section, we prove Theorems \ref{thm:cardinality_main_thm} and \ref{thm:weighted_main_thm} by combining the sparsification framework of Quanrud \cite{quanrud:LIPIcs.ICALP.2024.118} (see Section~\ref{sec:quanrud} for details) with our fast span computation algorithm presented in Section \ref{thm:span_computation}.

When we apply Quanrud's framework to linear matroid intersection, we set $k = r$ instead of $k = r_*$.
In the proofs of Theorems \ref{thm:cardinality_main_thm} and \ref{thm:weighted_main_thm}, we apply Quanrud's framework to matrices that have been suitably preprocessed using the sketching technique of Cheung--Kwok--Lau~\cite{cheung2013fast} (Lemma~\ref{lem:sketching_tech}).
For the preprocessed matrices, the maximum size of a common independent set is within a constant factor of the number of rows, which justifies setting $k = r$.\footnote{
     Actually, sampling $\Theta(\min\{r \log (n) / \eps, n \})$ elements, instead of $\Theta(\min\{ r_* \log (n) / \eps, n \})$, in each iteration does not affect the correctness of Quanrud's framework.
     This is because the approximation guarantee relies on Lemma 9 (and Lemma 13 in the weighted case) in Quanrud's paper {\cite{quanrud:LIPIcs.ICALP.2024.118}}, which states that sampling $\Omega(\min\{ r_* \log (n) / \eps, n \})$ elements is sufficient.
}

\subsection{Maximum Cardinality Linear Matroid Intersection} \label{subsec:our_cardinality_linear_mi}

We first show the following theorem, and then prove Theorem \ref{thm:cardinality_main_thm} using this theorem.

\begin{theorem} \label{thm:cardinality_little_weak_thm}
    For any $\eps > 0$, a $(1 - \eps)$-approximate solution to the maximum cardinality linear matroid intersection problem can be computed with high probability in $\tilde{O}_{\eps}(\mathrm{nnz}(M_1) + \mathrm{nnz}(M_2) + r^{\omega})$ time.
\end{theorem}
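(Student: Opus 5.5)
We instantiate Quanrud's framework (Algorithm~\ref{alg:quanrud_cardinality}) with $k = r$ and show that each of the $L = O(\log(n)/\eps)$ iterations can be implemented in $\tilde{O}_{\eps}(\mathrm{nnz}(M_1) + \mathrm{nnz}(M_2) + r^{\omega})$ time. Correctness, namely that the best $I^{(\ell)}$ is a $(1-\eps)$-approximation with high probability, is inherited from Quanrud's analysis. Two facts make this inheritance legitimate. First, sampling $\Theta(\min\{r\log(n)/\eps, n\})$ elements is at least the $\Theta(\min\{r_*\log(n)/\eps,n\})$ his Lemma~9 requires, because $r_* \le r$. Second, computing exact primal and dual solutions in Line~\ref{linec:small_computation}, instead of approximate ones, only strengthens the guarantee. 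A union bound over the $O(\log(n)/\eps)$ iterations and the randomized subroutines keeps the overall success probability high.

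\textbf{Line~\ref{linec:sample}.} We keep the weights in a data structure that supports sampling proportional to $w$ in $O(\log n)$ time per sample, such as a balanced tree of prefix sums. We then draw $\tilde{O}(r/\eps)$ samples, which costs $\tilde{O}_{\eps}(r)$. Alternatively, we compute prefix sums in $O(n)$ time per iteration, which is within $\mathrm{nnz}$ once zero columns are discarded.

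\textbf{Line~\ref{linec:small_computation}.} We restrict $M_1$ and $M_2$ to the $|V'| = \tilde{O}_{\eps}(r)$ sampled columns, obtaining $r \times \tilde{O}_{\eps}(r)$ matrices. We then run Harvey's algorithm (Theorem~\ref{thm:Harvey}) to obtain a maximum common independent set and an exact dual solution $(\tilde{S}^{(\ell)}, \tilde{T}^{(\ell)})$. This takes $\tilde{O}(|V'| r^{\omega-1}) = \tilde{O}_{\eps}(r^{\omega})$ time.

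\textbf{Line~\ref{linec:span_computation}.} We apply Lemma~\ref{lem:span_computation} to $(M_1, \tilde{S}^{(\ell)})$ and to $(M_2, \tilde{T}^{(\ell)})$, computing $S^{(\ell)}$ and $T^{(\ell)}$ in $\tilde{O}(\mathrm{nnz}(M_1) + \mathrm{nnz}(M_2) + r^{\omega})$ time.

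\textbf{Line~\ref{linec:update_weights}.} The weight halving costs $O(n)$ time, or $\tilde{O}(n)$ with the tree structure. Here we assume without loss of generality that every column is nonzero in both matrices, so that $n \le \mathrm{nnz}(M_i)$. A column that is zero in $M_1$ or $M_2$ is a loop and can never belong to a common independent set, so it can be deleted in $O(\mathrm{nnz})$ time.

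Summing over $L = O(\log(n)/\eps)$ iterations gives the claimed total bound. I expect the main obstacle to be the bookkeeping of the correctness transfer rather than the running time. Specifically, one must check that Quanrud's analysis only uses the dual pair $(S^{(\ell)}, T^{(\ell)})$ through the property $S^{(\ell)} \cup T^{(\ell)} \supseteq V'$ with $\mathrm{rank}_{\M_1}(S^{(\ell)}) + \mathrm{rank}_{\M_2}(T^{(\ell)}) = |I^{(\ell)}|$. This holds because span preserves rank, so an exact span computation, which succeeds with high probability by Lemma~\ref{lem:span_computation}, is exactly what the framework needs. One must also make sure that all failure probabilities are polynomially small, using $|\F| = \Omega(\mathrm{poly}(n))$ from Lemma~\ref{lem:matrix_elem_assumption}.
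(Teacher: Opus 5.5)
Your proposal is correct and follows essentially the same route as the paper. Both instantiate Quanrud's Algorithm~\ref{alg:quanrud_cardinality} with $k=r$, run Harvey's algorithm on the $\tilde{O}_{\eps}(r)$ sampled columns for Line~\ref{linec:small_computation}, and apply Lemma~\ref{lem:span_computation} for Line~\ref{linec:span_computation}; sampling and weight updates cost $\tilde{O}_{\eps}(n)$, and there are $O(\log(n)/\eps)$ iterations. Your deletion of zero columns (loops), which guarantees $n \le \nnz(M_i)$ and lets the $\tilde{O}_{\eps}(n)$ term be absorbed, is a small detail the paper leaves implicit.
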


\begin{proof}
We show that, when the matroids are given by matrix representations, Quanrud's maximum cardinality matroid intersection algorithm (Algorithm \ref{alg:quanrud_cardinality}) can be implemented in $\tilde{O}_{\eps}(\mathrm{nnz}(M_1) + \mathrm{nnz}(M_2) + r^{\omega})$ time.
By using Harvey's exact algorithm for linear matroid intersection (Theorem \ref{thm:Harvey}), Line \ref{linec:small_computation} can be implemented in $\tilde{O}_{\eps}(r^{\omega})$ time, since the number of elements in the sparsified set is $\tilde{O}_{\eps}(r)$.
By applying Lemma \ref{lem:span_computation}, Line \ref{linec:span_computation} can be implemented in $\tilde{O}(\mathrm{nnz}(M_1) + \mathrm{nnz}(M_2) + r^{\omega})$ time.
Furthermore, Lines \ref{linec:sample} and \ref{linec:update_weights} can be implemented in $\tilde{O}_{\eps}(n)$ time. 
Since the number of iterations of the for loop is $O(\log(n) / \eps)$, the total running time is the desired bound of $\tilde{O}_{\eps}(\mathrm{nnz}(M_1) + \mathrm{nnz}(M_2) + r^{\omega})$, which completes the proof. 
\end{proof}

Following the approach of \cite[Section 4.3.2]{cheung2013fast}, we can improve the running time to $\tilde{O}_{\eps}(\mathrm{nnz}(M_1) + \mathrm{nnz}(M_2) + r_*^{\omega})$, which proves Theorem~\ref{thm:cardinality_main_thm}.

\begin{proof}[Proof of Theorem {\ref{thm:cardinality_main_thm}}]
We assume without loss of generality that $\eps < 1 / 2$.

Given an integer $\ell$, we consider the task of finding a set $I' \subseteq [n]$ of size at least $\min\{ (1 - \varepsilon)\ell, (1 - \varepsilon) \cdot r_* \}$ such that the columns indexed by $I'$ are linearly independent in both $M_1$ and $M_2$.
Using Lemma~\ref{lem:sketching_tech} with $k = \ell$, we compress the input matrices $M_1$ and $M_2$ into $O(\ell) \times n$ matrices $M_1'$ and $M_2'$, in $O(\mathrm{nnz}(M_1))$ and $O(\mathrm{nnz}(M_2))$ time, respectively.
We have $\mathrm{nnz}(M_1') = O(\mathrm{nnz}(M_1))$ and $\mathrm{nnz}(M_2') = O(\mathrm{nnz}(M_2))$.
Furthermore, for any set $S \subseteq [n]$ of size at most $\ell$, if the columns indexed by $S$ are linearly independent in both $M_1$ and $M_2$, then the columns indexed by $S$ are also linearly independent in both $M_1'$ and $M_2'$ with probability at least $1 - O(1/n^{1/3})$.
Consequently, the maximum size of a common independent set for $M'_1$ and $M'_2$ is $\min\{ \ell, r_* \}$ with high probability.
Then, we apply the algorithm in Theorem \ref{thm:cardinality_little_weak_thm} to find a $(1 - \eps)$-approximate solution to the linear matroid intersection problem on $M_1'$ and $M_2'$, which runs in $\tilde{O}_{\eps}(\mathrm{nnz}(M_1') + \mathrm{nnz}(M_2') + \ell^{\omega}) = \tilde{O}_{\eps}(\mathrm{nnz}(M_1) + \mathrm{nnz}(M_2) + \ell^{\omega})$ time.

To find a solution of size at least $(1 - \varepsilon) \cdot r_*$, we set $\ell = 2, 4, 8, \ldots$ and apply the above algorithm until there is no solution of size $\ell / 2$.
The total running time of this algorithm is $\tilde{O}_{\eps}(\mathrm{nnz}(M_1) + \mathrm{nnz}(M_2) + r_*^{\omega})$, which completes the proof. 
\end{proof}

\subsection{Maximum Weight Linear Matroid Intersection} \label{subsec:our_weight_linear_mi}

We first show the following theorem, and then prove Theorem \ref{thm:weighted_main_thm} using this theorem.

\begin{theorem}  \label{thm:weighted_main_week_thm}
    For any $\eps > 0$, a $(1 - \eps)$-approximate solution to the maximum weight linear matroid intersection problem can be computed with high probability in $\tilde{O}_{\eps}(\mathrm{nnz}(M_1) + \mathrm{nnz}(M_2) + r^{\omega})$ time.
\end{theorem}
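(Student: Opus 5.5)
We will implement Quanrud's weighted framework (Algorithm~\ref{alg:quanrud_weighted}) with $k = r$ and show that each of its $L = O(\log(n)/\eps)$ iterations costs $\tilde{O}_{\eps}(\mathrm{nnz}(M_1)+\mathrm{nnz}(M_2)+r^{\omega})$. The guarantee then follows from Quanrud's analysis, as in the unweighted case.

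Lines~\ref{linew:sample} and the final output take $\tilde{O}_{\eps}(n)$ time per iteration. For Line~\ref{linew:small_computation} we run the Huang--Kakimura--Kamiyama algorithm on the sampled instance $V'$ with $|V'| = \tilde{O}_{\eps}(r)$, which costs $\tilde{O}_{\eps}(r^{\omega})$. We also need it to output a \emph{compact} approximate dual, i.e., chains $e_1,\dots,e_k$ and $f_1,\dots,f_k$ in $V'$ whose prefix spans carry the support of $\tilde y,\tilde z$. Quanrud already requires such a dual, and the weight-splitting/scaling algorithm of HKK produces nested (chain-supported) duals, so this step should be available with polylog overhead.

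The main work is Line~\ref{linew:update_weights}. For every $e\in V$ we must compute $\sum_{i} y(\mathrm{span}_{\M_1}\{e_1,\dots,e_i\}) [e\in \mathrm{span}_{\M_1}\{e_1,\dots,e_i\}]$, and similarly for $z$. Since the spans are nested, it suffices to find, for each column $\bm_e$ of $M_1$, the smallest index $i(e)$ with $\bm_e\in W_i := \mathrm{span}\{\bm_{e_1},\dots,\bm_{e_i}\}$. The contribution is then the suffix sum $\sum_{i\ge i(e)} y_i$, which is precomputable in $O(r)$ time.

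Computing $i(e)$ is the obstacle: applying Lemma~\ref{lem:span_computation} once per $i$ costs $r$ times too much. Instead we run the orthogonalization of Lemma~\ref{lem:vandenbrand_qr_decomp} on $\bm_{e_1},\dots,\bm_{e_k}$, taking $O(r^{\omega})$ time. For general fields we use its variant with respect to $\langle \bv,\bw\rangle=\bv^\top B\bw$ with random diagonal $B$, so that $W\cap W_B^\perp=\{\boldsymbol 0\}$ holds with high probability for the $O(r)$ relevant subspaces. This yields vectors $\bq_1,\dots,\bq_d$ with $W_i=\mathrm{span}\{\bq_1,\dots,\bq_{d(i)}\}$. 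We extend them by a basis of $W_k^{\perp_B}$, obtained via Lemma~\ref{lem:orthogonal_complement_basis}, to a $B$-orthogonal basis $\bq_1,\dots,\bq_r$ of $\F^r$ in which each $W_i$ is a coordinate prefix.

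A vector $\bm$ lies in $W_i$ iff its coordinates $\langle \bq_j,\bm\rangle$ vanish for all $j>d(i)$. We test this randomly as in Lemma~\ref{lem:span_computation}, without computing all $r\times n$ inner products. Use $O(\log r)$ random combinations organized as a binary search, or more simply, for each dyadic level $t$ form vectors $\bv_{t,s}=\sum_{j\in \text{block}} \rho_j \bq_j$ over dyadic suffix blocks. For each $e$, the largest $j$ with nonzero coordinate is found by binary search over suffixes $\{j>m\}$. Each test $\sum_{j>m}\rho_j\langle\bq_j,\bm_e\rangle\ne 0$ is correct with high probability by Schwartz--Zippel.

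To keep the cost near $\mathrm{nnz}$, we precompute for each of the $O(\log r)$ levels the $O(r)$ random suffix vectors, stacked as an $O(r)\times r$ matrix $Y$. However, $Y^\top B M_1$ costs $r\cdot\mathrm{nnz}$, which is too much. The fix is to make the binary-search vectors shared: at step $t$, the candidate suffix for $e$ depends on earlier answers, so there are $2^t$ distinct vectors at step $t$, and $\sum_t 2^t=O(r)$ vectors in total. Naively every column meets only $O(\log r)$ of them, but computing all products column-by-column costs $O(\mathrm{nnz}(\bm_e)\log r)$ per column. That is fine: each needed inner product $\bv^\top B\bm_e$ costs $O(\mathrm{nnz}(\bm_e))$, giving $O(\mathrm{nnz}(M_1)\log r)$ overall.

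The random suffix vectors $\bv_m=\sum_{j>m}\rho_j\bq_j$ for all $m$ are suffix sums, computable in $O(r^2)$. We union-bound over $O(n\log r)$ tests using $|F|=\mathrm{poly}(n)$. Doing the same for $M_2$ and $z$ gives Line~\ref{linew:update_weights} in $\tilde O(\mathrm{nnz}(M_1)+\mathrm{nnz}(M_2)+r^{\omega})$, completing the proof.
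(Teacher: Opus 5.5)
Your proposal is correct and is essentially the paper's proof. It uses the same ingredients: Quanrud's weighted framework with $k=r$; Lemma~\ref{lem:weighted_sparsified_instance} on the sparsified instance; suffix sums of the dual values; and a per-column binary search with shared-coefficient suffix-sum test vectors $\bv_\alpha$ built from a (bilinear-form) van den Brand orthogonalization, exactly as in Lemma~\ref{lem:weighted_dynamic_fast_span_computation}. Compactness in Lemma~\ref{lem:weighted_sparsified_instance} comes from taking level sets of the HKK weight splitting $c_1,c_2$ on $I$ (Lemma~\ref{lem:extract_compact_sol}), not from HKK directly producing chain-supported duals.

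The differences are cosmetic. You complete the basis with a basis of $W_k^{\perp_B}$ from Lemma~\ref{lem:orthogonal_complement_basis}, which must be applied to $B M_1[\cdot]$ since that lemma uses the standard inner product; the paper instead appends $\be_1,\ldots,\be_r$ to the orthogonalization sequence. Your detour through dyadic blocks and the matrix $Y$ is unnecessary, because suffix sums with one shared random coefficient vector already suffice.
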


In Quanrud's framework, we need to compute a $(1 - \eps)$-approximate maximum-weight common independent set and a compact $(1 + \eps)$-approximate dual solution for the sparsified instances.
When implementing Quanrud's framework for linear matroids, we use the following lemma to compute compact primal-dual solutions.
This lemma is obtained by combining the $(1- \eps)$-approximation algorithm for the weighted matroid intersection by Huang--Kakimura--Kamiyama~\cite{huang2016exact} with the argument of Quanrud's paper~\cite{quanrud:LIPIcs.ICALP.2024.118}.
See Appendix~\ref{appendix:weighted_sparsified_instance} for the proof of Lemma~\ref{lem:weighted_sparsified_instance}.

\begin{lemma} \label{lem:weighted_sparsified_instance}
    Given two $r \times n$ matrix $M_1$ and $M_2$ over a field $\F$, and a weight function $c : [n] \to \mathbb{R}_{\geq 0}$, there is an algorithm that computes a $(1 - \eps)$-approximate maximum-weight common independent set $I$, along with a compact $(1 + \eps)$-approximate dual solution $y, z : 2^{[n]} \to \R_{\geq 0}$, with high probability in $\TO(nr^{\omega - 1})$ time.
\end{lemma}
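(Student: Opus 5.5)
Our plan is to run the Huang--Kakimura--Kamiyama scaling algorithm on the given instance and then rewrite its dual as a compact one. The HKK algorithm is a weight-scaling primal--dual method: it rounds the weights to powers of $1+\eps$ and processes $O(\log(n)/\eps)$ scales. At each scale it maintains a common independent set $I$ and dual potentials. Each scale reduces to a small number of unweighted (cardinality) linear matroid intersection computations on restricted or contracted matrices, which Harvey's algorithm (Theorem~\ref{thm:Harvey}) solves in $\tilde{O}(nr^{\omega-1})$ time. Combining this with the sketching of Lemma~\ref{lem:sketching_tech}, we expect the HKK guarantee of $\TO(nr^{\omega-1})$ time: the algorithm returns a $(1-\eps)$-approximate maximum-weight common independent set $I$ together with a weight splitting $c \le c_1 + c_2$ (up to a $(1+\eps)$ factor) such that $I$ is an approximately maximum-$c_1$-weight independent set in $\M_1$ and an approximately maximum-$c_2$-weight independent set in $\M_2$. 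We take this guarantee from \cite{huang2016exact} essentially as a black box. The only new output we need is the splitting $(c_1, c_2)$, which the algorithm already maintains.

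Next we convert the splitting into a compact dual, following Quanrud's argument. Fix $\M_1$ and sort the elements by $c_1$ in nonincreasing order. The standard greedy (Edmonds) dual for the maximum-$c_1$-weight independent set problem is supported on the chain of spans of prefixes. We take $e_1, e_2, \ldots$ to be the elements that the greedy algorithm adds to the base, in order. We set
\[
y(\mathrm{span}_{\M_1}(\{e_1,\dots,e_i\})) = c_1(e_i) - c_1(e_{i+1}) \ge 0 ,
\]
with $c_1(e_{k+1}) := 0$. By telescoping, every element $e$ in the span of $\{e_1,\ldots,e_i\}$ but not of $\{e_1,\ldots,e_{i-1}\}$ receives coverage $c_1(e_i) \ge c_1(e)$. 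The objective value equals the $c_1$-weight of the greedy base. We build $z$ from $c_2$ and $\M_2$ in the same way. The dual objective then equals the sum of the max-$c_1$ and max-$c_2$ base weights, which HKK's approximate optimality conditions bound by $(1+O(\eps))\,c(I)$. Rescaling $\eps$ gives a compact $(1+\eps)$-approximate dual.

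For the computation, we need a maximum-$c_j$-weight base of each linear matroid and the greedy order. This is a column-rank-profile computation after sorting the columns, which takes $O(nr^{\omega-1})$ time by \cite{nguyen2019fast, dumas2017fast}, or by the simple block elimination that contracts blocks of $r$ columns. The dual itself is output implicitly as the sequence $e_1,\ldots,e_k$ together with the values; the spans are never formed explicitly.

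We expect the main obstacle to be the first step: verifying that HKK's output really yields the approximate weight-splitting certificate, i.e.\ that approximate primal--dual complementary slackness holds with the right $(1+\eps)$ losses. This is also the reason the lemma needs the \emph{compact} form rather than HKK's native dual. Our first attempt would be to read off the potentials at the final scale and absorb the rounding error in $\eps$.
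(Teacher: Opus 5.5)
Your proposal is correct and follows the paper's overall route. You treat Huang--Kakimura--Kamiyama as a black box that returns $I$ with a weight splitting $(c_1,c_2)$, and then convert that splitting into a compact dual. The paper states the needed guarantees explicitly as (a) $c\le c_1+c_2$, (b) $c_1(I)+c_2(I)\le(1+\eps)c(I)$, and (c),(d) $I$ is $c_i$-maximum in $\M_i$. It cites HKK's proofs of Theorem~3 and Lemma~10 for these rather than re-deriving them, so your ``main obstacle'' is handled the same way.

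The one real difference is the chain. The paper, following Quanrud's Lemma~16, uses $Y_t=\mathrm{span}_{\M_1}(\{e\in I\mid c_1(e)\ge t\})$, i.e.\ prefixes of $I$ itself sorted by $c_1$. Feasibility there relies on the exact $c_1$-maximality (c) of $I$, and the value $c_1(I)+c_2(I)$ comes for free with no extra computation. You instead build Edmonds' greedy dual on a separately computed maximum-$c_1$ base. That dual is feasible unconditionally, and its value $\max_{J\in\I_1}c_1(J)+\max_{J\in\I_2}c_2(J)$ equals $c_1(I)+c_2(I)$ under (c),(d), so both duals have the same objective.

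Your variant costs an extra column-rank-profile computation, which is still $O(nr^{\omega-1})$. In return it is more robust: it would still work if HKK only certified approximate $c_i$-maximality. You do still need the bound $c_1(I)+c_2(I)\le(1+\eps)c(I)$, i.e.\ property (b). Your write-up invokes this only implicitly as ``approximate optimality conditions,'' and it should be stated explicitly.
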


To efficiently implement Line~\ref{linew:update_weights} of Algorithm \ref{alg:quanrud_weighted}, we present the following lemma.
This lemma is obtained by combining our fast span computation algorithm described in Section~\ref{thm:span_computation} with the fast Gram--Schmidt orthonormalization algorithm of van den Brand (Lemma~\ref{lem:vandenbrand_qr_decomp}).

\begin{lemma}  \label{lem:weighted_dynamic_fast_span_computation}
    Let $M$ be an $r \times n$ matrix over a field $\F$, and let $\{ i_1, \ldots, i_{r} \} \subseteq [n]$ be a set of indices.
    Here, $\boldsymbol{m}_1, \ldots, \boldsymbol{m}_n$ denote the vectors corresponding to the columns of $M$.
    Then, there is an algorithm that answers the following query: given $j \in [n]$ and $\alpha \in [r]$, determine whether $\boldsymbol{m}_j \in \mathrm{span} \{ \boldsymbol{m}_{i_\beta} \mid \beta \leq \alpha \}$ with high probability in $O(\nnz(\boldsymbol{m}_j))$ time.
    This algorithm requires $O(r^\omega)$ time for preprocessing.
\end{lemma}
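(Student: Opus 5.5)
We prove Lemma~\ref{lem:weighted_dynamic_fast_span_computation} with a nested family of orthogonal bases. The goal is a single $r \times r$ matrix $Q$ such that, for each $\alpha$, a fixed contiguous block of its columns spans the orthogonal complement of $W_\alpha = \mathrm{span}\{\bm_{i_\beta} \mid \beta \le \alpha\}$. Each query then reduces to the Freivalds-type test of Lemma~\ref{lem:span_computation}, using one random vector per prefix.

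We work with the bilinear form $\langle \bv, \bw\rangle = \bv^\top B \bw$, where $B$ is a random diagonal matrix. Over $\R$ we may take $B = I$ and use Lemma~\ref{lem:vandenbrand_qr_decomp}; over a general field we use the modified orthogonalization of Section~\ref{appendix:general_orthogonalization}. The plan has four steps.

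\textbf{Orthogonalize the prefix columns.} Apply the orthogonalization to the sequence $\bm_{i_1}, \ldots, \bm_{i_r}$ followed by the standard basis vectors $\be_1, \ldots, \be_r$. This is $2r$ vectors in $\F^r$, so it takes $O(r^\omega)$ time. The output is mutually $B$-orthogonal nonzero vectors $\bq_1, \bq_2, \ldots$ with $\langle \bq_a, \bq_a\rangle \neq 0$, arranged so that a prefix $\bq_1, \ldots, \bq_{d_\alpha}$ spans $W_\alpha$ for every $\alpha$. Here $d_\alpha = \dim W_\alpha$, and the $d_\alpha$ can be tracked during the process. Completing with the $\be_t$ yields a full $B$-orthogonal basis $\bq_1, \ldots, \bq_r$ of $\F^r$.

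\textbf{Identify the complements.} With such a basis, $\bx = \sum_a c_a \bq_a$ satisfies $\langle \bq_a, \bx\rangle = c_a \langle \bq_a, \bq_a\rangle$. Hence $\bx \in W_\alpha$ if and only if $\langle \bq_a, \bx\rangle = 0$ for all $a > d_\alpha$. In other words, $\{\bq_a\}_{a > d_\alpha}$ is a basis of the $B$-orthogonal complement of $W_\alpha$, and these complements are nested suffixes.

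\textbf{Choose random vectors per prefix.} Draw $s_{d+1}, \ldots, s_r$ independently from a set $F$ with $|F| = \Theta(n^3)$. Define suffix sums $\bv_d = \sum_{a > d} s_a \bq_a$; all of them can be computed in $O(r^2)$ time by accumulating backwards. Then precompute $\bu_d = B\bv_d$ for each $d$, also in $O(r^2)$ total, and store $d_\alpha$ for each $\alpha$.

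\textbf{Answer queries.} For a query $(j, \alpha)$, output ``yes'' if and only if $\bu_{d_\alpha}^\top \bm_j = 0$, which costs $O(\nnz(\bm_j))$. Correctness follows the proof of Lemma~\ref{lem:span_computation}. If $\bm_j \in W_\alpha$, the inner product is identically $0$. Otherwise, by the characterization above, some $\langle \bq_a, \bm_j\rangle$ with $a > d_\alpha$ is nonzero. So $\sum_{a > d_\alpha} s_a \langle \bq_a, \bm_j\rangle$ is a nonzero linear polynomial in the $s_a$, and Lemma~\ref{lem:schwartz_zippel} bounds the failure probability by $1/|F|$. A union bound over all $nr \le n^2$ possible query pairs gives high probability, even though the randomness is shared across queries.

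The main obstacle is step~1 over a general field. The Gram--Schmidt recursion divides by $\langle \bq_a, \bq_a\rangle$, so it needs every prefix subspace, and the final completion, to be $B$-nondegenerate: $W \cap W^\perp_B = \{\boldsymbol{0}\}$ for each of the $O(r)$ subspaces involved. For a single fixed $W$ this holds with high probability, via a Schwartz--Zippel argument on the Gram determinant of a basis of $W$. That determinant is a nonzero polynomial of degree at most $r$ in the diagonal entries of $B$, by Cauchy--Binet. A union bound then covers all $O(r)$ nested subspaces. I would rely on the modified fast orthogonalization of Section~\ref{appendix:general_orthogonalization} to carry this out within $O(r^\omega)$ time, in the same block-recursive way as van den Brand's algorithm.
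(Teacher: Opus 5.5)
Your proposal is correct and follows essentially the same route as the paper. Both orthogonalize $\bm_{i_1},\ldots,\bm_{i_r},\be_1,\ldots,\be_r$ with the $B$-modified van den Brand algorithm (Lemma~\ref{lem:bilinear_form_orthogonalization}) in $O(r^\omega)$ time, form random suffix combinations of the outputs by backward accumulation, and answer $(j,\alpha)$ by testing $\bv^\top B\bm_j = 0$, relying on non-degeneracy of the $O(r)$ fixed prefix subspaces (Claim~\ref{claim:imitate_matroid_intersection} plus a union bound). The differences are cosmetic: you drop zero outputs and index by $d_\alpha$, and you read off the membership criterion directly from the $B$-orthogonal basis expansion rather than going through $(W^\perp_B)^\perp_B = W$ and Lemma~\ref{lem:span_property_honshitsu}, which makes your correctness argument slightly more self-contained.
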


Here, we provide a proof only for matrices over $\mathbb{R}$.
The extension to matrices over an arbitrary field is deferred to Section~\ref{appendix:general_orthogonalization}
To make the algorithm applicable over arbitrary fields, the key modification is to replace the Euclidean inner product $\bv^\top \bw$ with the bilinear form $\langle \boldsymbol{v}, \boldsymbol{w} \rangle = \bv^\top B \bw$ (see Section~\ref{appendix:fast_span_any_field} for details).

\begin{proof}[Proof of Lemma~\ref{lem:weighted_dynamic_fast_span_computation} for matrices over $\mathbb{R}$]
We first describe the preprocessing algorithm.
First, we apply the fast Gram--Schmidt orthonormalization algorithm of van den Brand~\cite{van2021unifying} (Lemma~\ref{lem:vandenbrand_qr_decomp}) to the sequence of $2 r$ vectors $\bm_{i_1}, \ldots, \bm_{i_r}, \be_1, \ldots, \be_r$, which takes $O(r^\omega)$ time.
Here, $\boldsymbol{e}_i \in \R^r$ is the standard basis vector (i.e., the vector with $1$ in the $i$-th coordinate and $0$ elsewhere).
Then, we obtain a vector sequence $\bm'_{i_1}, \ldots, \bm'_{i_r}, \be'_1, \ldots, \be'_r$.
We note that this vector sequence includes the zero vector $\boldsymbol{0}$.
We then choose $q_1, \ldots, q_{2r}$ independently and uniformly at random from a finite subset $R \subset \R$ of size $\mathrm{poly}(n)$ (e.g., $R = \{ 1, 2, \ldots, \text{poly}(n) \}$).
Next, for each $\alpha \in [r]$, we compute a vector $\displaystyle \bv_\alpha = \sum_{j = \alpha + 1}^{r} q_j \boldsymbol{m}'_{i_j} + \sum_{j = 1}^r q_{r + j}\boldsymbol{e}'_j$.
This can be implemented in $O(r^2)$ time by computing the sums in reverse order.
We note that $\bv_\alpha$ lies in the orthogonal complement of $\mathrm{span}\{ \bm_{i_1}, \ldots, \bm_{i_\alpha} \}$.

Next, we describe how to answer the query.
It suffices to check whether $\bv_\alpha^\top \bm_j = 0$, which can be done in $O(\nnz(\boldsymbol{m}_j))$ time.
As in the proof of Lemma~\ref{lem:span_computation}, if $\boldsymbol{m}_j \in \mathrm{span} \{ \boldsymbol{m}_{i_\beta} \mid \beta \leq \alpha \}$ holds, then $\bv_\alpha^\top \bm_j = 0$ always holds, otherwise $\bv_\alpha^\top \bm_j \neq 0$ holds with probability at least $1 - 1/|R| = 1 - 1 / \text{poly}(n)$.
Therefore, checking whether $\boldsymbol{v}_\alpha^{\top}\boldsymbol{m}_j=0$ suffices to decide whether $\boldsymbol{m}_j \in \mathrm{span} \{ \boldsymbol{m}_{i_\beta} \mid \beta \leq \alpha \}$, which completes the proof. 
\end{proof}

We now present the proof of Theorem~\ref{thm:weighted_main_week_thm}.

\begin{proof}[Proof of Theorem~\ref{thm:weighted_main_week_thm}]
We show that, when the matroids are given by matrix representations, Quanrud's maximum weight matroid intersection algorithm (Algorithm \ref{alg:quanrud_weighted}) can be implemented in $\tilde{O}_{\eps}(\mathrm{nnz}(M_1) + \mathrm{nnz}(M_2) + r^{\omega})$ time.
By Lemma~\ref{lem:weighted_sparsified_instance}, Line~\ref{linew:small_computation} can be implemented in $\TO(r^\omega)$ time, since the number of elements in the sparsified set is $\tilde{O}_{\eps}(r)$.

In our algorithm, we do not explicitly compute $(y, z)$ computed in Line~\ref{linew:span_computation}.
Here, $(y, z)$ computed in Line~\ref{linew:span_computation} is compact, and thus we can write
\begin{equation} \label{eq:support_eq}
\begin{aligned} 
    \mathrm{support}(y) \subseteq& \{ \mathrm{span}_{\M_1}(\{i_1\}), \mathrm{span}_{\M_1}(\{i_1, i_2\}), \ldots, \mathrm{span}_{\M_1}(\{i_1, \ldots, i_r\}) \}, \\
    \mathrm{support}(z) \subseteq& \{ \mathrm{span}_{\M_2}(\{j_1\}), \mathrm{span}_{\M_2}(\{j_1, j_2\}), \ldots, \mathrm{span}_{\M_2}(\{j_1, \ldots, j_r\}) \},
\end{aligned}
\end{equation}
for two sequences of $r$ elements $i_1, \ldots, i_r \in [n]$ and $j_1, \ldots, j_r \in [n]$.
Here, let $\M_1$ (resp. $\M_2$) denote the matroid represented by the matrix $M_1$ (resp. $M_2$).
The elements $i_1, \ldots, i_r$ and $j_1, \ldots, j_r$ belong to the sparsified set, and are computed in Line~\ref{linew:small_computation}.
To implement Line~\ref{linew:span_computation}, we simply compute the value of $y(\mathrm{span}_{\M_1}(\{ i_1, \ldots, i_\alpha \}))$ and $z(\mathrm{span}_{\M_2}(\{ j_1, \ldots, j_\alpha \}))$ for each $\alpha \in [r]$, which takes $O(r)$ time.
Note that we do not explicitly compute $\mathrm{span}_{\M_1}(\{ i_1, \ldots, i_\alpha \})$ or $\mathrm{span}_{\M_2}(\{ j_1, \ldots, j_\alpha \})$.

It remains to bound the running time of Line~\ref{linew:update_weights}.
In this line, for each $i \in [n]$, we need to check whether $\displaystyle \sum_{S \ni i} (y(S) + z(S)) \geq c(i)$.
Here, $\boldsymbol{m}^{(1)}_1, \ldots, \boldsymbol{m}^{(1)}_n$ (resp. $\boldsymbol{m}^{(2)}_1, \ldots, \boldsymbol{m}^{(2)}_n$) denote the vectors corresponding to the columns of $M_1$ (resp. $M_2$).
As preprocessing, for all $\alpha \in [r]$, we compute 
\begin{align*}
y_\alpha = \sum_{\gamma \in \{ \alpha, \alpha + 1, \ldots, r \}} y(\mathrm{span}_{\M_1}(\{ i_1, \ldots, i_\gamma \})),  \quad
z_\alpha = \sum_{\gamma \in \{ \alpha, \alpha + 1, \ldots, r \}} z(\mathrm{span}_{\M_2}(\{ j_1, \ldots, j_\gamma \})),
\end{align*}
which takes $O(r)$ time.
Furthermore, we apply the preprocessing algorithm of Lemma~\ref{lem:weighted_dynamic_fast_span_computation} to $M_1$ and $\{i_1, \ldots, i_r \}$, which takes $O(r^\omega)$ time; we do the same for $M_2$ and $\{ j_1, \ldots, j_r \}$.

We now describe how to check whether $\displaystyle \sum_{S \ni i} (y(S) + z(S)) \geq c(i)$ holds for each $i \in [n]$ in $\tO(\nnz(\bm^{(1)}_i) + \nnz(\bm^{(2)}_i))$ time.
We can binary search for the first index $\alpha^{(1)} \in [r]$ (resp. $\alpha^{(2)} \in [r]$) such that $\boldsymbol{m}^{(1)}_i \in \mathrm{span} \{ \boldsymbol{m}^{(1)}_{i_\beta} \mid \beta \leq \alpha^{(1)} \}$ (resp. $\boldsymbol{m}^{(2)}_i \in \mathrm{span} \{ \boldsymbol{m}^{(2)}_{j_\beta} \mid \beta \leq \alpha^{(2)} \}$).
Note that, for all $\gamma \geq \alpha^{(1)}$, we have $\bm^{(1)}_i \in \mathrm{span} \{ \boldsymbol{m}^{(1)}_{i_\beta} \mid \beta \leq \gamma \}$.
By Lemma~\ref{lem:weighted_dynamic_fast_span_computation}, this binary search process takes $\tO(\nnz(\bm^{(1)}_i) + \nnz(\bm^{(2)}_i))$ time.
Here, we have $\displaystyle \sum_{S \ni i} (y(S) + z(S)) = y_{\alpha^{(1)}} + z_{\alpha^{(2)}}$, so this value can be computed in $O(1)$ time.
Thus, for each $i \in [n]$, we can check whether $\displaystyle \sum_{S \ni i} \left(y(S) + z(S)\right) \geq c(i)$ in $\tO(\nnz(\bm^{(1)}_i) + \nnz(\bm^{(2)}_i))$ time.
Therefore, Line~\ref{linew:update_weights} can be implemented in $\tO(\nnz(M_1) + \nnz(M_2) + r^\omega)$ time, which completes the proof.
\end{proof}

Using the sketching technique of Cheung--Kwok--Lau~\cite{cheung2013fast} (Lemma~\ref{lem:sketching_tech}), we can improve the running time to $\tilde{O}_{\eps}(\mathrm{nnz}(M_1) + \mathrm{nnz}(M_2) + r_*^{\omega})$, which proves Theorem~\ref{thm:weighted_main_thm}.

\begin{proof}[Proof of Theorem \ref{thm:weighted_main_thm}]
    As a first step, we compute a $2/3$-approximate solution $\bar{S_*}$ to the maximum cardinality linear matroid intersection for the input matrices $M_1$ and $M_2$, using Theorem~\ref{thm:cardinality_main_thm}, in $\tilde{O}_{\eps}(\mathrm{nnz}(M_1) + \mathrm{nnz}(M_2) + r_{*}^{\omega})$ time.
    We then apply Lemma \ref{lem:sketching_tech} with $k = \frac{3}{2}|\bar{S_*}|$.
    Note that $k \geq r_{*}$ and $k = \Theta(r_*)$.
    The lemma compresses the input matrices $M_1$ and $M_2$ into $O(r_*) \times n$ matrices $M_1'$ and $M_2'$, in $O(\mathrm{nnz}(M_1))$ and $O(\mathrm{nnz}(M_2))$ time, respectively.
    We have $\mathrm{nnz}(M_1') = O(\mathrm{nnz}(M_1))$ and $\mathrm{nnz}(M_2') = O(\mathrm{nnz}(M_2))$.
    Furthermore, for any set $S \subseteq [n]$ of size at most $r_*$, if the columns indexed by $S$ are linearly independent in both $M_1$ and $M_2$, then the columns indexed by $S$ are also linearly independent in both $M_1'$ and $M_2'$ with probability at least $1 - O(1/n^{1/3})$.
    We then apply the algorithm in Theorem~\ref{thm:weighted_main_week_thm} to the compressed matrices $M_1'$ and $M_2'$ to find a $(1 - \varepsilon)$-approximate solution to the maximum weight linear matroid intersection problem.
    This algorithm runs in $\tilde{O}_{\varepsilon}(\mathrm{nnz}(M_1') + \mathrm{nnz}(M_2') + r_*^{\omega}) = \tilde{O}_{\varepsilon}(\mathrm{nnz}(M_1) + \mathrm{nnz}(M_2) + r_*^{\omega})$ time. 
\end{proof}


\section{Extending the Algorithm to Arbitrary Fields} \label{appendix:fast_span_any_field}



In this section, we develop the tools required to prove Lemma~\ref{lem:weighted_dynamic_fast_span_computation} over an arbitrary field.

Note that we assume $|\mathbb{F}| = \Omega(\mathrm{poly}(n))$.
Recall that, by Lemma \ref{lem:matrix_elem_assumption}, this assumption can be made without loss of generality.
We also assume that $r \leq O(n)$, which can be assumed without loss of generality, since, even when $r > n$, we can reduce the dimension of $M$ from $r \times n$ to $O(n) \times n$ in $O(\mathrm{nnz}(M))$ time by applying Lemma~\ref{lem:sketching_tech} with $k = n$, while preserving the linear independence of the column set.

To extend the algorithm to arbitrary fields, we replace the Euclidean inner product $\bv^\top \bw$ with the bilinear form $\langle \boldsymbol{v}, \boldsymbol{w} \rangle = \bv^\top B \bw$, where we choose $w_1, \ldots, w_r$ independently and uniformly at random from a finite subset $F \subseteq \F$ of size $\text{poly}(n)$, and define a diagonal $r \times r$ matrix $B$ such that $B_{i, i} = w_i$.

We define the orthogonal complement with respect to the bilinear form $\langle \boldsymbol{v}, \boldsymbol{w} \rangle = \bv^\top B \bw$ as $\Wp_B = \{ \boldsymbol{v} \in \mathbb{F}^r \mid \forall \boldsymbol{w} \in W, \langle \boldsymbol{v}, \boldsymbol{w} \rangle = \bv^\top B \bw = 0  \}$.
For a fixed linear subspace $W$, we can show that $W \cap \Wp_B = \{ \boldsymbol{0} \}$ and $W \oplus \Wp_B = \mathbb{F}^r$ with high probability.
This property is important for ensuring that our algorithm works over an arbitrary field.
To show this property, we first prove the following claim.

\begin{claim} \label{claim:imitate_matroid_intersection}
    Let $\ell \leq r$, and let $B$ be a diagonal $r \times r$ random matrix such that $B_{i, i} = w_i$.
    For any $r \times \ell$ matrix $M$ of rank $\ell$, we have $\mathrm{rank}(M^\top B M) = \ell$ with high probability over the choice of $w_1, \ldots, w_r$ chosen independently and uniformly at random from a finite subset $F \subseteq \F$ of size $\text{poly}(n)$.
\end{claim}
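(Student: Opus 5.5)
We prove the claim by applying the Schwartz--Zippel Lemma (Lemma~\ref{lem:schwartz_zippel}) to $\det(M^\top B M)$, regarded as a polynomial in the indeterminates $w_1, \ldots, w_r$. The whole task is to show this polynomial is not identically zero.

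\textbf{Step 1: expand the determinant.} Write $B = \mathrm{diag}(w_1, \ldots, w_r)$. Then $M^\top B M = M^\top \cdot (BM)$ is a product of an $\ell \times r$ matrix and an $r \times \ell$ matrix. By the Cauchy--Binet formula,
\begin{equation*}
\det(M^\top B M) = \sum_{R \in \binom{[r]}{\ell}} \det(M[R,[\ell]]^\top)\,\det\big((BM)[R,[\ell]]\big).
\end{equation*}
Since $B$ is diagonal, $(BM)[R,[\ell]]$ is the submatrix $M[R,[\ell]]$ with its row $i$ scaled by $w_i$. Hence
\begin{equation*}
\det(M^\top B M) = \sum_{R \in \binom{[r]}{\ell}} \det(M[R,[\ell]])^2 \prod_{i \in R} w_i .
\end{equation*}
This is a homogeneous polynomial of degree $\ell$ in $w_1, \ldots, w_r$.

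\textbf{Step 2: show the polynomial is nonzero.} The monomials $\prod_{i \in R} w_i$ are distinct for distinct sets $R$. Therefore the polynomial is nonzero as soon as one coefficient $\det(M[R,[\ell]])^2$ is nonzero. Because $M$ has rank $\ell$, some $\ell \times \ell$ row-submatrix $M[R,[\ell]]$ is nonsingular. Its determinant is a nonzero field element, and its square is nonzero in any field, since a field has no zero divisors.

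This is exactly where the argument escapes the failure of the Euclidean form over general fields. The unweighted quantity $\det(M^\top M) = \sum_R \det(M[R,[\ell]])^2$ can vanish by cancellation, as over $\mathrm{GF}(2)$. Attaching a separate monomial to each $R$ rules out such cancellation. This step is the only delicate point; the rest is routine.

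\textbf{Step 3: apply Schwartz--Zippel.} The polynomial is nonzero of total degree $\ell \le r$. By Lemma~\ref{lem:schwartz_zippel}, the probability that $\det(M^\top B M) = 0$ is at most $\ell/|F| \le r/|F|$. We assume $r = O(n)$ without loss of generality and choose $|F| = \mathrm{poly}(n)$ large enough, for instance $|F| = \Omega(n^{c+1})$ to get failure probability $O(n^{-c})$. Then $\mathrm{rank}(M^\top B M) = \ell$ with high probability.
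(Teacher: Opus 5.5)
Your proof is correct and follows the paper's argument: you expand $\det(M^\top B M)$ via Cauchy--Binet into $\sum_{R} \det(M[R,[\ell]])^2 \prod_{i\in R} w_i$, use a nonzero $\ell\times\ell$ minor to see that this is a nonzero polynomial of degree $\ell$, and conclude with Schwartz--Zippel. Your remark that distinct sets $R$ give distinct monomials, so no cancellation can occur (unlike for $\det(M^\top M)$ over $\mathrm{GF}(2)$), makes explicit a step the paper leaves implicit, and your bound $\ell/|F|$ is the right one (the paper writes $\ell/|\F|$).
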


To prove the claim, we use an idea from the matrix formulation of linear matroid intersection (see e.g., \cite[Lemma 2.5]{gurjar2020linear}), originally noted by Tomizawa--Iri~\cite{tomizawa1974algorithm}.

\begin{proof}
Here, we define a symbolic diagonal $r \times r$ matrix $D$ such that $D_{i, i} = x_i$.
By the Cauchy--Binet formula\footnote{Let $A_1, A_2$ be $r \times n$ matrices. Then, $\det (A_1A_2^\top) = \sum_{S \in \binom{[n]}{r}} \det A_1[S]\det A_2[S]$.}, 
\begin{align*}
    \det(M^\top D M) = \sum_{S \in \binom{[r]}{\ell}} \det M^\top[S] \det(DM)^\top[S] = \sum_{S \in \binom{[r]}{\ell}} \det M^\top[S] \det M^\top [S] \prod_{i \in S} x_i .
\end{align*}
Since $\mathrm{rank}(M) = \ell$, there exists some $S \in \binom{[r]}{\ell}$ such that $\det M^\top[S] \neq 0$.
Thus, $R(x_1, \ldots, x_r) = \det(M^\top D M)$ is a nonzero polynomial of total degree $\ell$.
Because $w_1, \ldots, w_r$ are chosen independently and uniformly at random from $\F$, the Schwartz--Zippel Lemma (Lemma \ref{lem:schwartz_zippel}) implies that the probability that $\det(M^\top B M) = R(w_1, \ldots, w_r) = 0$ is at most $\ell/|\F|$. 
Here, we recall that $|\F| = \mathrm{poly}(n)$ by assumption.
Therefore, $\mathrm{rank}(M^\top B M) = \ell$ holds with high probability, which completes the proof. 
\end{proof}

Using the claim, we prove the following lemma.

\begin{lemma} \label{lem:orthogonal_complement_good_property}
    Fix any linear subspace $W \subseteq \F^r$.
    With high probability, we have $W \cap \Wp_B = \{ \boldsymbol{0} \}$, $W \oplus \Wp_B = \mathbb{F}^r$, and $(\Wp_B)^\perp_B = W$.
\end{lemma}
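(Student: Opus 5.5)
Let $\ell = \dim W$ and fix a basis of $W$, collected as the columns of an $r \times \ell$ matrix $M$ of rank $\ell$. (If $\ell = 0$, all three statements are trivial, since $\Wp_B = \F^r$ and $(\F^r)^\perp_B = \{\boldsymbol{0}\}$ whenever $B$ is nonsingular, which is itself a high-probability event.) We apply Claim~\ref{claim:imitate_matroid_intersection} to this $M$. With high probability $M^\top B M$ is a nonsingular $\ell \times \ell$ matrix. We also want $B$ itself to be nonsingular, i.e.\ all $w_i \neq 0$. Each $w_i$ is zero with probability at most $1/|F|$, so a union bound over $r = O(n)$ coordinates and $|F| = \mathrm{poly}(n)$ gives this with high probability too. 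We condition on both events and argue deterministically from here.

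\textbf{Trivial intersection.} Any $\bx \in W$ has the form $\bx = M\ba$ for some $\ba \in \F^\ell$. If moreover $\bx \in \Wp_B$, then $\bx$ is orthogonal to every column of $M$, so $M^\top B M \ba = M^\top B \bx = \boldsymbol{0}$. Nonsingularity of $M^\top B M$ forces $\ba = \boldsymbol{0}$, hence $\bx = \boldsymbol{0}$.

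\textbf{Direct sum.} Observe that $\Wp_B = \{\bv \mid M^\top B \bv = \boldsymbol{0}\}$ is the null space of the $\ell \times r$ matrix $M^\top B$. This matrix has rank $\ell$ because $B$ is invertible and $M$ has rank $\ell$. By rank--nullity, $\dim \Wp_B = r - \ell$. Then $\dim(W + \Wp_B) = \ell + (r-\ell) - 0 = r$, so $W \oplus \Wp_B = \F^r$.

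\textbf{Double complement.} We have $W \subseteq (\Wp_B)^\perp_B$ because $B$ is diagonal and hence symmetric: for $\bw \in W$ and $\bv \in \Wp_B$ we get $\bw^\top B \bv = \bv^\top B \bw = 0$. For the reverse inclusion we compare dimensions. Applying the same null-space argument to $\Wp_B$ gives $\dim (\Wp_B)^\perp_B = r - \dim \Wp_B = \ell$; this step only needs $B$ invertible. Equal dimensions together with the inclusion yield equality.

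The only genuinely probabilistic step is the rank statement, which is supplied by Claim~\ref{claim:imitate_matroid_intersection}. The remaining point needing care is the invertibility of $B$, which we handled above by a union bound.
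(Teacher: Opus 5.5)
Your proof is correct. For the first two properties it matches the paper: write $\bx = M\ba$, use nonsingularity of $M^\top B M$ from Claim~\ref{claim:imitate_matroid_intersection}, and identify $\Wp_B$ with the null space of $M^\top B$.

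The difference is in the dimension counts. The paper gets $\mathrm{rank}(M^\top B) = \ell$ from $\mathrm{rank}(M^\top B M) = \ell$, which needs nothing about $B$. You instead add the event that $B$ is invertible.

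For the double complement, the paper reapplies ``the same argument'' to a basis of $\Wp_B$, asserting that this basis is independent of $w_1, \ldots, w_r$. As written this step is not justified, because $\Wp_B$ depends on $B$. For instance, with $W = \mathrm{span}\{(1,1)^\top\}$ one gets $\Wp_B = \mathrm{span}\{(w_2, -w_1)^\top\}$, so Claim~\ref{claim:imitate_matroid_intersection} cannot be invoked for it. Your route avoids this. Once $B$ is invertible, the form is nondegenerate on all of $\F^r$, so $\dim X^\perp_B = r - \dim X$ for every subspace $X$, and $(\Wp_B)^\perp_B = W$ follows deterministically. The only $W$-specific random event left is the nonsingularity of $M^\top B M$.

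Your extra event is genuinely needed, not a convenience. Take $r = 2$, $W = \mathrm{span}\{\be_2\}$ and $B = \mathrm{diag}(0, w_2)$ with $w_2 \neq 0$. Then $M^\top B M = w_2 \neq 0$ and $W \oplus \Wp_B = \F^2$, yet $(\Wp_B)^\perp_B = \F^2 \neq W$. So your union bound over $w_i \neq 0$ (using $r = O(n)$ and $|F| = \mathrm{poly}(n)$) is what makes the third property rigorous. Needing $B$ invertible in the direct-sum step costs nothing, since that event is required anyway.
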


\begin{proof}
We first prove that $W \cap \Wp_B = \{ \boldsymbol{0} \}$ holds with high probability.
To this end, we consider the properties of a column vector $\boldsymbol{x} \in W \cap \Wp_B$.
For any column vector $\by \in W$, we have $ \langle \by, \bx \rangle = \by^\top B \bx = 0$.
Let $M$ be an $r \times \ell$ matrix whose columns are the basis vectors $\bm_1, \ldots, \bm_\ell$ of $W$.
Note that $W = \mathrm{span}\{ \bm_1, \ldots, \bm_{\ell} \}$ and $\mathrm{rank}(M) = \ell$.
Since $\bx \in W$, we can write $\bx = M \bu$ for some $\bu \in F^\ell$.
Then, for any $\bv \in \F^\ell$, we have
\begin{equation*}
(M \bv)^\top B M \bu = \bv^\top M^\top B M \bu = 0.
\end{equation*}
Now, let $A = M^\top B M$.
Then, for any $v \in \F^\ell$, we have $(A^\top \bv)^\top \bu = 0$.
By the claim, the $\ell \times \ell$ matrix $A^\top$ has full rank with high probability.
Thus, for any $\bw \in \F^\ell$, $\bw^\top \bu = 0$ holds. 
In particular, by choosing $\bw$ to be the standard basis vector $\boldsymbol{e}_i$ (i.e., the vector with $1$ in the $i$-th coordinate and $0$ elsewhere), we obtain that the $i$-th coordinate of $\bu$ is $0$.
Since this holds for all $i \in [\ell]$, we conclude that $\bu = \boldsymbol{0}$.
Thus, $\bx = \boldsymbol{0}$, and hence $W \cap \Wp_B = \{ \boldsymbol{0} \}$ holds with high probability.

Next, we prove that $W \oplus \Wp_B = \mathbb{F}^r$ holds with high probability.
Now, $\bx \in \Wp_B$ if and only if $\bm_i^\top B \bx = 0$ for any $i \in [\ell]$, where we recall that $\bm_1, \ldots, \bm_\ell$ form a basis of $W$.
Thus, we have $\Wp_B = \{ \bx \in \F^r \mid M^\top B \bx = \boldsymbol{0} \}$.
Here, by the claim, $\mathrm{rank}(M^\top B M) = \ell$ holds with high probability.
Thus, we have $\mathrm{rank}(M^\top B) = \ell$, where we recall that $M^\top B$ is an $\ell \times r$ matrix.
Then, by the Rank-nullity theorem, $\mathrm{dim}(\Wp_B) = r - \ell$.
Since $\dim(W) = \ell$ and $W \cap \Wp_B = \{ \boldsymbol{0} \}$, $W \oplus \Wp_B = \mathbb{F}^r$ holds with high probability.

It remains to prove that $(\Wp_B)^\perp_B = W$ holds with high probability.
Now, fix any $\bx \in W$.
Here, for any $\by \in \Wp_B$, we have $\langle \by, \bx \rangle = 0$.
Thus, $\bx \in (\Wp_B)^\perp_B$, and therefore, $W \subseteq (\Wp_B)^\perp_B$.
Here, since $W \oplus \Wp_B = \mathbb{F}^r$, we can write $\Wp_B = \mathrm{span}\{ \bb_1, \ldots, \bb_{r - \ell} \}$, where the vectors $\bb_1, \ldots, \bb_{r - \ell}$ are independent of random variables $w_1, \ldots, w_r$.
Then, by applying the same argument as in the proof of $W \oplus \Wp_B = \mathbb{F}^r$, we have $\Wp_B \oplus (\Wp_B)^\perp_B = \F^r$ with high probability.
Thus, $\dim((\Wp_B)^\perp_B) = \ell$, and therefore $(\Wp_B)^\perp_B = W$ holds with high probability, which completes the proof. 
\end{proof}

In the fast span computation algorithm presented in Section~\ref{thm:span_computation}, we sample a vector $\bv$ from the orthogonal complement of $\mathrm{span}\{ \bm_j \mid j \in S \}$ with respect to the Euclidean inner product $\bv^\top \bw$, and then check whether $\bv^\top \bm_i = 0$ for all $i \in [n]$.
We showed that, to determine whether $\boldsymbol{m}_i \in \mathrm{span} \{ \boldsymbol{m}_j \mid j \in S \}$, it suffices to check whether $\bv^\top \bm_i = 0$.
The following lemma shows that that the same property holds even when the vector is sampled from the orthogonal complement with respect to the bilinear form $\langle \boldsymbol{v}, \boldsymbol{w} \rangle = \bv^\top B \bw$.

\begin{lemma} \label{lem:span_property_honshitsu}
    Let $M$ be an $r \times n$ matrix over a field $\mathbb{F}$ whose columns represent the elements of a matroid $\mathcal{M}$, and let $\boldsymbol{m}_1, \ldots, \boldsymbol{m}_n$ denote the column vectors of $M$.
    Let $S \subseteq [n]$.
    Denote by $W = \mathrm{span}\{ \bm_j \mid j \in S \}$, and let $\{ \boldsymbol{b}_1, \ldots, \boldsymbol{b}_{k} \}$ form a basis of $W^\perp_B$.
    Define $\displaystyle \boldsymbol{v} = \sum_{j = 1}^{k} r_j \boldsymbol{b}_j$, where $r_1, \ldots, r_k$ are chosen independently and uniformly at random from a finite subset $F \subseteq \F$.  
    Then, the value of $\langle \boldsymbol{v}, \boldsymbol{m}_i \rangle = \bv^\top B \bm_i$ satisfies the following two properties.
    \begin{itemize}
        \item If $i \in \mathrm{span}_{\M}(S)$, then we have $\langle \boldsymbol{v}, \boldsymbol{m}_i \rangle = \bv^\top B \bm_i = 0$
        \item If $i \notin \mathrm{span}_{\M}(S)$, then we have $\langle \boldsymbol{v}, \boldsymbol{m}_i \rangle = \bv^\top B \bm_i \neq 0$ with probability at least $1 - 1/|F|$.
    \end{itemize}
\end{lemma}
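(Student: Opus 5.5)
The plan is to follow the proof of Lemma~\ref{lem:span_computation} almost verbatim, replacing the Euclidean inner product by $\langle\cdot,\cdot\rangle$ and the identity $(W^\perp)^\perp = W$ by its $B$-analogue $(\Wp_B)^\perp_B = W$ from Lemma~\ref{lem:orthogonal_complement_good_property}. Note first that $B$ is diagonal, hence symmetric, so $\langle \bv, \bw\rangle = \langle \bw, \bv\rangle$. Orthogonality with respect to $B$ is therefore symmetric, and membership in $\Wp_B$ does not depend on which side the vector sits.

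\textbf{First property.} Suppose $i \in \mathrm{span}_{\M}(S)$. Then $\bm_i \in W$. Each $\bb_j$ lies in $\Wp_B$, so $\bb_j^\top B \bm_i = 0$ for every $j$. By linearity, $\bv^\top B \bm_i = \sum_j r_j \bb_j^\top B \bm_i = 0$ for every choice of $r_1,\dots,r_k$. This case is deterministic.

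\textbf{Second property.} Suppose $i \notin \mathrm{span}_{\M}(S)$, so $\bm_i \notin W$. Consider the linear polynomial $P(x_1,\dots,x_k) = \sum_{j} x_j\, \bb_j^\top B \bm_i$. The key claim, analogous to Claim~\ref{claim:hozyo}, is that some coefficient $\bb_j^\top B\bm_i$ is nonzero.

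To prove it, suppose instead that all coefficients vanish. By symmetry of $B$ and linearity, $\langle \by, \bm_i\rangle = 0$ for all $\by \in \Wp_B$. Hence $\bm_i \in (\Wp_B)^\perp_B$. Lemma~\ref{lem:orthogonal_complement_good_property} gives $(\Wp_B)^\perp_B = W$, so $\bm_i \in W$, a contradiction.

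Thus $P$ is a nonzero polynomial of total degree $1$. The Schwartz--Zippel Lemma (Lemma~\ref{lem:schwartz_zippel}) then gives $\Pr[\bv^\top B\bm_i = 0] \le 1/|F|$.

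\textbf{Main obstacle.} The only delicate point is that $(\Wp_B)^\perp_B = W$ holds only with high probability over the choice of $B$. The guarantee in the second bullet should therefore be read as conditional on that event, whose failure probability is polynomially small by Lemma~\ref{lem:orthogonal_complement_good_property}. The subspace $W$ is fixed before $B$ is drawn, and $B$ is independent of $r_1,\dots,r_k$. So I would state that, conditioned on this good event for $B$, the probability over the $r_j$ is at most $1/|F|$. A union bound then absorbs the failure probability of the event into the overall high-probability guarantee.
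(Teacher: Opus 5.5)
Your proposal is correct and follows the paper's proof essentially step for step: the deterministic first case, then the linear polynomial $P$, the claim that some coefficient $\boldsymbol{b}_j^\top B \boldsymbol{m}_i$ is nonzero via $(W^\perp_B)^\perp_B = W$ from Lemma~\ref{lem:orthogonal_complement_good_property}, and finally Schwartz--Zippel. Your explicit use of the symmetry of $B$ (needed because the paper defines $\perp_B$ with the vector on one fixed side), and your conditioning on the good event for $B$ before taking probability over the $r_j$, make precise two steps that the paper leaves implicit.
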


\begin{proof}


Similar to the proof of Lemma~\ref{lem:span_computation}, we can show that the value of $\langle \boldsymbol{v}, \boldsymbol{m}_i \rangle = \bv^\top B \bm_i$ satisfies the following two properties.

\begin{itemize}
    \item If $i \in \mathrm{span}_{\M}(S)$, then we have $\langle \boldsymbol{v}, \boldsymbol{m}_i \rangle = \bv^\top B \bm_i = 0$ for every possible choice of $r_1, \ldots, r_k$, since $\bm_i$ lies in $W = \mathrm{span}\{ \bm_i \mid i \in S \}$ and $\boldsymbol{v}$ lies in $\Wp_B = \{ \boldsymbol{x} \in \mathbb{F}^r \mid \forall \boldsymbol{y} \in W, \langle \boldsymbol{x}, \boldsymbol{y} \rangle = \bx^\top B \by = 0  \}$.
    
    \item If $i \notin \mathrm{span}_{\M}(S)$, then we have $\langle \boldsymbol{v}, \boldsymbol{m}_i \rangle = \bv^\top B \bm_i \neq 0$ with probability at least $1 - 1/|F|$. This can be shown as follows. Consider the multivariate polynomial 
    \begin{equation*}
        P(x_1, \ldots, x_k) = \sum_{j = 1}^{k} x_j \boldsymbol{b}_j^{\top} B \boldsymbol{m}_i.
    \end{equation*}
    Here, to show that $P(x_1, \ldots, x_k)$ is a nonzero polynomial, we prove the following claim.
    Recall that we are assuming $i \notin \mathrm{span}_{\M}(S)$.

    \begin{claim} \label{claim:helping_claim}
        If $\bm_i \notin W = \mathrm{span}\{ \bm_\alpha \mid \alpha \in S \}$, then, with high probability, there exists some $j \in [k]$ such that $\langle \bb_j, \bm_i \rangle = \bb_j^\top B \bm_i \neq 0$.
    \end{claim}
    \begin{proof}[Proof of Claim~\ref{claim:helping_claim}]
        Here, we have $\boldsymbol{m}_i \notin W$.
        Suppose, for the sake of contradiction, that $\langle \bb_j, \bm_i \rangle = \bb_j^\top B \bm_i = 0$ for all $j \in [k]$, where we recall that $\{ \boldsymbol{b}_1, \ldots, \boldsymbol{b}_k \}$ form a basis of $W^{\perp}_B$.
        This would imply that $\boldsymbol{m}_i \in (W^{\perp}_B)^{\perp}_B = W$, contradicting the assumption that $\boldsymbol{m}_i \notin W$.
        Here, we use the property $(W^{\perp}_B)^{\perp}_B = W$, which was shown in Lemma~\ref{lem:orthogonal_complement_good_property}.
        Therefore, there exists some $j \in [k]$ such that $\langle \bb_j, \bm_i \rangle = \bb_j^\top B \bm_i \neq 0$, which completes the proof.
    \end{proof}

    By Claim~\ref{claim:helping_claim}, $P(x_1, \ldots, x_k)$ is a nonzero polynomial of total degree $1$.
    Since $r_1, \ldots, r_k$ are chosen independently and uniformly at random from $F$, the Schwartz--Zippel Lemma (Lemma \ref{lem:schwartz_zippel}) implies that the probability that
    \begin{equation*}
        P(r_1, \ldots, r_k) = \sum_{j = 1}^{k} r_j \boldsymbol{b}_j^{\top} B \boldsymbol{m}_i = \boldsymbol{v}^{\top} B \boldsymbol{m}_i = \langle \bv, \bm_i \rangle = 0
    \end{equation*}
    is at most $1 / |F|$.
    Therefore, $\langle \boldsymbol{v}, \boldsymbol{m}_i \rangle = \bv^\top B \bm_i \neq 0$ with probability at least $1 - 1/|F|$, which completes the proof. 
\end{itemize}

\end{proof}

\section{Fast Orthogonalization Algorithm over an Arbitrary Field} \label{appendix:general_orthogonalization}

In this section, we provide a fast orthogonalization algorithm with respect to the bilinear form $\langle \bv, \bw \rangle = \bv^\top B \bw$, and then present a proof of Lemma~\ref{lem:weighted_dynamic_fast_span_computation}.
This orthogonalization algorithm is a modified version of the fast Gram--Schmidt orthonormalization algorithm of van den Brand~\cite{van2021unifying}, which we describe in Section~\ref{subsec:fast_orthogonal_basis}.


We first provide the following lemma on the orthogonal projection with respect to the bilinear form $\langle \boldsymbol{v}, \boldsymbol{w} \rangle = \bv^\top B \bw$, which is analogous to the fact that the matrix $A(A^\top A)^{-1}A^\top$ is the orthogonal projection onto the image of $A$.

\begin{lemma} \label{lem:orthogonal_projection_bilinear_form}
    Let $A$ be an $r \times \ell$ matrix of rank $\ell$.
    Let $W = \mathrm{span}\{ \ba_1, \ldots, \ba_\ell \}$, where $\ba_1, \ldots, \ba_\ell$ denote the vectors corresponding to the columns of $A$.
    For any $\bv \in \F^r$, we define $\bp = A(A^\top B A)^{-1}A^\top B \bv$ and $\bq = \bv - \bp$.
    Then, with high probability, we have $\bp \in W$ and $\bq \in \Wp_B$.
\end{lemma}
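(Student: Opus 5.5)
The argument is a direct verification, and the only probabilistic ingredient is the invertibility of $A^\top B A$. By Claim~\ref{claim:imitate_matroid_intersection}, applied to the $r \times \ell$ matrix $A$ of rank $\ell$, we have $\mathrm{rank}(A^\top B A) = \ell$ with high probability. I will condition on this event for the rest of the proof. On it, the $\ell \times \ell$ matrix $A^\top B A$ is invertible, so $\bp$ and $\bq$ are well defined.

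For the first assertion, set $\bu = (A^\top B A)^{-1} A^\top B \bv \in \F^\ell$. Then $\bp = A\bu$ is a linear combination of the columns $\ba_1, \ldots, \ba_\ell$, so $\bp \in W$ deterministically once the inverse exists.

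For the second assertion, recall that $\Wp_B = \{ \bx \in \F^r \mid \forall \bw \in W,\ \bw^\top B \bx = 0\}$. Since every $\bw \in W$ has the form $A\bx'$ for some $\bx' \in \F^\ell$, it suffices to show $A^\top B \bq = \boldsymbol{0}$, which is equivalent to $\ba_i^\top B \bq = 0$ for all $i$. I compute
\begin{equation*}
A^\top B \bq = A^\top B \bv - (A^\top B A)(A^\top B A)^{-1} A^\top B \bv = \boldsymbol{0}.
\end{equation*}
Hence $\bw^\top B \bq = \bx'^\top A^\top B \bq = 0$ for every $\bw = A\bx' \in W$, so $\bq \in \Wp_B$.

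The bilinear form is not symmetric in general, so I need to check the convention. The definition of $\Wp_B$ requires $\bx^\top B \bw = 0$ with the candidate vector $\bx$ on the left, whereas the computation above establishes $\bw^\top B \bq = 0$. Because $B$ is diagonal, $B = B^\top$, so $\bq^\top B \bw = (\bw^\top B \bq)^\top = \bw^\top B \bq$ and the two orders agree.

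I do not expect any real obstacle. The only point requiring care is noting that the high-probability qualifier comes entirely from the invertibility of $A^\top B A$, which Claim~\ref{claim:imitate_matroid_intersection} supplies.
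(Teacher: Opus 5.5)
Your proof is correct, but it runs in the opposite direction from the paper's. The paper first invokes Lemma~\ref{lem:orthogonal_complement_good_property} to obtain \emph{some} decomposition $\bv = \bp + \bq$ with $\bp \in W$ and $\bq \in \Wp_B$. It then writes $\bp = A\bx$, uses $\langle \ba_i, \bq \rangle = 0$ to get the normal equations $A^\top B \bv = A^\top B A \bx$, and inverts to conclude that $\bp$ must equal $A(A^\top B A)^{-1}A^\top B \bv$. You instead check directly that the formula lies in $W$ and that $A^\top B \bq = \boldsymbol{0}$.

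\textbf{What each route buys.} Your route needs only the invertibility event from Claim~\ref{claim:imitate_matroid_intersection} and does not depend on the direct-sum lemma; in fact it reproves $W + \Wp_B = \F^r$ constructively. The paper's route shows that the $W$-component of \emph{any} such decomposition is forced to be the formula, which gives uniqueness. That uniqueness is what justifies calling $\bp$ \emph{the} orthogonal projection when it is used in Claim~\ref{claim:modifieddynamic}. You could recover it by combining your argument with $W \cap \Wp_B = \{\boldsymbol{0}\}$.

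\textbf{Symmetry of $B$.} Your remark that $B = B^\top$ reconciles the two argument orders of $\langle \cdot, \cdot \rangle$ addresses something the paper uses silently, namely that $\bq \in \Wp_B$ gives $\langle \ba_i, \bq \rangle = 0$.

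\textbf{Presentation.} Your first ``recall'' of $\Wp_B$ puts the arguments in the wrong order, which you only correct in the last paragraph. It is cleaner to state the definition as the paper does from the outset and then invoke symmetry once.
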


\begin{proof}
By Claim~\ref{claim:imitate_matroid_intersection}, we have $\mathrm{rank}(A^\top B A) = \ell$ with high probability; hence the $\ell \times \ell$ matrix $A^\top B A$ is invertible.
By Lemma~\ref{lem:orthogonal_complement_good_property}, $W \oplus \Wp_B = \F^r$ with high probability.
Thus, we can write $\bv = \bp + \bq$ for some $\bp \in W$ and $\bq \in \Wp_B$.
It suffices to show that $\bp = A(A^\top B A)^{-1}A^\top B \bv$.
For each $i \in [\ell]$, $\langle \ba_i, \bq \rangle = \langle \ba_i, \bv - \bp \rangle = 0$.
Since $\bp \in W$, we can write $\bp = A \bx$ for some $\bx \in \F^\ell$.
Thus, for each $i \in [\ell]$, we have $\ba_i^\top B \bv = \ba_i^\top B A x$.
Hence, $A^\top B \bv = A^\top B A \bx$, and thus $\bx = (A^\top B A)^{-1}A^\top B \bv$.
Therefore, $\bp = A \bx = A(A^\top B A)^{-1}A^\top B \bv$, which completes the proof. 
\end{proof}

Here, we refer to $\bp = A(A^\top B A)^{-1}A^\top B \bv$, as defined in Lemma~\ref{lem:orthogonal_projection_bilinear_form}, as the orthogonal projection of $v$ onto $W$ with respect to the bilinear form $\langle \bv, \bw \rangle = \bv^\top B \bw$.

We say that a linear subspace $W$ is {\em good}\footnote{In linear algebra terminology, $W$ is said to be non-degenerate with respect to the bilinear form $\langle \boldsymbol{v}, \boldsymbol{w} \rangle = \bv^\top B \bw$.} if it satisfies the following conditions: $W \cap \Wp_B = \{ \boldsymbol{0} \}$, $W \oplus \Wp_B = \mathbb{F}^r$, $(\Wp_B)^\perp_B = W$, and the orthogonal projection onto $W$ with respect to the bilinear form $\langle \bv, \bw \rangle = \bv^\top B \bw$ exists for every $\bv \in \F^r$.
By Lemmas~\ref{lem:orthogonal_complement_good_property} and \ref{lem:orthogonal_projection_bilinear_form}, any linear subspace $W$ is good with high probability.

We present the following lemma on a fast orthogonalization algorithm with respect to the bilinear form $\langle \bv, \bw \rangle = \bv^\top B \bw$, which is a modified version of Lemma~\ref{lem:vandenbrand_qr_decomp}.

\begin{lemma}[Fast orthogonalization algorithm with respect to the bilinear form $\langle \bv, \bw \rangle = \bv^\top B \bw$] \label{lem:bilinear_form_orthogonalization}
    Let $M$ be an $r \times n$ matrix over an arbitrary field with $r \leq n$, where $\boldsymbol{m}_1, \ldots, \boldsymbol{m}_n$ are column vectors of $M$.
    Here, we assume that the linear subspace $\mathrm{span}\{ \boldsymbol{m}_1, \ldots, \boldsymbol{m}_{i - 1} \}$ is good for all $i \in [n]$.
    Then, there is an algorithm that repeatedly computes $\boldsymbol{m}'_i = \boldsymbol{m}_i - P_{i - 1} \boldsymbol{m}_i$, where $P_{i - 1} \boldsymbol{m}_i$ denotes the orthogonal projection of $\boldsymbol{m}_i$ onto $\mathrm{span}\{ \boldsymbol{m}_1, \ldots, \boldsymbol{m}_{i - 1} \}$ with respect to the bilinear form $\langle \bv, \bw \rangle = \bv^\top B \bw$.
    The total running time of this process is $O(n^\omega)$.
\end{lemma}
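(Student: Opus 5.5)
We adapt van den Brand's recursive divide-and-conquer scheme for Gram--Schmidt (Lemma~\ref{lem:vandenbrand_qr_decomp}), replacing the Euclidean inner product by $\langle \bv, \bw \rangle = \bv^\top B \bw$ and dropping the normalization step, which may be impossible over a general field. The algorithm maintains an explicit set of pairwise $B$-orthogonal vectors. Writing $\bm'_1,\ldots,\bm'_{i-1}$ for the already processed vectors, the $B$-orthogonal projection onto $W_{i-1}=\mathrm{span}\{\bm_1,\ldots,\bm_{i-1}\}$ can then be expressed through the nonzero $\bm'_j$ as
\[
P_{i-1}\bv=\sum_{j<i,\ \bm'_j\neq \boldsymbol{0}} \frac{\langle \bm'_j,\bv\rangle}{\langle \bm'_j,\bm'_j\rangle}\,\bm'_j .
\]
This formula needs two facts. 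First, $\langle \bm'_j,\bm'_j\rangle\neq 0$ for every nonzero $\bm'_j$. Since $\bm'_j\in W_j\cap (W_{j-1})^\perp_B$ and $W_j=W_{j-1}\oplus\mathrm{span}\{\bm'_j\}$, the space $W_j$ is nondegenerate ($W_j$ is good), so $\bm'_j$ cannot be $B$-orthogonal to all of $W_j$. Second, the nonzero $\bm'_j$ form a basis of $W_{i-1}$. With a $B$-orthogonal basis, $A^\top B A$ is diagonal, and the formula above matches Lemma~\ref{lem:orthogonal_projection_bilinear_form}. Thus in matrix form $P_{i-1}=Q D^{-1} Q^\top B$, where $Q$ collects the nonzero $\bm'_j$ and $D=\mathrm{diag}(\langle \bm'_j,\bm'_j\rangle)$.

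\textbf{Recursion.} We define $\textsc{Orth}(X)$ on a block $X$ of consecutive columns whose "prefix" projection has already been subtracted. We split $X=[X_1\mid X_2]$ into halves and call $\textsc{Orth}(X_1)$ to obtain $X_1'$ together with $D_1$. We then update
\[
X_2 \gets X_2 - X_1' D_1^{-1} (X_1'^\top B X_2),
\]
which needs a constant number of rectangular products. We then recurse on $X_2$. The correctness invariant is that when column $\bm_i$ reaches a leaf it equals $\bm_i - P_{i-1}\bm_i$. This holds because the projections onto the $B$-orthogonal pieces $\mathrm{span}(X_1')$ and onto the earlier prefix add up: projecting onto an orthogonal direct sum is the sum of the projections. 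Here we use that the processed columns of $X_1'$ are $B$-orthogonal to everything earlier, and that by goodness $B$-orthogonality is a symmetric relation, since $B$ is symmetric diagonal. At a leaf, we set $\bm'_i$ to the current vector and record $\langle \bm'_i,\bm'_i\rangle$ if it is nonzero. Zero vectors are simply kept with their $D$-entry omitted; equivalently, the corresponding entries of $D^{-1}$ are set to $0$.

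\textbf{Running time.} Multiplying by the diagonal $B$ and $D_1^{-1}$ costs only $O(rn)$ per level. The dominant cost is the products of $r\times(n/2^t)$ and $(n/2^t)\times(n/2^t)$ matrices. This gives the recurrence $T(n)=2T(n/2)+O(r\,n^{\omega-1})$ at the top, and more precisely a sum over levels bounded by $O(n^\omega)$ since $r\le n$, exactly as in van den Brand's analysis.

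\textbf{Main obstacle.} The key difficulty is justifying division by $\langle \bm'_j,\bm'_j\rangle$ and the additivity of projections over a general field. Both rest on the goodness hypothesis applied to each prefix space $W_i$; over $\R$ these facts are automatic, but here they must be derived. I would state and prove a short claim that a good $W_i$, with $W_{i-1}$ also good, forces $\langle \bm'_i,\bm'_i\rangle\neq 0$ whenever $\bm'_i\neq\boldsymbol{0}$, and then carry out the remaining steps exactly as in van den Brand's proof.
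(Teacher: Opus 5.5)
Your argument is correct, but it takes a different route from the paper.

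\textbf{The paper's route.} The paper never constructs a $B$-orthogonal basis. It writes the projection in closed form as $P_{i-1} = M D_S ( D_S M^\top B M D_S + (I - D_S) )^{-1} D_S M^\top B$, with $S$ a maximal independent subset of the first $i-1$ columns (Claim~\ref{claim:modifieddynamic}). Its only algebraic input is therefore invertibility of $M[S]^\top B M[S]$ (Lemma~\ref{lem:orthogonal_projection_bilinear_form}). It then maintains $(I - P_{i-1})M$ with van den Brand's dynamic matrix-formula data structure as a black box, at $O(n^{\omega-1})$ per update of $S$ and per column query.

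\textbf{Your route.} You run an explicit recursive block Gram--Schmidt that uses only rectangular matrix multiplication. This is self-contained and additionally yields a $B$-orthogonal basis together with its diagonal Gram matrix. The price is the anisotropy step $\langle \bm'_j, \bm'_j\rangle \neq 0$. You derive it correctly from $W_j \cap (W_j)^\perp_B = \{\boldsymbol{0}\}$, and only $j \le n-1$ is needed, which matches the hypothesis since the last $D$-entry is never used. Your invariant for the update $X_2 \gets X_2 - X_1' D_1^{-1}(X_1'^\top B X_2)$ is also sound. For a block starting at column $a$ and split after column $c$, it follows from $\langle \bm'_j, P_{a-1}\bv\rangle = 0$ for $j \ge a$ and from uniqueness of the projection onto the good space $W_c$.

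\textbf{Two small corrections.}
\begin{enumerate}
\item Your recursion is not what van den Brand's Section~4.5 does; he uses the dynamic matrix-formula framework, as the paper does. Nothing in your proof depends on that attribution.
\item Your recurrence is misstated. For $r \le n$, the top-level products cost $O(n^2 r^{\omega-2})$ by square blocking, which is at least $r n^{\omega-1}$. Instead, bound a node of width $m$ by $O(\lceil r/m\rceil m^{\omega})$. This gives level-$t$ cost $O(n^{\omega}2^{-t(\omega-1)} + r n^{\omega-1} 2^{-t(\omega-2)})$, whose sum over levels is $O(n^\omega)$ for $\omega > 2$ and $r \le n$. So your conclusion stands.
\end{enumerate}
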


The proof of Lemma~\ref{lem:bilinear_form_orthogonalization} is almost the same as \cite[Section 4.5]{van2021unifying}.
The only difference lies in the use of the Euclidean inner product $\boldsymbol{v}^\top \boldsymbol{w}$ versus the bilinear form $\langle \boldsymbol{v}, \boldsymbol{w} \rangle = \boldsymbol{v}^\top B \boldsymbol{w}$.

\begin{proof}
Throughout the proof, when we simply say orthogonal projection, we mean orthogonal projection with respect to the bilinear form $\langle \bv, \bw \rangle = \bv^\top B \bw$.
As in \cite[Section 4.5]{van2021unifying}, we use the data structures for dynamic matrix formula by van den Brand (\cite[Corollary 4.2]{van2021unifying}).

We first present the following claim, where the only difference from \cite[Lemma 4.8]{van2021unifying} is that our matrix formula incorporates the matrix $B$.

\begin{claim}[A modified version of {\cite[Lemma 4.8]{van2021unifying}}] \label{claim:modifieddynamic}
    Let $S \subseteq [i - 1]$ be a set of indices of a maximal independent subset of the first $i - 1$ columns of $M$, i.e., $\mathrm{span}\{ \bm_j \mid j \in S \} = \mathrm{span} \{ \bm_1 \ldots, \bm_{i - 1} \}$ and $\mathrm{rank}(\bm_1, \ldots, \bm_{i - 1}) = |S|$.
    We define $D_S$ as the $n \times n$ diagonal matrix whose $j$-th diagonal entry is $1$ if $j \in S$, and $0$ otherwise.
    Then, we have
    \begin{equation*}
        P_{i - 1} = M D_S ( D_S M^\top B M D_S + (I - D_S) )^{-1} D_S M^\top B.
    \end{equation*}
\end{claim}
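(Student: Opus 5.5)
The plan is to reduce the claimed formula to Lemma~\ref{lem:orthogonal_projection_bilinear_form} applied to the matrix $A = M[S]$. By assumption, $\mathrm{span}\{\bm_1,\ldots,\bm_{i-1}\}$ is good, and $M[S]$ has full column rank $|S|$. Since goodness makes the orthogonal projection well defined, we have $P_{i-1} = A(A^\top B A)^{-1}A^\top B$ with $A^\top B A$ invertible. It therefore suffices to show that the right-hand side of the claim equals this expression.

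First, reorder indices so that $S$ comes first; this is harmless, since conjugating by a permutation matrix commutes with everything in the formula. Then $MD_S = [A \mid 0]$ and $D_S M^\top = \begin{bmatrix} A^\top \\ 0\end{bmatrix}$. It follows that $D_S M^\top B M D_S$ is the block matrix with $A^\top B A$ in the $S\times S$ block and zeros elsewhere. Adding $I - D_S$ fills the complementary diagonal block with the identity, giving
\begin{equation*}
D_S M^\top B M D_S + (I - D_S) = \begin{bmatrix} A^\top B A & 0 \\ 0 & I\end{bmatrix}.
\end{equation*}
This matrix is invertible exactly because $A^\top B A$ is invertible (Claim~\ref{claim:imitate_matroid_intersection}, or directly from goodness: if $A^\top B A\bu = 0$, then $A\bu \in W\cap \Wp_B = \{\boldsymbol{0}\}$, so $\bu=\boldsymbol{0}$ by full column rank). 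Its inverse is block diagonal, with blocks $(A^\top BA)^{-1}$ and $I$.

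Multiplying on the left by $MD_S = [A\mid 0]$ and on the right by $D_S M^\top B = \begin{bmatrix} A^\top B\\ 0\end{bmatrix}$ kills the identity block. What remains is $A(A^\top BA)^{-1}A^\top B$, which is $P_{i-1}$ by Lemma~\ref{lem:orthogonal_projection_bilinear_form}. That lemma needs $W$ to be the column span of $A$, and this holds because $\mathrm{span}\{\bm_j\mid j\in S\}=\mathrm{span}\{\bm_1,\ldots,\bm_{i-1}\}$.

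The only mildly delicate point is justifying invertibility deterministically from the goodness hypothesis rather than ``with high probability''. I would use the short argument from $W\cap\Wp_B=\{\boldsymbol{0}\}$ above. This also means we may restate Lemma~\ref{lem:orthogonal_projection_bilinear_form}'s conclusion as holding whenever $W$ is good, since its proof used only $W\oplus\Wp_B=\F^r$ and the invertibility of $A^\top BA$.
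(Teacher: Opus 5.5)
Your proposal is correct and is essentially the paper's proof. You reorder so that $S$ comes first, note that $D_S M^\top B M D_S + (I - D_S)$ is block diagonal with blocks $M[S]^\top B M[S]$ and $I$, invert it blockwise, and identify the surviving term $M[S](M[S]^\top B M[S])^{-1}M[S]^\top B$ as $P_{i-1}$ via Lemma~\ref{lem:orthogonal_projection_bilinear_form}. Your deterministic invertibility argument from $W \cap \Wp_B = \{\boldsymbol{0}\}$ is a small but valid tightening of the paper, which only invokes Claim~\ref{claim:imitate_matroid_intersection} with high probability.
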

\begin{proof}[Proof of Claim~\ref{claim:modifieddynamic}]
By Lemma~\ref{lem:orthogonal_projection_bilinear_form}, $M I[\ast, S] ( I[S, \ast] M^\top B M I[\ast, S] )^{-1} I[S, \ast] M^\top B$ is the orthogonal projection onto the image of $M I[\ast, S] = M[\ast, S]$, i.e., $\mathrm{span}\{ \bm_1, \ldots, \bm_{i - 1} \}$. 
Let $A = D_S M^\top B M D_S + (I - D_S)$ and $\bar{S} = [n] \setminus S$.
Here, we have
\begin{equation*}
    A^{-1}[S, S] = (I[S, \ast] M^\top B M I[\ast, S])^{-1}, A^{-1}[\bar{S}, \bar{S}] = I[\bar{S}, \bar{S}].
\end{equation*}
This can be verified by assuming, without loss of generality, that $S = [k]$ for $k = |S|$, in which case we have 
\begin{equation*}
D_S M^\top B M D_S + (I - D_S) = 
\begin{bmatrix}
I[S, \ast] M^\top B M I[\ast, S] & \\
 & I[\bar{S}, \bar{S}]
\end{bmatrix}.
\end{equation*}
Thus, to invert the matrix, we just have to invert each block.
In summary,
\begin{equation*}
M D_S ( D_S M^\top B M D_S + (I - D_S) )^{-1} D_S M^\top B = M I[\ast, S] ( I[S, \ast] M^\top B M I[\ast, S] )^{-1} I[S, \ast] M^\top B
\end{equation*}
is the orthogonal projection onto $\mathrm{span}\{ \bm_1, \ldots, \bm_{i - 1} \}$, which completes the proof. 
\end{proof}
As in \cite[Section 4.5]{van2021unifying}, let $S$ be initially an empty set and maintain
\begin{equation*}
(I - M D_S ( D_S M^\top B M D_S + (I - D_S) )^{-1} D_S M^\top B) M.
\end{equation*}
In $i$-th iteration, we query $i$-th column of this matrix formula and store the result as $\bm'_i$.
If $\bm'_i \neq \boldsymbol{0}$, then we add $i$ to set $S$; otherwise we do not.
As in \cite[Section 4.5]{van2021unifying}, each update and query can be performed in $O(n^{\omega - 1})$ time using the data structure from \cite[Corollary 4.2]{van2021unifying}.
Thus, the total running time is $O(n^\omega)$, which completes the proof. 
\end{proof}

We now present a proof of Lemma~\ref{lem:weighted_dynamic_fast_span_computation}, which is almost the same as the proof for matrices over $\mathbb{R}$ given in Section~\ref{subsec:our_weight_linear_mi}.
Note that we make use of the fact that, for a fixed linear subspace $W$, it holds with high probability that $W \cap W_B^{\perp} = { \boldsymbol{0} }$ and $W \oplus W_B^{\perp} = \mathbb{F}^r$.
This fact was established in Section~\ref{appendix:fast_span_any_field}.

\begin{proof} [Proof of Lemma~\ref{lem:weighted_dynamic_fast_span_computation}]
We first describe the preprocessing algorithm.
First, we apply the fast orthogonalization algorithm in Lemma~\ref{lem:bilinear_form_orthogonalization} to the sequence of $2 r$ vectors $\bm_{i_1}, \ldots, \bm_{i_r}, \be_1, \ldots, \be_r$, which takes $O(r^\omega)$ time.
Here, $\boldsymbol{e}_i \in \F^r$ is the standard basis vector (i.e., the vector with $1$ in the $i$-th coordinate and $0$ elsewhere).
Then, we obtain a vector sequence $\bm'_{i_1}, \ldots, \bm'_{i_r}, \be'_1, \ldots, \be'_r$.
We note that this vector sequence includes the zero vector $\boldsymbol{0}$.
We then choose $q_1, \ldots, q_{2r}$ independently and uniformly at random from a finite subset $F \subseteq \F$ of size $\text{poly}(n)$.
Next, for all $\alpha \in [r]$, we compute a vector $\displaystyle \bv_\alpha = \sum_{j = \alpha + 1}^{r} q_j \boldsymbol{m}'_{i_j} + \sum_{j = 1}^r q_{r + j}\boldsymbol{e}'_j$.
This computation can be implemented in $O(r^2)$ time by computing the sums in reverse order.
Let $W_\alpha = \mathrm{span}\{ \bm_{i_1}, \ldots, \bm_{i_\alpha} \}$.
By Lemma~\ref{lem:orthogonal_complement_good_property}, with high probability, $W_\alpha$ is good for all $\alpha \in [r]$.
Thus, $\bv_\alpha$ lies in $(W_\alpha)^\perp_B$.

Next, we describe how to answer the query.
It suffices to check whether $\langle \bv_\alpha, \bm_j \rangle = \bv_\alpha^\top B \bm_j = 0$, which takes $O(\nnz(\boldsymbol{m}_j))$ time.
By Lemma~\ref{lem:span_property_honshitsu}, if $\boldsymbol{m}_j \in \mathrm{span} \{ \boldsymbol{m}_{i_\beta} \mid \beta \leq \alpha \}$ holds, then $\langle \bv_\alpha, \bm_j \rangle = \bv_\alpha^\top B \bm_j = 0$ always holds, otherwise $\langle \bv_\alpha, \bm_j \rangle = \bv_\alpha^\top B \bm_j \neq 0$ holds with probability at least $1 - 1/|F| = 1 - 1 / \text{poly}(n)$.
Therefore, checking whether $\langle \bv_\alpha, \bm_j \rangle = \bv_\alpha^\top B \bm_j = 0$ suffices to decide whether $\boldsymbol{m}_j \in \mathrm{span} \{ \boldsymbol{m}_{i_\beta} \mid \beta \leq \alpha \}$, which completes the proof. 
\end{proof}


\bibliography{bib2doi}

\appendix

\section{Compact {$(1 \pm \varepsilon)$}-Primal-Dual Maximum-Weight Linear Matroid Intersection: Proof of Lemma~{\ref{lem:weighted_sparsified_instance}}} \label{appendix:weighted_sparsified_instance}

In this section, we prove Lemma~\ref{lem:weighted_sparsified_instance}.

To compute a $(1 - \eps)$-approximate primal solution and a compact $(1 + \eps)$-approximate dual solution, Quanrud~\cite{quanrud:LIPIcs.ICALP.2024.118} employs the $(1- \eps)$-approximation algorithm for weighted matroid intersection by Chekuri--Quanrud \cite{chekuri2016fast}, which computes a $(1 - \eps)$-approximate primal solution $I$ along with weight functions $c_1, c_2 : V \to \mathbb{R}_{\geq 0}$ that approximate Frank’s weight-splitting.
See properties $(a)$--$(f)$ on page 118:16 of Quanrud's \cite{quanrud:LIPIcs.ICALP.2024.118} paper for the properties that hold for the weight-splitting $c_1, c_2$ produced by the algorithm of Chekuri--Quanrud.
Using these weight functions $c_1$ and $c_2$, Quanrud constructs a compact $(1 + \eps)$-approximate dual solution.

Here, we use the $(1 - \eps)$-approximation algorithm for the weighted matroid intersection by Huang--Kakimura--Kamiyama~\cite{huang2016exact}.
Their algorithm also computes a $(1 - \eps)$-approximate primal solution $I$ along with weight functions $c_1$ and $c_2$ that approximate Frank’s weight-splitting.\footnote{
The key difference between Huang--Kakimura--Kamiyama~\cite{huang2016exact} and Chekuri--Quanrud~\cite{chekuri2016fast} is that the former uses an exact algorithm when solving unweighted matroid intersection as subproblem, whereas the latter uses a $(1-\eps)$-approximation algorithm.
In addition, by employing a more sophisticated weight adjustment, Chekuri--Quanrud achieve a faster algorithm in the independence oracle model than Huang--Kakimura--Kamiyama.
}
Note that the weight functions $c_1$ and $c_2$ computed by the algorithm of Huang--Kakimura--Kamiyama satisfy stronger properties than those computed by the algorithm of Chekuri--Quanrud.
Huang--Kakimura--Kamiyama described how to implement their algorithm when the matroids are given by matrix representations.

\begin{theorem}[$(1-\eps)$-approximate weighted linear matroid intersection algorithm by {\cite{huang2016exact}}] \label{thm:hkk_weighted}
    Given two matroids $\M_1 = (V, \I_1)$ and $\M_2 = (V, \I_2)$, both represented by $r \times n$ matrices, and a weight function $c : V \to \mathbb{R}_{\geq 0}$, there is an algorithm that, with high probability, runs in $\tilde{O}_{\eps}(n r^{\omega - 1})$ time and computes a $(1 - \eps)$-approximate maximum-weight common independent set $I$ along with weight functions $c_1, c_2 : V \to \mathbb{R}_{\geq 0}$ satisfying the following conditions:
    \begin{enumerate}[(a)]
        \item $c(e) \leq c_1(e) + c_2(e)$ for each element $e \in V$. 
        \item $c_1(I) + c_2(I) \leq (1 + \eps) c(I)$.
        \item $I$ is $c_1$-maximum independent set in $\M_1$.
        \item $I$ is $c_2$-maximum independent set in $\M_2$
    \end{enumerate}
\end{theorem}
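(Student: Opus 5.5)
The plan is to obtain the statement by re-running the weight-splitting scaling algorithm of Huang--Kakimura--Kamiyama~\cite{huang2016exact} and checking two things: that its invariants give (a)--(d), and that every step can be carried out in $\tilde{O}_{\eps}(nr^{\omega-1})$ time when $\M_1,\M_2$ are given by $r\times n$ matrices.

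\textbf{Step 1: weight normalization.} First reduce to integer weights of bounded range. Let $C$ be the largest weight. Discard every element with $c(e)<\eps C/r$. This loses at most an $\eps$ fraction of the optimum, since some single element of weight $C$ is a feasible solution. Then round each remaining weight down to a multiple of $\delta=\eps C/r$. The weights become integers in $[1,O(r/\eps)]$, and the optimum changes by at most a $(1\pm O(\eps))$ factor. The discarded elements can later be given $c_1(e)=c(e)$, $c_2(e)=0$. This choice keeps (a) trivially, and keeps (c) because these elements carry the smallest $c_1$-values; I would check that they never enter an exchange that breaks $c_1$-maximality. The rounding error is absorbed into (b).

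\textbf{Step 2: the scaling phases.} Following HKK, maintain a common independent set $I$ and a splitting $c=c_1+c_2$, adjusted by multiples of a step $\delta'$. Throughout, keep the invariant that $I$ is $c_i$-maximum in $\M_i$. By the standard exchange characterization, this means that for every $e\notin I$ with $I+e-f\in\I_i$ we have $c_i(e)\le c_i(f)$. Each phase has two parts.
\begin{itemize}
\item Take the elements that are tight in both matroids, grouped by weight level, and augment $I$ with a maximum-cardinality common independent set inside these level sets. This is an unweighted linear matroid intersection problem on contracted/restricted matrices. Contractions cost $O(nr^{\omega-1})$ by Harvey's Fact~4.1, and the intersection itself is solved exactly by Theorem~\ref{thm:Harvey} in $\tilde{O}(nr^{\omega-1})$.
\item Shift $c_1$ up and $c_2$ down by $\delta'$ on the elements that the dual solution of that subproblem certifies as blocked. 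This raises the duals while preserving (c) and (d).
\end{itemize}
HKK show that $O(\eps^{-1}\log(r/\eps))$ phases suffice. After them, every $e\in I$ has $c_1(e)+c_2(e)\le c(e)+\delta'$ and every $e$ has $c_1(e)+c_2(e)\ge c(e)$, which gives (a). Since every weight is at least $\delta$ after normalization, choosing $\delta'\le\eps\delta$ gives (b).

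\textbf{Step 3: cost accounting.} There are $\tilde{O}_{\eps}(1)$ phases. Each phase needs $O(1/\eps)$ (or polylogarithmically many) calls to Harvey's algorithm and the contraction computations on $r\times n$ matrices. This yields $\tilde{O}_{\eps}(nr^{\omega-1})$ overall. Correctness holds with high probability by a union bound over the randomized calls.

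\textbf{Main obstacle.} The hardest step is Step~2. One must show that the weight-shift updates keep $I$ exactly $c_i$-maximum, which is what makes (c) and (d) hold exactly rather than approximately, while the slack on $I$ stays within the additive $\delta'$. I would first try to reproduce HKK's potential argument: each phase halves the maximum slack, so $\log(r/\eps)/\eps$ phases suffice. If exact maximality proves delicate, the fallback is a final clean-up phase. In that phase, recompute $c_1$ on $\M_1$ greedily from $I$ using the rank profile, and set $c_2=c-c_1$ on the tight elements.
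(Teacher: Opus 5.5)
Your Step~2, like the paper, rests on the Huang--Kakimura--Kamiyama weight-splitting algorithm (their Algorithm~2), implemented with Harvey's algorithm on contracted matrices as in their Section~5.3. The paper imports the theorem directly, reading (a),(b) off the proof of their Theorem~3 and (c),(d) off the proof of their Lemma~10. The genuine problem is your Step~1, which you add on top of HKK and which breaks the properties you need.

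\textbf{Discarding light elements cannot be repaired by any choice of $c_1(e),c_2(e)$.} Suppose a discarded $e$ with $c(e)>0$ satisfies $I+e\in\I_1\cap\I_2$. Nothing in the reduced computation prevents this. For example, take two free elements $a,e$ in rank-$2$ matroids with $c(a)=1$ and $c(e)=\eps/4<\eps C/r$; the reduced instance returns $I=\{a\}$. Then (c) gives $c_1(I+e)\le c_1(I)$, so $c_1(e)=0$, and (d) gives $c_2(e)=0$, so (a) fails. In particular, your choice $c_1(e)=c(e)$ violates (c) whenever $e\notin\mathrm{span}_{\M_1}(I)$, however small $c_1(e)$ is relative to $c_1$ on $I$. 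The slip is in your exchange characterization. Being $c_i$-maximum over \emph{all} independent sets is what Lemma~\ref{lem:extract_compact_sol} needs, via $e\in Y_t$ for all $t\le c_1(e)$. That notion also requires $c_i(e)=0$ for every $e\notin\mathrm{span}_{\M_i}(I)$, not just $c_i(e)\le c_i(f)$ for exchanges $I+e-f$.

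\textbf{Rounding down puts the error in the wrong place.} On $c'=\delta\lfloor c/\delta\rfloor\le c$, the inner algorithm certifies only $c'\le c_1+c_2$. So (a) for the original $c$ can fail by up to $\delta$ on elements outside $I$, and this error is not absorbed into (b). You also cannot repair it afterwards by raising $c_1(e)$ or $c_2(e)$: (c) caps $c_1(e)$ by $\min\{c_1(f): f\in C_1(I,e)\setminus\{e\}\}$ and forces $c_1(e)=0$ off $\mathrm{span}_{\M_1}(I)$, and (d) constrains $c_2$ in the same way.

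The fix is to drop the preprocessing and run HKK on $c$ directly, as the paper does. If you want integer weights, round \emph{up} instead, $c'=\delta\lceil c/\delta\rceil$, and discard nothing. Then $c\le c'\le c_1+c_2$ gives (a). Also $c'(I)\le c(I)+r\delta=c(I)+\eps C$ and $c(I)\ge(1-2\eps)C$, so (b) holds after rescaling $\eps$.

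Your fallback of setting $c_2=c-c_1$ is also unsafe, since it can make $c_2$ negative or destroy $c_2$-maximality.
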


\begin{remark} \label{remark:hkk_weighted}
    Theorem~\ref{thm:hkk_weighted} is not explicitly stated in the paper by Huang--Kakimura--Kamiyama~\cite{huang2016exact}.
    The theorem is based on a $(1 - \eps)$-approximation algorithm for the weighted matroid intersection algorithm, as given in \cite[Algorithm 2]{huang2016exact}, which employs Frank’s weight-splitting approach~\cite{frank1981weighted, frank2008quick}.
    See \cite[Section 5.3]{huang2016exact} for the implementation of their algorithm when the input matroids are given via matrix representations.
    Although the properties of the resulting weight-splitting $c_1$ and $c_2$ of the weight vector $c$ are not stated explicitly, its properties are used in the correctness proofs.
    Properties $(a)$ and $(b)$ can be found in the proof of Theorem 3, and properties $(c)$ and $(d)$ can be found in the proof of Lemma 10.
\end{remark}

The following lemma, which is a slightly modified version of~\cite[Lemma 16]{quanrud:LIPIcs.ICALP.2024.118}, provides a way to compute a compact $(1 + \eps)$-approximate dual solution from the weight functions $c_1$ and $c_2$.

\begin{lemma}[A slightly modified version of {\cite[Lemma 16]{quanrud:LIPIcs.ICALP.2024.118}}] \label{lem:extract_compact_sol}
    Let $I \in \I_1 \cap \I_2$ and let $c_1, c_2 : V \to \mathbb{R}_{\geq 0}$ be functions satisfying the conditions $(a)$--$(d)$ in Theorem~\ref{thm:hkk_weighted}.
    For $t \in \mathbb{R}_{\geq 0}$, define
    \begin{equation*}
        Y_t = \mathrm{span}_{\M_1}(\{ e \in I \mid c_1(e) \geq t \}) \,\,\,  \text{and} \,\,\, Z_t = \mathrm{span}_{\M_2}(\{ e \in I \mid c_2(e) \geq t \}).
    \end{equation*}
    Let $y, z : 2^{V} \to \mathbb{R}_{\geq 0}$ be defined by
    \begin{equation*}
        y = \int_{0}^{\infty} 1_{Y_t} dt \,\,\,  \text{and} \,\,\, z = \int_{0}^{\infty} 1_{Z_t} dt,
    \end{equation*}
    where $1_S$ denote the indicator vector for $S \subseteq V$ in $\mathbb{R}^{2^V}$.
    Then, $y$ and $z$ satisfy the following properties:
    \begin{enumerate}[(i)]
        \item $(y, z)$ are feasible for the dual LP (\ref{equ:dual_lp}).
        \item $(y, z)$ are compact.
        \item $\sum_{S \subseteq V} (\mathrm{rank}_{\M_1}(S) y(S) + \mathrm{rank}_{\M_2}(S) z(S)) \leq (1 + \varepsilon)c(I)$. 
    \end{enumerate}
\end{lemma}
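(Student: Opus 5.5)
We prove the three properties in order, using the level-set structure of $y$ and $z$ together with conditions (a)--(d) of Theorem~\ref{thm:hkk_weighted}. Write $I=\{e_1,\dots,e_k\}$ sorted so that $c_1(e_1)\ge \dots\ge c_1(e_k)$, and separately $I=\{f_1,\dots,f_k\}$ sorted by nonincreasing $c_2$.

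\textbf{Compactness (ii).} For every $t$, the set $\{e\in I\mid c_1(e)\ge t\}$ is a prefix $\{e_1,\dots,e_i\}$, so $Y_t=\mathrm{span}_{\M_1}(\{e_1,\dots,e_i\})$. Since $t\mapsto Y_t$ is a step function with at most $k$ distinct nonempty values, the integral $y$ is a finite nonnegative combination supported on these prefix spans; the same holds for $z$ with the $f_j$. Hence (ii) is immediate.

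\textbf{Objective bound (iii).} Because $I$ is independent in $\M_1$, $\mathrm{rank}_{\M_1}(Y_t)=|\{e\in I\mid c_1(e)\ge t\}|$. Therefore
\[
\sum_S \mathrm{rank}_{\M_1}(S)y(S)=\int_0^\infty |\{e\in I\mid c_1(e)\ge t\}|\,dt = c_1(I),
\]
and symmetrically the $z$ part equals $c_2(I)$. Condition (b) then gives (iii).

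\textbf{Feasibility (i).} Fix $e\in V$. We have $\sum_{S\ni e}y(S)=\int_0^\infty 1[e\in Y_t]\,dt$, so it suffices to show that $e\in Y_t$ for every $t\le c_1(e)$; this yields $\sum_{S\ni e} y(S)\ge c_1(e)$. Suppose instead that $t\le c_1(e)$ and $e\notin \mathrm{span}_{\M_1}(\{f\in I\mid c_1(f)\ge t\})$. Then $e\notin I$, and by the standard exchange argument we find a set $I'$ that beats $I$ in $c_1$-weight. There are two cases. If $I+e$ is independent, take $I'=I+e$. Otherwise, the fundamental circuit of $e$ in $I+e$ contains some $f$ with $c_1(f)<t\le c_1(e)$, because $e$ is not spanned by the elements of weight at least $t$; take $I'=I+e-f$. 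In either case $I'\in\I_1$ and $c_1(I')>c_1(I)$, which contradicts (c). The identical argument with (d) gives $\sum_{S\ni e}z(S)\ge c_2(e)$. Adding the two bounds and applying (a) yields $\sum_{S\ni e}(y(S)+z(S))\ge c(e)$.

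The main obstacle is interpreting (c) correctly: we need $I$ to be $c_1$-maximum among all independent sets in $\M_1$, not just among bases. The case $I+e$ independent with $c_1(e)>0$ is what requires this. If $c_1(e)=0$, the required bound is trivial. Under the maximum-weight-independent-set reading of (c), the argument goes through. If (c) only meant maximum among sets of size $|I|$, we would have to handle the case of strictly positive weight through the HKK properties directly.
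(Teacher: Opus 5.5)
Your proof is correct and follows the paper's argument for all three parts: the prefix-chain support for (ii); the identity $\mathrm{rank}_{\M_1}(Y_t)=|\{e\in I\mid c_1(e)\ge t\}|$, integrated to $c_1(I)$, for (iii); and $e\in Y_t$ for all $t\le c_1(e)$ via $c_1$-maximality of $I$ for (i), where the paper likewise needs (c) to mean maximality over all independent sets of $\M_1$. Your explicit circuit-exchange argument and your bound $\sum_{S\ni e}y(S)\ge c_1(e)$ fill in what the paper asserts in one line, and slightly correct it: the paper claims equality, which fails when $e$ is spanned by heavier elements of $I$, but only the inequality is needed.
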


For completeness, we present a proof of Lemma~\ref{lem:extract_compact_sol}, which is almost the same as the proof of \cite[Lemma 16]{quanrud:LIPIcs.ICALP.2024.118}.

\begin{proof}
Here, $y$ (resp. $z$) is supported on a chain of subsets formed by listing the elements of $I$ in decreasing order of $c_1$ (resp. $c_2$).
Thus, the property $(ii)$ holds.

Next, we show that $(y, z)$ is feasible for the dual LP (\ref{equ:dual_lp}).
Fix $e \in V$.
Since $I$ is $c_1$-maximum in $\M_1$, $e \in \mathrm{span}(Y_t)$ for all $t \leq c_1(e)$.
Thus, 
\begin{equation*}
\sum_{S \ni e} y(S) = \int_{0}^{c_1(e)} 1 dt = c_1(e).
\end{equation*}
Symmetrically, $\sum_{S \ni e} z(S) = c_2(e)$, therefore, we have
\begin{equation*}
\sum_{S \ni e} (y(S) + z(S)) = c_1(e) + c_2(e) \geq c(e).
\end{equation*}

It remains to show that the objective value is within a $(1 + \eps)$-factor of $c(I)$.
Here, we have
\begin{align*}
\sum_{S \subseteq V} \mathrm{rank}_{\M_1}(S) y(S) &= \int_0^{\infty} \mathrm{rank}_{\M_1}(Y_t) dt \\
 &= \int_{0}^{\infty} |Y_t \cap I| dt = c_1(I).
\end{align*}
Symmetrically, $\sum_{S \subseteq V} \mathrm{rank}_{\M_2}(S) z(S) = c_2(I)$, therefore, we have
\begin{equation*}
\sum_{S \subseteq V} (\mathrm{rank}_{\M_1}(S) y(S) + \mathrm{rank}_{\M_2}(S) z(S)) = c_1(I) + c_2(I) \leq (1 + \eps)c(I),
\end{equation*}
which completes the proof. 
\end{proof}

By combining Theorem~\ref{thm:hkk_weighted} with Lemma~\ref{lem:extract_compact_sol}, we can compute a $(1 - \eps)$-approximate maximum-weight common independent set $I$, along with a compact $(1 + \eps)$-approximate dual solution $y, z : 2^{[n]} \to \R_{\geq 0}$, with high probability in $\TO(nr^{\omega - 1})$ time, which completes the proof of Lemma~\ref{lem:weighted_sparsified_instance}.

\end{document}